\documentclass[aps,prb,twocolumn,superscriptaddress,10pt]{revtex4-2}

\usepackage{geometry}

\usepackage{amsmath}
\usepackage{amssymb}
\usepackage{amsthm}
\usepackage{amsfonts}
\usepackage{mathrsfs}
\usepackage{latexsym}

\usepackage[table]{xcolor}
\usepackage{booktabs}
\usepackage{multirow}

\usepackage{graphicx}
\usepackage{float}          
\usepackage[font=footnotesize,labelfont=bf]{caption}

\usepackage{hyperref}
\hypersetup{
    colorlinks=true,
    linkcolor=blue,
    citecolor=red,
    urlcolor=blue
}

\usepackage{enumerate}
\usepackage[normalem]{ulem}
\usepackage{comment}
\usepackage{ragged2e}

\usepackage{adjustbox}      
\usepackage{tabularx}
\usepackage{array}

\usepackage{tikz}
\usetikzlibrary{shapes.geometric, arrows.meta, positioning, calc, backgrounds}

\newtheorem{lemma}{Lemma}

\begin{document}

\title{Majorana Constellations: A Geometric Lens on Multipartite Entanglement and Geometric Phases}

\author{Chon-Fai Kam}
\email{dubussygauss@gmail.com}
\affiliation{Quantum Theory Group, Dipartimento di Fisica e Chimica Emilio Segr\`e,\\
Universit\`a degli Studi di Palermo, Via Archirafi 36, I-90123 Palermo, Italy}
\affiliation{University Paris City \& University of Reunion, DSIMB, Inserm, BIGR U1134, 75015, Paris, France}

\date{\today}

\newgeometry{left=2cm,right=2cm}

\begin{abstract}
The Majorana stellar representation maps a pure spin-$S$ state to $2S$ points on a sphere. This review develops it with entanglement as the organising principle, and two objects recur throughout: the constellation, and the permanent of the Gram matrix of its stars. The degeneracy pattern of the constellation is invariant under stochastic local operations and classical communication, so the integer partitions of $N$ label a finite set of families of symmetric $N$-qubit states. That labelling is a coarse-graining rather than a classification, since from four distinct stars onwards each family carries continuous M\"obius moduli. The permanent supplies what the pattern omits. Normalised by it, inter-star chordal distances give the concurrence and the three-tangle in closed form, and the same permanent governs the anomalous contribution to the Berry phase acquired under adiabatic cyclic evolution, so that a single quantity links static correlations to dynamical holonomy. We also fix the computational reach of the geometry. Overlaps of symmetric states are permanents of matrices of rank at most two and are polynomially computable, whereas measures defined by an optimisation over the sphere are not reached by that argument. Interest in stellar representations has resurged, but the literature remains dispersed, and no existing treatment develops the link between constellation geometry, multipartite entanglement, and geometric phases within a single framework. The same two objects organise the applications reviewed here, from extremal states in metrology and permutation-invariant codes to collective spin models and photonic constellations, together with extensions to mixed states and to continuous-variable systems through the stellar rank. Whether the anomalous phase admits a bound in terms of any entanglement monotone remains open.
\end{abstract}

\maketitle

\section{Introduction}
\label{sec:introduction}

Quantum entanglement remains one of the most profound features of quantum mechanics \cite{Nielsen2010, Wilde2017, Chitambar2019, Bengtsson2017}. It underpins applications ranging from quantum algorithms and secure key distribution to fault-tolerant computing \cite{Preskill2018, Bharti2022, Portmann2022}, while also driving foundational tests such as Bell inequality violations and Kochen-Specker contextuality tests \cite{BIGBellTest2018, Budroni2022} that probe the boundary between local realism and quantum non-locality. While algebraic tools such as concurrence \cite{Wootters1998}, three-tangle \cite{Coffman2000}, negativity \cite{Vidal2002}, and entanglement witnesses \cite{Guhne2009} provide quantitative measures, they often obscure the intuitive geometric structures of entanglement. This geometric perspective becomes particularly valuable for symmetric multi-qubit states and for bosonic ensembles, where the wave function is invariant under arbitrary particle permutations. For such systems---which occur naturally in bosonic ensembles or collective spin systems---the Majorana stellar representation offers an elegant visual framework. 

Proposed by Ettore Majorana in 1932 \cite{Majorana1932}, the Majorana representation describes arbitrary spin-$S$ states as symmetrized combinations of $2S$ spin-$1/2$ particles. It establishes an isomorphism between the spin-$S$ Hilbert space (the $(2S+1)$-dimensional irreducible representation of $\text{SU}(2)$) and the totally symmetric subspace of $N=2S$ qubits, on which the collective action $U^{\otimes N}$ of $\mathrm{SU}(2)$ carries precisely the spin-$N/2$ irreducible representation \cite{UshaDevi2012, Bengtsson2017}, bridging angular momentum theory with multi-qubit entanglement. In this framework, a pure spin-$S$ state is mapped to a unique constellation of $2S$ points (Majorana stars) on the unit sphere. Their positions are determined as the complex roots $z_i$ of the associated polynomial of degree $2S$ \cite{Majorana1932},
\begin{equation}
P(z) = \sum_{m=-S}^{S} (-1)^{S+m} \sqrt{\binom{2S}{S+m}} \, c_m \, z^{S+m},
\label{eq:intro_majorana_polynomial}
\end{equation}
where $c_m$ is the amplitude of $|S,m\rangle$, equivalently of the Dicke state with $S+m$ excitations. This convention is used throughout and is restated in Eq.~\eqref{eq:standard_Majorana_polynomial}. These roots are then mapped to spherical coordinates via stereographic projection. This geometric picture transforms abstract Hilbert-space vectors into arrangements on a sphere, exposing symmetries, degeneracies, and correlations hidden in coefficient expansions.

Following its inception in 1932, Majorana's geometric approach attracted little attention. Written in Italian, and offering only a few lines of justification, the construction appeared at a moment when atomic and nuclear spectroscopy were moving toward the algebraic tensor methods of Racah \cite{racah1942theory} and Wigner \cite{wigner1940matrices}. Its early champion was Schwinger, who came upon the paper as a teenager, worked the construction out for himself, and published the verification in 1937 \cite{schwinger1937nonadiabatic}; yet the 1952 monograph in which he developed the boson representation that now carries his name makes no mention of the Majorana formula at all \cite{schwinger1952angular, sanchezsoto2026quantum}. The geometric picture was therefore less displaced than never taken up. However, the same symmetric-spinor structure resurfaced in Penrose's spinor approach to general relativity \cite{penrose1960spinor}, developed systematically in the monograph with Rindler \cite{penrose1984spinors, penrose1986spinors}; Penrose later recognised the principal null directions of that formalism as Majorana constellations and used them, for spin-$3/2$ states, to construct a probability-free geometric argument of Kochen--Specker type against non-contextual hidden variables \cite{zimba1993bell}. In parallel, and independently of Majorana's construction, spin coherent states were developed as the $\mathrm{SU}(2)$ analogue of the canonical coherent states \cite{radcliffe1971some, arecchi1972atomic}; they enter the stellar picture as the fully degenerate constellations. A renaissance followed in the 1990s. This revival was driven both by its application to quantum chaos \cite{leboeuf1990chaos} and by the work of Hannay linking stellar trajectories to geometric phases \cite{Hannay1998a, Hannay1998b}. This work demonstrated that the Berry phase acquired during the adiabatic cyclic evolution of a spin-$S$ state equals minus half the total solid angle subtended by the closed paths of the $2S$ stars, augmented by anomalous contributions arising from the constellation's internal relative motions and geometry.

At the turn of the twenty-first century, the framework branched into many-body physics and quantum information science, providing a powerful bridge between algebraic structures and geometric intuition. It proved instrumental in mapping quantum phase transitions and thermodynamic limits in spinor Bose-Einstein condensates \cite{makela2007inert} and the Lipkin-Meshkov-Glick (LMG) model \cite{ribeiro2007thermodynamical}. In the realm of entanglement, degeneracy patterns within the constellation---the number, multiplicity, and grouping of coincident stars---were found to be invariant under Stochastic Local Operations and Classical Communication (SLOCC), and hence to label \emph{families} of pure symmetric $N$-qubit states \cite{bastin2009operational, aulbach2010maximally, markham2011entanglement}. This mapping furnishes a visual and computationally efficient taxonomy at the level of families; as we discuss in Sec.~\ref{sec:entanglement}, it does not exhaust the SLOCC orbits, since from four distinct stars onwards each family carries continuous M{\"o}bius moduli. Furthermore, beyond discrete classification, the continuous spatial features of the constellation directly encode entanglement metrics. Ganczarek \textit{et al.}~\cite{ganczarek2012barycentric} built an entanglement measure directly from the barycentre of the constellation, its distance from the origin quantifying how far the state is from a spin coherent state, and configurations with vanishing barycentre reappear later as the optimal states for high-precision metrology \cite{toth2014quantum}.

The last decade has witnessed a profound theoretical and practical expansion of this framework, transitioning the Majorana picture from a foundational curiosity into a valuable tool for modern quantum technologies. Significant strides have been made in addressing long-standing foundational limitations. Notably, initial frameworks generalizing the stellar representation from pure to mixed states \cite{serrano2020majorana} and the direct extraction of physical multipoles from constellation geometry \cite{romero2024multipoles} have broadened its applicability. Because these geometric features inherently capture multipartite entanglement, these theoretical advances have translated into applied contexts, enabling the identification of extremal states that maximize the quantum Fisher information for high-precision sensors \cite{goldberg2020extremal} and the design of permutation-invariant quantum error-correcting codes \cite{ouyang2014permutation}. Concurrently, the characterization of anticoherent spin states has provided foundational resources for rotation sensing \cite{giraud2015tensor}. Extremal constellations have been realised and used for metrology in a multiport photonic interferometer \cite{bouchard2017quantum}, and, combined with the emergence of Rydberg atom arrays \cite{bernien2017probing, ebadi2021quantum} and cavity-coupled atomic ensembles \cite{hosten2016measurement}, these developments have brought Majorana's 1932 vision into the realm of laboratory reality; a concise survey of the field's present state is given in Ref.~\cite{sanchezsoto2026quantum}.

The literature on the Majorana representation, though now extensive, remains dispersed. The framework is introduced in textbook chapters (such as that of Bengtsson and {\.Z}yczkowski \cite{Bengtsson2017}) and outlined within individual research papers \cite{UshaDevi2012}; substantial reviews exist of particular facets, notably the extremal and maximally unpolarized states \cite{goldberg2020extremal} and, more recently, a concise survey of the representation as a whole \cite{sanchezsoto2026quantum}. What these treatments do not attempt is a sustained development of the connections between constellation geometry, genuine multipartite entanglement, and geometric phases within a single framework. That is the gap this review addresses. By organising the material around entanglement, we emphasize how the constellation reveals structure in quantum correlations and how these static geometric insights intertwine with dynamical geometric phases.

The review is structured as follows, and its overall architecture is summarised in Fig.~\ref{fig:roadmap}. Section~\ref{sec:history} outlines the historical development and early geometric insights. Section~\ref{sec:foundations} reviews the mathematical foundations of Majorana stars, including polynomial representations and the fundamental isomorphism for symmetric multi-qubit states. Section~\ref{sec:tools} introduces the geometric quantities derived from constellations, such as inter-star distances, multipole moments, stellar rank, and quasi-probability distributions. Section~\ref{sec:entanglement} delves into entanglement in the Majorana picture, bridging discrete SLOCC classification with continuous geometric measures like the three-tangle and genuine multipartite invariants. Section~\ref{sec:phases} explores unitary dynamics, focusing on the geometric phase anomalies and semiclassical Hannay angles. Section~\ref{sec:applications} discusses applications spanning quantum metrology, symmetric error-correcting codes, condensed-matter analogs, and photonic implementations. Section~\ref{sec:challenges} highlights open challenges, including universal mixed-state extensions, computational scalability and machine-learning solutions, and experimental and thermodynamic horizons. We conclude in Sec.~\ref{sec:conclusions} with perspectives on the representation's role in future quantum technologies. Five appendices supply the supporting derivations: the group-theoretic isomorphism (Appendix~\ref{app:isomorphism}), its spinor-calculus formulation (Appendix~\ref{app:spinor_calculus}), the Bargmann-Fock construction (Appendix~\ref{app:bargmann}), the permanent-determinant complexity analysis (Appendix~\ref{app:complexity}), and the antipodal orthogonality algorithm (Appendix~\ref{app:orthogonality}).

\begin{figure*}[t]
\centering
\resizebox{\textwidth}{!}{
\begin{tikzpicture}[
    node distance=1.3cm and 1.0cm,
    box/.style={
        rectangle,
        rounded corners,
        minimum width=6.2cm,
        minimum height=1.3cm,
        align=center, 
        draw=black,
        fill=black!5,
        font=\small\sffamily\bfseries,
        inner sep=6pt
    },
    mergebox/.style={
        box,
        dashed,
        fill=white,
        draw=black!70
    },
    arrow/.style={
        -{Stealth[scale=1.2]},
        thick,
        draw=black!70
    },
    label/.style={
        font=\large\bfseries\itshape,
        align=center,
        minimum width=6.2cm
    }
]

\node (label_a) [label] {\uppercase{Foundations}};

\node (ch1) [box, below=of label_a, yshift=0.3cm] {
    Sec. I. Introduction \\ 
    \scriptsize{(Background, Motivation \& Roadmap)}
};

\node (ch2) [box, below=of ch1] {
    Sec. II. Historical Development \\ 
    \scriptsize{(Majorana 1932, Penrose Spinors,} \\
    \scriptsize{Hannay \& the 1990s Revival)}
};

\node (ch3) [box, below=of ch2] {
    Sec. III. Mathematical Foundations \\ 
    \scriptsize{(Majorana Polynomials, Bloch Sphere} \\
    \scriptsize{\& Coherent States)}
};

\node (ch4) [box, below=of ch3] {
    Sec. IV. Geometric Tools \\ 
    \scriptsize{(Symmetries, Metrics \&} \\
    \scriptsize{Quasi-probability Distributions)}
};

\node (label_b) [label, right=of label_a] {\uppercase{Core Applications}};

\node (ch5) [box, below=of label_b, yshift=0.3cm] {
    Sec. V. Entanglement \& Classification \\ 
    \scriptsize{(SLOCC Families, Tangles} \\
    \scriptsize{\& Geometric Measures)}
};

\node (ch6) [box, below=of ch5] {
    Sec. VI. Geometric Phases \\ 
    \scriptsize{(Berry Anomalies, Hannay Angles \& Dynamics)}
};

\node (ch7) [box, below=of ch6] {
    Sec. VII. Applications \\ 
    \scriptsize{(Metrology, Error Correction,} \\
    \scriptsize{Many-Body Models \& Photonics)}
};

\node (label_c) [label, right=of label_b] {\uppercase{Reality \& Horizons}};

\node (exp) [mergebox, below=of label_c, yshift=0.3cm] {
    Experimental Implementations \\ 
    \scriptsize{(within Sec. VII: cold atoms, trapped ions,} \\
    \scriptsize{photonics \& state tomography)}
};

\node (ch8) [box, below=of exp] {
    Sec. VIII. Challenges \& Future Directions \\ 
    \scriptsize{(Mixed States, Scalability} \\
    \scriptsize{\& Machine Learning)}
};

\node (ch9) [box, below=of ch8] {
    Sec. IX. Conclusions \\ 
    \scriptsize{(Synthesis \& Outlook)}
};

\draw [arrow] (ch1) -- (ch2);
\draw [arrow] (ch2) -- (ch3);
\draw [arrow] (ch3) -- (ch4);

\coordinate (trans1) at ($(ch4.east) + (0.5cm, 0)$);
\draw [thick, draw=black!70] (ch4.east) -- (trans1);
\draw [arrow] (trans1) |- (ch5.west);

\draw [arrow] (ch5) -- (ch6);
\draw [arrow] (ch6) -- (ch7);

\coordinate (trans2) at ($(ch7.east) + (0.5cm, 0)$);
\draw [thick, draw=black!70] (ch7.east) -- (trans2);
\draw [arrow] (trans2) |- (exp.west);

\draw [arrow] (exp) -- (ch8);
\draw [arrow] (ch8) -- (ch9);

\begin{scope}[on background layer]
    \draw [fill=blue!3!white, draw=none, rounded corners=10pt] 
        ($(label_a.north west)+(-0.3cm, 0.2cm)$) rectangle ($(ch4.south east)+(0.3cm, -0.3cm)$);
    
    \draw [fill=orange!3!white, draw=none, rounded corners=10pt] 
        ($(label_b.north west)+(-0.3cm, 0.2cm)$) rectangle ($(ch7.south east)+(0.3cm, -0.3cm)$);
    
    \draw [fill=green!3!white, draw=none, rounded corners=10pt] 
        ($(label_c.north west)+(-0.3cm, 0.2cm)$) rectangle ($(ch9.south east)+(0.3cm, -0.3cm)$);
\end{scope}

\end{tikzpicture}%
}
\caption{Overall roadmap of this review. The flow follows an S-shaped trajectory from the Foundations (Secs.~I--IV) through the Core Applications (Secs.~V--VII), passing through the experimental material distributed across Sec.~VII, and culminating in the open problems of Sec.~VIII and the conclusions of Sec.~IX.}
\label{fig:roadmap}
\end{figure*}

\section{Historical Development and Early Insights}
\label{sec:history}
\subsection{Ettore Majorana's original 1932 proposal}

In 1932, Ettore Majorana published a seminal paper titled ``Atomi orientati in campo magnetico variabile'' (``Oriented atoms in a variable magnetic field'') in \textit{Il Nuovo Cimento} \cite{Majorana1932}. He was exploring extensions of the Stern--Gerlach experiment to more general magnetic fields and higher multipole moments, aiming to generalize the description of atomic magnetic moments beyond the simple dipole case.

The paper's most enduring contribution, however, appears almost in passing: Majorana proposed that any pure quantum state of a system with angular momentum (spin) $S$ can be represented as a symmetrized combination of $2S$ spin-$1/2$ particles (or ``oriented atoms'' in his terminology). He stated that ``ogni stato sar\`a rappresentato da $2S$ punti sulla sfera unitaria'' (``every state will be represented by $2S$ points on the unit sphere''), providing only a brief justification as if the construction were self-evident \cite{Majorana1932}.

Consider a general pure state in the $(2S+1)$-dimensional Hilbert space of a spin-$S$ system,
\begin{equation}
|\psi\rangle = \sum_{m=-S}^{S} c_m |S, m\rangle,
\label{eq:general_spin_state}
\end{equation}
with $\sum |c_m|^2 = 1$. Majorana associated this state with a polynomial of degree $2S$ in a complex variable $z$,
\begin{equation}
P(z) = \sum_{k=0}^{2S} a_k z^k,
\label{eq:Majorana_polynomial}
\end{equation}
where the coefficients $a_k$ are determined from $c_m$. One common modern convention expresses this as
\begin{equation}
P(z) = \sum_{m=-S}^{S} (-1)^{S+m} \sqrt{\binom{2S}{S+m}} c_{m} z^{S+m}.
\label{eq:standard_polynomial}
\end{equation}
The phase factor $(-1)^{S+m}$ and the binomial prefactor are explicitly chosen to ensure covariance under $\text{SU}(2)$ rotations.

By the fundamental theorem of algebra, the polynomial $P(z)$ has exactly $2S$ roots (counting multiplicity) in the extended complex plane, including roots at infinity if the leading coefficients vanish. These roots $z_i$ are mapped to the unit sphere via the inverse stereographic projection,
\begin{equation}
\mathbf{n} = \left( \frac{z + z^*}{|z|^2 + 1}, \frac{z - z^*}{i(|z|^2 + 1)}, \frac{1 - |z|^2}{1 + |z|^2} \right),
\label{eq:stereographic}
\end{equation}
yielding a unique constellation of $2S$ points---the \emph{Majorana constellation} or \emph{Majorana stars}.

Majorana viewed this decomposition as expressing the spin-$S$ state as a symmetric product of $2S$ spin-$1/2$ states, each corresponding to a direction on the sphere. Mathematically, each root $z_i$ defines a constituent spin-$1/2$ state $|\phi_i\rangle \propto |1\rangle + z_i |0\rangle$, with the labelling $|1\rangle \equiv |{\uparrow}\rangle$ fixed by Eq.~\eqref{eq:dicke_to_spin} so that $z_i = 0$ is the north pole, allowing the total spin-$S$ state to be reconstructed as
\begin{equation}
|\psi\rangle \propto \mathcal{S} \left( |\phi_1\rangle \otimes |\phi_2\rangle \otimes \dots \otimes |\phi_{2S}\rangle \right),
\label{eq:symmetric_product}
\end{equation}
where $\mathcal{S}$ is the symmetrization operator. For $S = 1/2$, the representation reduces to a single point. For integer or higher half-integer $S$, the $2S$ points provide a complete geometric characterization of the state, including its symmetries and degeneracies. Although Majorana provided only a qualitative justification in his short note---and the work received little immediate attention due to its publication in Italian and the dominance of the Dirac and Pauli formalisms---this brief proposal successfully laid the geometric foundation for mapping abstract Hilbert-space vectors to spherical geometry.

\subsection{Spinor calculus and Penrose's geometric contextuality}

During the mid-twentieth century, the Majorana representation was little used, while the algebraic tensor methods of Racah \cite{racah1942theory} and Wigner \cite{wigner1940matrices} and the Schwinger boson formalism \cite{schwinger1952angular} became the working tools of the field. The relation between Majorana and Schwinger is more curious than a simple displacement. S\'anchez-Soto \textit{et al.} record that Schwinger encountered Majorana's paper in 1935, at seventeen, and verified the construction himself shortly afterwards \cite{schwinger1937nonadiabatic}; he returned to the subject in 1947, prompted by the review of Bloch and Rabi \cite{bloch1945atoms}. The unpublished 1952 monograph that introduced the Schwinger map nevertheless cites Weyl, G{\"u}ttinger, van der Waerden, Racah and Wigner while omitting the Majorana formula entirely, an omission recorded in Ref.~\cite{sanchezsoto2026quantum}. However, the geometric essence of Majorana's approach was preserved and generalized in the 1960s by Roger Penrose through the development of twistor theory and spinor calculus \cite{penrose1984spinors}. 

Penrose recognized that the algebraic decomposition of a spin-$S$ state is naturally expressed in the language of symmetric spinors. In this formalism, a pure quantum state of spin $S$ corresponds to a totally symmetric spinor of rank $n = 2S$, denoted as $\Psi_{A_1 A_2 \dots A_n}$, where each spinor index $A_i$ takes values in $\{0, 1\}$. By invoking the fundamental theorem of algebra within the spinor framework, Penrose demonstrated that any such totally symmetric rank-$n$ spinor can be uniquely factored into a symmetrized product of $n$ principal rank-$1$ spinors,
\begin{equation}
\Psi_{A_1 A_2 \dots A_n} = \mathcal{S} \left( \alpha_{A_1}^{(1)} \alpha_{A_2}^{(2)} \dots \alpha_{A_n}^{(n)} \right),
\label{eq:spinor_decomposition}
\end{equation}
where $\mathcal{S}$ denotes the total symmetrization over all indices. Each constituent principal spinor $\alpha^{(k)}_A$ has two complex components. Its projective class uniquely determines a direction, mapping it to a single point on the Riemann sphere (referred to by Penrose as the celestial sphere) via the ratio $z_k = \alpha^{(k)}_1 / \alpha^{(k)}_0$. This formulation provided a rigorous mathematical scaffolding for Majorana's original intuition, explicitly equating the $2S$ Majorana stars with the principal null directions of spacetime associated with the quantum state.

While initially developed for relativistic fields, this spinor-geometric perspective later proved instrumental in quantum foundations. In 1993, Zimba and Penrose utilized this framework to formulate a geometric proof of the Kochen-Specker theorem, establishing quantum contextuality without probabilistic inequalities \cite{zimba1993bell}. They focused specifically on spin-$3/2$ systems, where $n=3$, meaning every state is faithfully represented by a triad of Majorana stars on the sphere.

By analyzing the geometric conditions under which two spin-$3/2$ states are orthogonal---which translates to specific antipodal relationships among their respective star constellations---Zimba and Penrose identified finite sets of orthogonal rays that defy any non-contextual hidden-variable assignment. They leveraged the symmetric constellation corresponding to the vertices of a regular dodecahedron. This translation of quantum orthogonality into spherical geometry demonstrated the physical utility of the Majorana constellation prior to the advent of modern quantum information science.

Beyond quantum foundations, the underlying symmetric tensor algebra establishes a direct mathematical isomorphism between the geometric classification of multipartite entanglement and the Petrov classification of gravitational fields \cite{petrov2000classification, penrose1986spinors}. Because the Weyl curvature tensor describing massless spin-$2$ gravitational fields is equivalent to a rank-$4$ totally symmetric spinor, its spacetime geometry is dictated by the degeneracy patterns of four principal null directions. Consequently, the SLOCC families of four-qubit symmetric states mirror relativistic spacetime classifications. For instance, the symmetric Greenberger-Horne-Zeilinger (GHZ) family corresponds to algebraically general Type I spacetimes. Furthermore, the generalized four-qubit W-state (the Dicke state with a single excitation, characterized by a $\mathcal{D}_{3,1}$ constellation) maps precisely to Type III radiation, and states with two pairs of degenerate stars (such as the Dicke state $|D_4^{(2)}\rangle = |S{=}2, m{=}0\rangle$ in the $\mathcal{D}_{2,2}$ class) map strictly to the Type D geometries characteristic of rotating black holes. This exact correspondence highlights the Majorana representation as a unified geometric language connecting quantum correlations with spacetime curvature. A rigorous mathematical derivation of this equivalence is detailed in Appendix~\ref{app:spinor_calculus}.

\subsection{Revival in the 1990s: Quantum chaos and Hannay's geometric phase}

While Penrose's spinor calculus connected the Majorana representation to relativistic spacetime, a parallel revival occurred in the 1990s driven by non-relativistic quantum dynamics, specifically quantum chaos and geometric phases. This resurgence centered on understanding the spatial distribution and temporal trajectories of the Majorana stars.

The geometric characterization of spin states naturally begins with spin coherent states \cite{radcliffe1971some, arecchi1972atomic}. In the Majorana picture, a spin coherent state---representing the most classical quantum configuration, localized along a specific direction $\mathbf{n}$---corresponds to the maximum geometric degeneracy where all $2S$ stars perfectly coincide at a single point. Deviations from this full degeneracy provide a geometric measure of nonclassicality. In the 1990s, this property was leveraged to study quantum chaos. Leboeuf and Voros \cite{leboeuf1990chaos}, and later Hannay \cite{hannay1996chaotic}, demonstrated that for chaotic or random spin ensembles, the stars strongly repel each other, distributing themselves quasi-uniformly over the sphere like the roots of random polynomials. This established the constellation's spatial distribution as a direct visual signature of quantum chaoticity.

Beyond static distributions, Hannay's seminal 1998 work \cite{Hannay1998a} connected the dynamical trajectories of the stars to the Berry phase. For a spin-$S$ state undergoing adiabatic cyclic evolution, Hannay proved that the acquired Berry phase $\gamma$ can be geometrically decomposed as
\begin{equation}
\gamma = -\frac{1}{2} \sum_{k=1}^{2S} \Omega_k + \Delta\gamma,
\label{eq:Hannay_berry_majorana}
\end{equation}
where $\Omega_k$ is the solid angle subtended by the closed trajectory of the $k$-th star on the unit sphere. The first term represents the independent geometric contributions of the constituent spin-$1/2$ states. The second term, $\Delta\gamma$, is an anomalous geometric phase arising from the internal relative motions and pairwise correlations between the stars; it is resolved into pairwise contributions in Eq.~\eqref{eq:berry_decomposition} and the same symbol is used for it throughout. 

For a spin coherent state, where the constellation remains rigidly collapsed to a single point, the anomalous term vanishes, and the formula gracefully recovers the standard solid-angle result for a single macro-spin. However, for generalized states, the anomalous contribution directly encodes the entanglement-like correlations among the underlying symmetrized spin-$1/2$ particles. Hannay subsequently extended this geometric phase framework to optical polarization \cite{Hannay1998b}, illustrating how paraxial light fields acquire geometric phases governed by stellar trajectories.

By the early 2000s, this dual understanding---the static spread of stars indicating nonclassicality and the dynamic relative motions encoding phase anomalies---set the stage for modern quantum information. The states exhibiting maximal stellar repulsion, initially introduced in the context of chaos, were later formalized as ``anticoherent'' states \cite{giraud2015tensor}, serving as optimal resources for rotation sensing. The geometric formalism developed in this decade thus successfully transformed the Majorana representation from a static visual aid into a robust analytical framework for quantum kinematics and dynamics.

\subsection{Transition to quantum information: symmetric multi-qubit states and permutation symmetry}

At the turn of the twenty-first century, the Majorana representation transitioned from a descriptive tool for spin geometry into a framework for quantum information science. This shift was primarily driven by the need to analyze multipartite entanglement within symmetric multi-qubit systems. 

The analytical utility of this transition stems from the mathematical isomorphism between the spin-$S$ Hilbert space and the totally symmetric subspace of $N = 2S$ qubits. Any pure state $|\psi\rangle_{\rm sym}$ invariant under all qubit permutations can be mapped bijectively to a spin-$S$ state,
\begin{align}
&|\psi\rangle_{\rm sym} = \sum_{k=0}^{N} c_k \, |D_N^{(k)}\rangle \nonumber\\
&\leftrightarrow |\psi\rangle_{S=N/2} = \sum_{m=-S}^{S} c_{S+m} |S, m\rangle,
\label{eq:symmetric_equivalence}
\end{align}
where $|D_N^{(k)}\rangle$ denotes the normalized symmetric Dicke state with $k$ excitations. While this equivalence was implicit in Schwinger's angular-momentum formalism \cite{schwinger1952angular}, its application to quantum information provided a geometric language for complex many-body phenomena, such as collective spins and Bose-Einstein condensates \cite{makela2007inert, stamper2013spinor}.

Permutation symmetry structurally simplifies multipartite entanglement classification. For generic $N$-qubit states, classification under stochastic local operations and classical communication (SLOCC) yields an infinite continuum of orbits for $N \geq 4$. Restricting to the symmetric subspace does not remove that continuum, but it organises it: in 2009, Bastin \textit{et al.} \cite{bastin2009operational} proved that the degeneracy configuration of the Majorana stars is a SLOCC invariant, and therefore partitions the symmetric states into a finite set of \emph{families} labelled by the integer partitions of $N$. Within a family of $k$ distinct stars a $(k-3)$-dimensional space of M{\"o}bius moduli survives, so the families coincide with the orbits only for $k \leq 3$. Because local filtering operations correspond to M\"obius transformations on the Riemann sphere, the topological degeneracy of the stars (represented by integer partitions of $N$) remains a SLOCC invariant. A generic constellation without degenerate stars indicates states akin to the generalized GHZ family, whereas fully degenerate constellations correspond to separable spin coherent states.

Beyond qualitative classification, the stellar geometry facilitates the quantitative evaluation of entanglement. Markham and collaborators demonstrated that genuine multipartite entanglement measures can be formulated using geometric invariants derived directly from the spatial distribution of the stars \cite{markham2011entanglement}. Instead of computing high-degree polynomial invariants in the computational basis, entanglement monotones can be extracted from the spherical distances and inner products between the constituent Majorana vectors. 

This geometric approach provided analytical expressions for bipartite concurrence in symmetric two-qubit reductions and the tripartite three-tangle for symmetric three-qubit states, effectively writing entanglement measures as functions of inter-star angular separations. The physical intuition is clear: maximal multiparticle entanglement corresponds to constellations where stars are maximally distributed across the sphere, whereas a loss of entanglement manifests geometrically as stars collapsing toward a common point. 

By the mid-2010s, this permutation-invariant geometric toolkit was well-established. It simplified mathematical proofs in entanglement theory and facilitated direct connections to quantum metrology, where the angular spread of the Majorana constellation dictates the precision bounds for optimal sensing and spin squeezing. This geometric quantification solidified the Majorana representation as a mainstream analytical method in modern quantum information science.

\subsection{Geometric quantification of nonclassicality: Anticoherence and stellar rank}

The Majorana constellation provides a natural geometric metric for intrinsic nonclassicality by contrasting the localized nature of coherent states with the isotropic spread of highly quantum states. In this geometric perspective, classicality is defined by clustering, while quantumness is defined by spatial dispersion.

For a pure spin coherent state $|\theta, \phi\rangle$, all $2S$ Majorana stars are completely degenerate at a single point on the Riemann sphere. Because it is generated by a continuous $\text{SU}(2)$ rotation from the highest weight state, it possesses a universal explicit expansion in the standard angular momentum basis $|S, m\rangle$. This rigid clustering maximizes the collective dipole moment, yielding $|\langle \mathbf{J} \rangle| = S$, which embodies the quantum analogue of a classical macroscopic spin. 

In stark contrast, Zimba introduced the concept of \emph{anticoherent states} to identify quantum states that are as unpolarized as possible \cite{Zimba2006}. A pure state is defined as $t$-anticoherent if the expectation values of its spherical multipole operators vanish identically up to rank $t$, $\langle T_{kq}\rangle = 0$ for $1 \leq k \leq t$; equivalently, $\langle (\mathbf{n}\cdot\mathbf{J})^k\rangle$ is independent of the direction $\mathbf{n}$ for all $k \leq t$. Note that this is a statement about the multipoles, not about $\langle \mathbf{J}^k\rangle$ itself, which for $k=2$ is fixed at $S(S+1)$ for every state. Unlike coherent states, anticoherent states do not form a continuous group orbit. In the standard algebraic framework, finding a general explicit formula for a $t$-anticoherent state becomes algebraically unwieldy, as it requires solving highly nonlinear systems of coupled multipole equations for the basis coefficients $c_m$.

This algebraic bottleneck highlights the distinct geometric advantage of the Majorana representation. Instead of blindly searching for complex coefficients in a vast Hilbert space, the stellar picture reframes the problem entirely: one simply needs to find a constellation of $2S$ points on the sphere that minimizes its low-order geometric moments. The search for highly quantum states is thus reduced to identifying discrete rotational symmetries. 

The simplest manifestation of this geometric intuition is the $1$-anticoherent state, defined by the vanishing of the dipole moment, $\langle \mathbf{J} \rangle = 0$. It is tempting to read this as the statement that the ``center of mass'' of the Majorana stars rests at the origin, and the two conditions do coincide for the highly symmetric configurations considered below; they are not equivalent in general, however, and the reason---the unequal permanental weighting of the stars---is set out following Eq.~\eqref{eq:average_spin}. What remains true without qualification is that a constellation invariant under a discrete symmetry group with no invariant vector has both a vanishing barycenter and a vanishing dipole. For any spin $S \geq 1$ this is realised, for instance, by the spin cat state, whose $2S$ stars are equally spaced around the equator:
\begin{equation}
|\psi_{\text{cat}}\rangle = \frac{1}{\sqrt{2}} \left( |S, S\rangle + |S, -S\rangle \right).
\label{eq:cat_state}
\end{equation}
It is also satisfied, in a different way, by $|S,0\rangle$, whose polynomial $P(z) \propto z^{S}$ places $S$ stars at the north pole and $S$ at the south pole; both configurations have their centroid at the origin, and in both cases the discrete symmetry independently forces $\langle \mathbf{J} \rangle = 0$.

Achieving higher-order anticoherence algebraically is highly non-trivial, yet geometrically it strictly requires higher discrete symmetries. For instance, finding the explicit $t=2$ maximally anticoherent state for a spin-$2$ system corresponds precisely to placing four stars at the vertices of a regular tetrahedron. Once this geometric constellation is identified, the state is uniquely determined. In the $|S, m\rangle$ basis, this extremal state takes the explicit form:
\begin{equation}
|\psi_{\text{tetrahedron}}\rangle = \frac{1}{2} \left( |2, 2\rangle + i\sqrt{2}|2, 0\rangle + |2, -2\rangle \right).
\label{eq:tetrahedron_state}
\end{equation}
Direct evaluation confirms that for this explicit superposition, the quadrupole moments are perfectly isotropic (e.g., $\langle J_z^2 \rangle = \langle J_x^2 \rangle = \langle J_y^2 \rangle = 2$). For higher spins, maximally nonclassical states correspond to larger Platonic solids or spherical designs \cite{romero2024multipoles, giraud2015tensor}, the connection between the two notions being the subject of a dedicated literature \cite{crann2010spherical, bannai2011note}. The Majorana representation provides a robust geometric alternative to algebraic complexity, demonstrating that the absence of multipoles is fundamentally dictated by discrete geometric symmetry.

The mathematical framework of counting complex roots naturally extends beyond finite-dimensional spin systems to infinite-dimensional continuous-variable (CV) bosonic systems. In 2020, Chabaud \textit{et al.} introduced the \emph{stellar representation} to classify non-Gaussianity in single-mode CV states \cite{chabaud2020stellar}. In this infinite-dimensional regime, a pure bosonic state $|\psi\rangle$ is rigorously mapped to an entire analytic function in the Bargmann-Fock space, $F(\alpha^*) = e^{|\alpha|^2/2} \langle \alpha | \psi \rangle$. By Hadamard's factorization theorem, this analytic function is determined by its complex roots together with a non-vanishing exponential factor, which for a Gaussian state carries all of the content. Because the phase-space Husimi $Q$-function is a non-negative real-valued function defined by the modulus square of the Bargmann function, $Q(\alpha) = \frac{1}{\pi} e^{-|\alpha|^2} |F(\alpha^*)|^2$, the roots of $F(\alpha^*)$ uniquely dictate the specific phase-space points where the $Q$-function strictly vanishes. These vanishing points act as the exact CV counterparts to the Majorana stars.

Within this framework, the \emph{stellar rank} $r^\star$ of a CV state is defined as the number of these discrete vanishing points, counted with multiplicity. It should be distinguished from the spin-space quantity $r_\psi$ of Eq.~\eqref{eq:stellar_rank_definition}, which counts \emph{distinct} Majorana roots and therefore ignores multiplicity. For pure Gaussian states---such as the vacuum or squeezed states---the Bargmann functions are purely exponential without any polynomial roots, naturally yielding $r^\star = 0$. Conversely, applying nonclassical operations, such as adding $n$ single photons to a Gaussian state, introduces a polynomial of degree $n$, generating exactly $n$ complex roots and elevating the stellar rank to $r^\star = n$. By reducing the classification of non-Gaussian quantum resources to the discrete counting of these phase-space vanishing points, the stellar formalism establishes a unified geometric language that connects the nonclassicality of finite-dimensional spins with that of continuous bosonic fields.

\section{Mathematical Foundations of Majorana Stellar Representations}
\label{sec:foundations}
\subsection{Mathematical framework: The Majorana polynomial and stereographic projection}

The Majorana stellar representation translates any pure quantum state of a spin-$S$ system (or, equivalently, the totally symmetric subspace of $2S$ qubits) into a unique geometric constellation of $2S$ points on the unit sphere \cite{Majorana1932, Bengtsson2017}. This mapping is mathematically realized through a characteristic polynomial of degree $2S$, whose complex roots uniquely determine the spatial coordinates of these stars.

Consider a general pure spin-$S$ state expanded in the standard angular-momentum basis,
\begin{equation}
|\psi\rangle = \sum_{m=-S}^{S} c_m \, |S, m\rangle,
\label{eq:spin_state_expansion}
\end{equation}
where $c_m \in \mathbb{C}$ and $\sum_{m=-S}^{S} |c_m|^2 = 1$. The state is bijectively mapped to a homogeneous polynomial of degree $2S$ in a complex variable $z$. Adopting the standard convention that ensures proper $\mathrm{SU}(2)$ transformation properties, the Majorana polynomial is defined as \cite{Majorana1932}
\begin{equation}
P_\psi(z) = \sum_{m=-S}^{S} (-1)^{S+m} \sqrt{\binom{2S}{S+m}} \, c_m \, z^{S+m}.
\label{eq:standard_Majorana_polynomial}
\end{equation}

By the fundamental theorem of algebra, $P_\psi(z)$ admits exactly $2S$ roots $\{z_1, z_2, \dots, z_{2S}\}$ in the extended complex plane $\mathbb{C} \cup \{\infty\}$, where a vanishing leading coefficient corresponds to a root at infinity. These roots completely determine the quantum state up to a global phase. The geometric constellation is formalized by mapping each complex root $z_k = \tan(\theta_k/2) e^{i \phi_k}$ onto the unit sphere via inverse stereographic projection \cite{Bengtsson2017}. The spatial coordinate of the $k$-th Majorana star is given by the unit vector
\begin{align}
\mathbf{n}_k &= \left( \sin\theta_k \cos\phi_k, \sin\theta_k \sin\phi_k, \cos\theta_k \right) \nonumber\\
&= \left( \frac{2 \operatorname{Re}(z_k)}{|z_k|^2 + 1}, \frac{2 \operatorname{Im}(z_k)}{|z_k|^2 + 1}, \frac{1 - |z_k|^2}{1 + |z_k|^2} \right).
\label{eq:star_position}
\end{align}

This bijective mapping translates unitary $\mathrm{SU}(2)$ rotations of $|\psi\rangle$ into rigid global rotations of the $2S$ Majorana stars $\{\mathbf{n}_k\}$ on the sphere. Furthermore, the multiplicity of the roots visually encodes the state's classicality: a spin coherent state corresponds to a degenerate constellation where all $2S$ stars coincide at a single point, whereas highly nonclassical states exhibit distributed constellations.

Beyond visual representation, the polynomial carries an algebraic structure that allows for the direct evaluation of physical observables. By factorizing the polynomial into its monic form,
\begin{equation}
\tilde{P}_\psi(z) \propto \prod_{k=1}^{2S} (z - z_k) = \sum_{j=0}^{2S} (-1)^j e_j(z_1, \dots, z_{2S}) z^{2S-j},
\label{eq:monic_expansion}
\end{equation}
the expansion coefficients are rigorously dictated by the elementary symmetric polynomials $e_j$. This algebraic structure geometrically links the constellation to the multipole expansion of the spin density operator \cite{romero2024multipoles}. 

It is tempting to suppose that the geometric center of mass of the Majorana stars is the average spin vector of the state, that is, $\langle \mathbf{S} \rangle = \frac{1}{2} \sum_k \mathbf{n}_k$. This identity is exact for a spin coherent state, where all $2S$ stars coincide, but it fails for every other constellation, and it fails in direction as well as in magnitude. The reason is the same one encountered repeatedly below: the symmetrized product of non-orthogonal coherent states is overcomplete, and its normalization---the permanent of the stellar Gram matrix $G_{kl} = \langle \mathbf{n}_k | \mathbf{n}_l \rangle$---weights the stars unequally. The correct statement follows from the single-qubit reduced state $\rho_1$ of the equivalent symmetric $N = 2S$ qubit description,
\begin{equation}
\langle \mathbf{S} \rangle = \frac{N}{2} \operatorname{Tr} \! \left( \rho_1 \boldsymbol{\sigma} \right),
\label{eq:average_spin}
\end{equation}
with $\boldsymbol{\sigma}$ the vector of Pauli matrices and $\rho_1$ given explicitly in terms of permanental minors of $G$ in Eq.~\eqref{eq:reduced_state_permanental} below. The barycenter $\frac{1}{2S}\sum_k \mathbf{n}_k$ remains a legitimate and useful rotationally covariant geometric descriptor of the constellation; it is simply not the dipole moment of the state, except in the coherent limit.

Higher-order multipole moments and spin variances are similarly governed by higher power-sum symmetric polynomials, which can be computed recursively from the elementary symmetric sums via Newton's identities. Consequently, the algebraic properties of the roots elevate the Majorana representation from a descriptive geometric aid into a rigorous calculational framework. Complex entanglement invariants, geometric phases, and multipole moments can thus be extracted through the spatial separations and symmetric algebraic products of the Majorana roots.

\subsection{Symmetric multi-qubit equivalence and symmetry-induced entanglement classification}

The mathematical machinery of the spin-$S$ Majorana representation gains operational power when mapped to quantum information systems. There exists a canonical unitary isomorphism between a single spin-$S$ particle and the totally symmetric subspace of $N = 2S$ identical qubits. The dimension of this symmetric subspace, $\dim \mathcal{H}_{\rm sym}^{(N)} = N + 1$, perfectly matches the $2S + 1$ dimension of the spin-$S$ Hilbert space. This equivalence is explicitly realized by identifying the symmetric $N$-qubit Dicke states---which are normalized equal superpositions of all permutations containing exactly $S+m$ excitations---with the standard angular-momentum basis:
\begin{equation}
\binom{2S}{S+m}^{-1/2} \sum_{\sigma \in S_N / \mathrm{stab}} \sigma \bigl( |1\rangle^{\otimes (S+m)} \otimes |0\rangle^{\otimes (S-m)} \bigr) \longleftrightarrow |S, m\rangle,
\label{eq:dicke_to_spin}
\end{equation}
where $m = -S, -S+1, \dots, S$ and the sum runs over the $\binom{2S}{S+m}$ distinct arrangements of the string, i.e.\ over cosets of the stabilizer rather than over all of $S_N$; summing over the full symmetric group would overcount each arrangement $(S+m)!(S-m)!$ times (see Appendix~\ref{app:isomorphism}). Consequently, any pure symmetric $N$-qubit state shares the exact same Majorana polynomial and geometric constellation as its spin-$S$ counterpart. 

Under this isomorphism, collective local unitary operations on the qubits, of the form $U^{\otimes N}$ with $U \in \mathrm{SU}(2)$, correspond strictly to a rigid global rotation of the entire Majorana constellation. In the complex plane, such a rotation acts uniformly on all roots through the M\"obius transformation, $z \mapsto (az + b)/(-b^*z + a^*)$, where $a$ and $b$ are the Cayley-Klein parameters defining $U$. Beyond continuous symmetries, discrete quantum symmetries also imprint structural constraints on the constellation. For instance, time-reversal invariance in integer-spin systems requires the state to be symmetric under the anti-unitary operator $\Theta = \exp(-i \pi S_y) K$. In the complex representation, this mandates that the roots must map to themselves under the transformation $z \mapsto -1/z^*$, forcing the Majorana stars to form perfect antipodal pairs on the unit sphere.

From an operational standpoint, this isomorphism enables the classification of multipartite entanglement while bypassing severe computational bottlenecks. For generic multi-qubit systems, the algebraic characterization of entanglement is fundamentally intractable due to two distinct algorithmic walls. On one hand, decision problems---most directly, deciding whether a given bipartite density matrix is separable---are NP-hard, both in Gurvits's original formulation and in its strengthening to inverse-polynomial distance from the separable set \cite{gurvits2003classical, gharibian2010strong}. On the other hand, the exact quantitative evaluation of entanglement, which requires computing invariant polynomials or exactly contracting multi-body tensor networks, represents a counting problem that falls into the even more demanding \#P-hard complexity class \cite{schuch2007computational}. The Majorana representation circumvents these dual barriers for permutation-symmetric states by replacing the algebraic complexity with a direct geometric classification.

In quantum information theory, two $N$-qubit states share the same entanglement structure if they can be transformed into each other via Stochastic Local Operations and Classical Communication (SLOCC). For symmetric states, such local filtering operations correspond to collective operations driven by the invertible local group $(\mathrm{SL}(2, \mathbb{C}))^{\otimes N}$. The action of an arbitrary non-unitary element on the Majorana polynomial extends the rigid spherical rotations into general M\"obius transformations of the complex plane. A fundamental algebraic property of these transformations is that, while they distort the relative distances between roots, they strictly preserve the algebraic multiplicities of the polynomial roots.

Consequently, the SLOCC equivalence classes are uniquely dictated by the algebraic degeneracy configuration of the Majorana roots \cite{bastin2009operational}. This transforms the classification of multipartite entanglement into a direct geometric assessment of coincident stars. When all $N$ stars coincide at a single point, the state possesses maximal degeneracy and reduces to a fully separable multi-qubit coherent state. Conversely, states with no degeneracies, featuring $N$ distinct stars, represent generic and genuinely multipartite entangled configurations. Between these two extremes lie the restricted SLOCC families, such as the generalized $W$-states. A root configuration exhibiting partial degeneracies (with multiplicities $m \ge 2$) indicates a reduction in the effective number of continuous parameters characterizing the state, mathematically defining a distinct orbit of symmetric entanglement. By merging the algebraic properties of M\"obius transformations with discrete root multiplicities, the Majorana representation visualizes the invariant structure of multipartite entanglement without relying on computationally intractable polynomial invariants, such as the hyperdeterminant \cite{miyake2003classification}.

This geometric approach also fundamentally alters the landscape of computational complexity for symmetric states. While identifying the Majorana stars for an $N$-qubit symmetric state is equivalent to root-finding---a task reducible to an eigenvalue problem solvable in $\mathcal{O}(N^3)$ time---the exact quantitative evaluation of many-body overlaps or invariants from these stellar coordinates is governed by the matrix permanent. Although computing the permanent of a general matrix is \#P-hard \cite{valiant1979complexity}, the Gram matrices derived from the Majorana stellar representation inherently possess a low rank of at most 2, since they are constructed from the inner products of 2-dimensional single-qubit state vectors. By leveraging Barvinok's algorithm \cite{barvinok1996} for fixed-rank matrices, these permanents can be efficiently evaluated in polynomial time (specifically $\mathcal{O}(N^3)$), effectively ``dequantizing'' the complexity. The Majorana representation thus not only visually classifies entanglement but also provides a fully tractable computational framework for exact algebraic evaluations (see Appendix~\ref{app:complexity} for the algebraic derivation).

\subsection{Generalizations: Breaking Purity, Symmetry, and Local Dimensionality}
\label{subsec:generalizations}

The elegance of the standard Majorana stellar representation lies in its strict mathematical bijection: every pure, permutation-symmetric state of $N$ qubits corresponds uniquely to $N$ discrete points on the Riemann sphere, defined by the roots of a single univariate polynomial. However, this exact algebraic correspondence relies heavily on the idealizations of absolute purity, global permutation symmetry, and two-level local systems. When extending the framework to encompass broader quantum phenomena---such as the decoherence of mixed states, the unstructured correlations of non-symmetric ensembles, or the expanded local Hilbert spaces of higher-dimensional qudits---the standard univariate polynomial description inevitably breaks down. Generalizing the stellar picture to these regimes therefore requires moving beyond simple root-finding on $S^2$. Instead, the mathematical formalism must be elevated to accommodate quasi-probability distributions, effective subspace projections, and continuous algebraic varieties in higher dimensions. This structural elevation allows the framework to handle exponentially larger or classical-mixed state spaces while successfully preserving the core geometric intuition of entanglement and state overlaps.

When a symmetric system undergoes decoherence, the loss of phase information between basis components prevents the state from being described by a single holomorphic polynomial. For a mixed state characterized by a density operator $\rho$, the exact stellar bijection is often explored through two complementary geometric generalizations. On a continuous level, the state can be mapped to a quasi-probability distribution via the Husimi Q-function $Q(\mathbf{n}) = \langle \mathbf{n} | \rho | \mathbf{n} \rangle$, where $|\mathbf{n}\rangle$ denotes the $SU(2)$ coherent state parameterized by the unit vector $\mathbf{n}$ on the sphere \cite{Bengtsson2017}. Because $Q(\mathbf{n})$ for a mixed state is a convex sum of strictly positive terms, the exact zeros characteristic of pure states wash out into local minima, creating a continuous topography of quantum fluctuations.

On a discrete level, to rigorously preserve the notion of individual stars, the density matrix must be expanded into its Cartesian multipole moments. By applying Sylvester's theorem to the highest-rank symmetric traceless tensor of $\rho$, the state is factorized into a set of effective real vectors \cite{serrano2020majorana}. As the state mixes, the norms of these factorized vectors decrease, meaning these effective ``stars'' are no longer constrained to the spherical surface but physically migrate into the interior of the Bloch ball. The radial distance of these interior stars from the origin becomes a geometric proxy for local purity, allowing one to visualize open-system dynamics, such as Lindblad evolution, as smooth continuous trajectories of a collapsing constellation.

The geometric clarity becomes significantly more intricate when the assumption of permutation invariance is lifted. For a general non-symmetric multi-qubit state, the Hilbert space dimension explodes from the linearly growing symmetric sector of size $N+1$ to the full tensor product space of $2^N$. Consequently, the state can no longer be faithfully compressed onto a single two-dimensional sphere. While the global bijective mapping is lost, the stellar geometry remains a powerful diagnostic tool. By projecting a general state onto the symmetric subspace $\rho_{\rm sym} = \mathcal{P}_{\rm sym} \rho \mathcal{P}_{\rm sym}$, one can construct an effective constellation that captures the permutation-averaged properties of the system \cite{aulbach2010maximally}. This projection serves as a geometric witness for genuine multipartite entanglement, identifying how much macroscopic coherence survives the breakdown of global symmetry. To fully capture the residual unstructured correlations, the geometric framework must be expanded into multiple interacting constellations or mapped onto high-dimensional entanglement polytopes. This geometric fragmentation fundamentally reflects the computational transition from the tractable symmetric sector---where low-rank permanent algorithms render exact evaluations efficient---to the exponentially complex, \#P-hard domain of generic Hilbert spaces.

Beyond relaxing the constraints of purity and symmetry, the most profound generalization of the stellar framework involves elevating the local degrees of freedom from qubits to arbitrary $d$-dimensional qudits. As the system transitions from $\mathrm{SU}(2)$ to $\mathrm{SU}(d)$, the underlying geometric space for a pure local state expands from the familiar Riemann sphere $S^2$ to the complex projective space $\mathbb{CP}^{d-1}$ \cite{sanchezsoto2026quantum}. Consequently, the traditional univariate Majorana polynomial is replaced by a single multivariate homogeneous polynomial. Rather than yielding discrete point-like stars, the zero locus of this generalized polynomial traces out continuous algebraic varieties---such as curves or hypersurfaces---embedded within $\mathbb{CP}^{d-1}$. This continuous geometric lift allows high-dimensional entanglement classifications and nonclassicality measures, such as the stellar rank \cite{chabaud2020stellar}, to be rigorously formulated using algebraic geometry. Ultimately, this algebro-geometric perspective provides a powerful analytical tool for qudit-based quantum computing and high-dimensional metrology. By characterizing complex multi-qudit entanglement through geometric invariants rather than full state vector reconstruction, and by leveraging the polynomial-time tractability of fixed-rank matrix permanents within the symmetric subspace, this framework sidesteps, for the quantities expressible through those permanents, the \#P-hard barriers that otherwise limit the analysis of generic asymmetric many-body systems. Quantities requiring optimisation over the sphere, such as the geometric measure, are not covered by this argument.

\section{Geometric Tools and Quantities Derived from Constellations}
\label{sec:tools}

\subsection{Inter-star distances, dot products, and pair correlations}

The discrete nature of the Majorana constellation makes geometric quantities---such as inter-star distances, dot products between star position vectors, and pairwise correlation functions---particularly natural and powerful observables. Each Majorana star is represented by a unit vector $\mathbf{n}_k \in \mathbb{R}^3$ on the Bloch sphere, obtained from the corresponding polynomial root $z_k$ via the stereographic projection (Eq.~\eqref{eq:star_position}) \cite{Majorana1932, Bengtsson2017}. The most fundamental relative geometric quantity is the dot product between two stars, $\mathbf{n}_k \cdot \mathbf{n}_l = \cos \vartheta_{kl}$, where $\vartheta_{kl}$ is their great-circle angular separation. Expressed explicitly in terms of the complex stereographic coordinates, this relation takes the form
\begin{equation}
\mathbf{n}_k \cdot \mathbf{n}_l = 1 - \frac{ |z_k - z_l|^2 }{ (1 + |z_k|^2)(1 + |z_l|^2) }.
\label{eq:dot_product_z}
\end{equation}
This quantity is central to many entanglement-related expressions because it relies solely on the relative positions of the stars, ensuring strict invariance under global $\mathrm{SU}(2)$ rotations \cite{ribeiro2011entanglement}.

Building directly upon the dot product, one can define the Euclidean, or chordal, distance between two stars in the embedding space $\mathbb{R}^3$. While the geodesic distance on the sphere is simply the angle $\vartheta_{kl}$, the chordal distance, given by
\begin{align}
d_{kl} &= \| \mathbf{n}_k - \mathbf{n}_l \| = \sqrt{ 2 (1 - \mathbf{n}_k \cdot \mathbf{n}_l) } \nonumber\\
&= \sqrt{ \frac{ 2 |z_k - z_l|^2 }{ (1 + |z_k|^2)(1 + |z_l|^2) } },
\label{eq:chordal_distance}
\end{align}
is frequently more convenient for algebraic manipulations and the construction of entanglement invariants \cite{markham2011entanglement}. It is often useful to employ a normalized version of this distance, defined as $\tilde{d}_{kl} = d_{kl}/2 = \sin(\vartheta_{kl}/2)$. This normalized metric is strictly bounded within the interval $[0,1]$, vanishing entirely when two stars coincide ($\vartheta_{kl}=0$) and reaching its maximum when they are diametrically opposed, or antipodal ($\vartheta_{kl}=\pi$).

Beyond individual pairs, the collective geometric structure of the constellation is captured by statistical averages and higher-order correlation functions. A fundamental global metric is the average pairwise dot product (excluding self-terms), which is fixed entirely by the length of the constellation barycenter:
\begin{equation}
C = \frac{1}{2S(2S-1)} \sum_{k \neq l} \mathbf{n}_k \cdot \mathbf{n}_l = \frac{ \left| \sum_k \mathbf{n}_k \right|^2 - 2S }{ 2S(2S-1) }.
\label{eq:average_pairwise_dot}
\end{equation}
This is an identity of Euclidean geometry, obtained by expanding $\left( \sum_k \mathbf{n}_k \right)^2$ and applying $\mathbf{n}_k^2 = 1$; it involves no quantum input, and in particular it should not be rewritten in terms of $\langle \mathbf{S} \rangle$, since the barycenter is not the dipole moment (cf.\ the discussion following Eq.~\eqref{eq:average_spin}). The average pairwise correlation establishes strict bounds on the state's geometry: for fully coherent states where all stars align identically, $\left| \sum_k \mathbf{n}_k \right| = 2S$ and $C = 1$, whereas for any constellation with a vanishing barycenter the value is exactly $C = -1/(2S-1)$, for every $S$ and not merely asymptotically \cite{Zimba2006, giraud2015tensor}. This is a statement about the point set alone. It should not be conflated with $1$-anticoherence, which is the vanishing of $\langle \mathbf{J} \rangle$: the two conditions coincide for the symmetric configurations usually quoted as examples, but neither implies the other in general, again because the barycenter is not the dipole moment. This statistical perspective can be readily extended to higher-order geometrical structures, such as the three-point function $T_{klm} = \mathbf{n}_k \cdot (\mathbf{n}_l \times \mathbf{n}_m)$. Such scalar triple products quantify the chiral volume spanned by triplets of stars and naturally emerge in the evaluation of internal-geometry contributions to dynamical phases.

Ultimately, these pairwise and multi-star geometric observables serve as the critical bridge connecting the discrete constellation to continuous algebraic measures of quantum correlations. In the context of entanglement, bipartite concurrence in symmetric reductions is governed by the average or minimal inter-star distances within the reduced constellation \cite{aulbach2010maximally}. Moving to tripartite systems, the genuine entanglement of symmetric three-qubit states---quantified algebraically by the Coffman-Kundu-Wootters three-tangle---can be expressed elegantly as a symmetric product of the normalized inter-star distances, $\prod_{k<l} \tilde{d}_{kl}^2$ \cite{markham2011entanglement}. Beyond static correlations, these geometries dictate the magnitude of dynamical phase anomalies. When a non-coherent state undergoes adiabatic cyclic evolution, the resulting Berry phase acquires anomalous pairwise corrections $\beta_{kl}$ that depend explicitly on the relative motion and dot-product history of the stars \cite{Hannay1998a}. Furthermore, these metrics act as robust nonclassicality witnesses, where minimal inter-star dot products directly certify maximal anticoherence. By translating abstract algebraic invariants into rotationally invariant spatial relations, these geometric tools provide an intuitive framework that is utilized throughout the remainder of this review.

\subsection{Multipole moments and expansion in spherical harmonics}

While the Majorana constellation provides a discrete geometric picture of a spin-$S$ state, this point-like framework is intertwined with the continuous, rotationally covariant description offered by multipole expansions \cite{Bengtsson2017}. In the standard algebraic formulation, a quantum state is characterized by the expectation values of irreducible tensor operators $T^{(l)}_m$ (where $l = 0,1,\dots,2S$ and $m = -l,\dots,l$). These operators are constructed from the spin angular momentum generators $\mathbf{S} = (S_x, S_y, S_z)$ as normalized spherical harmonics \cite{romero2024multipoles}:
\begin{equation}
T^{(l)}_m = \sqrt{\frac{4\pi}{2l+1}} \, Y_{lm}(\hat{\mathbf{S}} / S).
\label{eq:multipole_tensor}
\end{equation}
The corresponding multipole moments are then defined via the trace with the density matrix $\rho$, capturing the angular anisotropies of the state:
\begin{equation}
\langle T^{(l)}_m \rangle = \operatorname{Tr} \bigl[ \rho \, T^{(l)}_m \bigr] = \sqrt{\frac{4\pi}{2l+1}} \, \langle Y_{lm} \rangle.
\label{eq:multipole_moment}
\end{equation}
In this hierarchy, the $l=0$ term enforces trace normalization, the $l=1$ moments define the classical-like Bloch vector (the dipole moment $\langle \hat{\mathbf{n}} \rangle$), and the higher-order terms ($l \ge 2$) encode quantum properties such as quadrupole and octupole deformations \cite{giraud2015tensor}.

The bridge between these continuous multipole moments and the discrete Majorana stars relies on the state's quasi-probability distributions. Because any pure state admits a continuous expansion over the overcomplete basis of coherent states \cite{radcliffe1971some, arecchi1972atomic, kam2023coherent}, its phase-space topography is governed by the Husimi $Q$-function \cite{husimi1940some}. For a pure spin-$S$ state, this distribution is proportional to the modulus square of the Majorana polynomial, evaluated at the point antipodal to $\mathbf{n}$ \cite{leboeuf1990chaos, Bengtsson2017}:
\begin{equation}
Q(\mathbf{n}) = |\langle \mathbf{n} | \psi \rangle|^2 \propto \frac{ \left| P_\psi(\bar{z}) \right|^2 }{ (1 + |\bar{z}|^2)^{2S} }, \qquad \bar{z} = -\frac{1}{z^*},
\label{eq:Q_pure}
\end{equation}
where $z$ is the stereographic coordinate of $\mathbf{n}$ and $\bar{z}$ that of $-\mathbf{n}$. The antipodal map is essential and is a frequent source of confusion: a coherent state $|\mathbf{n}\rangle$ is orthogonal to $|\psi\rangle$ precisely when $-\mathbf{n}$ is a Majorana star, so the zeros of $Q$ sit at the \emph{antipodes} of the constellation, while the stars themselves lie in the high-$Q$ region.
The $Q$-function provides the bridge between the two descriptions, since the multipole moments are recovered as its projections onto the spherical harmonics,
\begin{equation}
\langle T^{(l)}_m \rangle = \mathcal{C}_l \int_{S^2} d\Omega \; Q(\mathbf{n}) \, Y_{lm}(\mathbf{n}),
\label{eq:multipole_from_Q}
\end{equation}
with $\mathcal{C}_l$ a normalisation constant \cite{romero2024multipoles, sanchezsoto2026quantum}. The moments are therefore weighted by the whole of $Q$, and it is worth resisting the tempting shortcut of replacing this integral by a uniform average of $Y_{lm}$ over the star positions. That replacement is valid only when the constellation is close to fully degenerate, in which case $Q$ concentrates on the single repeated star and the integral tends to $Y_{lm}$ evaluated there. It fails in general, and it fails at every $S$: for $|S,0\rangle$, whose $S$ stars at the north pole and $S$ at the south pole leave $Q$ concentrated along the equator, the rank-two moment $3\langle S_z^2\rangle - S(S+1) = -S(S+1)$ is negative while the star average of $Y_{20}$ is positive. The correct finite-$S$ correspondence runs instead through the coefficients of $P_\psi$: expanding the power sums of the stereographic roots $p_r = \sum_k z_k^r$ and applying Newton's identities recovers the state's spherical-harmonic expansion with the requisite binomial and phase factors \cite{romero2024multipoles}.

Through this algebraic correspondence, the constellation's symmetries and degeneracies strictly determine which multipoles vanish or maximize. For instance, any spatial inversion symmetry within the constellation---manifested as the antipodal pairing of stars---forces all odd-rank (vector-like) multipoles to vanish identically \cite{giraud2015tensor, romero2024multipoles}. Conversely, the even-rank multipoles remain sensitive to the spatial spread and structural anisotropies of the star distribution. This interplay is illustrated by maximally anticoherent states. By definition, a $t$-anticoherent state demands the simultaneous vanishing of all multipole moments up to order $t$ (where $t$ is the maximum allowed for a given $S$). In the Majorana picture, satisfying these highly nonlinear algebraic constraints is geometrically equivalent to arranging the stars into highly symmetric configurations, such as Platonic solids or spherical designs, which naturally eliminate low-order geometric moments \cite{Zimba2006, giraud2015tensor}. Furthermore, stellar degeneracies directly reduce the effective dimensionality of the state space; a constellation featuring only $d$ distinct star positions possesses at most $d$ free rotational parameters, imposing stringent interdependencies on its higher-order multipoles \cite{romero2024multipoles}.

Ultimately, the synthesis of continuous multipole analysis and discrete Majorana geometry provides a landscape for quantum information processing. Within this framework, the vanishing of low-rank multipoles witnesses specific entanglement classes; for instance, the generalized GHZ state is characterized by the suppression of odd multipoles beyond the dipole \cite{markham2011entanglement}. This geometric intuition extends to quantum metrology, where the optimization of sensitivity involves shaping the state to maximize higher-order multipoles while silencing low-order noise---a duality manifested as spin squeezing or the alignment of Majorana stars \cite{toth2014quantum, goldberg2020extremal}. As the system evolves, the Berry connection and curvature trace the gradients of these multipoles along adiabatic paths, where inter-star correlations drive geometric phase corrections \cite{Hannay1998a}. By unifying discrete stellar classification with continuous spin distributions, the framework bridges the algebraic and geometric paradigms of quantum physics.

\subsection{Stellar rank and measures of quantumness/nonclassicality}

Beyond multipartite correlations, the Majorana constellation characterizes intrinsic nonclassicality independent of entanglement structure. A foundational metric in this regime is the stellar rank $r_\psi$ of a pure spin-$S$ state $|\psi\rangle$, defined as the number of distinct roots of its Majorana polynomial,
\begin{equation}
r_\psi = \# \left\{ \text{distinct roots of } P_\psi(z) \right\},
\label{eq:stellar_rank_definition}
\end{equation}
multiplicities being counted only once. The name follows the analogous construction for continuous-variable systems, where the stellar rank of a state is the number of zeros of its stellar function \cite{chabaud2020stellar}. This invariant provides a rotationally invariant hierarchy of nonclassicality \cite{sanchezsoto2026quantum}. At the classical extreme, $r_\psi = 1$ strictly identifies a coherent state where all $2S$ stars perfectly coincide. Conversely, generic quantum states possess fully distinct roots, so that $r_\psi = 2S$. States exhibiting intermediate ranks ($r_\psi \ll 2S$) carry fewer independent stellar positions and are correspondingly constrained in their higher multipoles. For mixed systems, an analogous notion is furnished by the minimal number of coherent states appearing in a convex decomposition of $\rho$ \cite{serrano2020majorana}.

A caution is in order, because the literature contains a superficially similar quantity that must not be conflated with Eq.~\eqref{eq:stellar_rank_definition}: the minimal number of spin coherent states whose \emph{linear superposition} reproduces $|\psi\rangle$, sometimes called the coherent rank. The two are not equal. Already for $S=3/2$, a generic constellation of three distinct stars---stellar rank three---can be written exactly as a superposition of only two coherent states, since the parameter count of two directions and two complex amplitudes exceeds that of the four-dimensional state space. Only the $r_\psi = 1$ case is common to both notions. It is Eq.~\eqref{eq:stellar_rank_definition}, not the coherent rank, that is invariant under $\mathrm{SU}(2)$ rotations, that is preserved by the M{\"o}bius action of SLOCC filtering, and that is used throughout this review.

While stellar rank counts the number of distinct roots, the spatial arrangement of these roots dictates a complementary measure of nonclassicality known as anticoherence \cite{Zimba2006}. A state is rigorously defined as $t$-anticoherent if all of its multipole moments up to rank $t$ vanish identically:
\begin{equation}
\langle T^{(l)}_m \rangle = 0 \quad \forall \, l=1,\dots,t, \quad |m|\leq l.
\label{eq:anticoherent_definition}
\end{equation}
Geometrically, this condition is met by arranging the stars so that the constellation is isotropic up to order $t$; in practice this is achieved through discrete rotational symmetry, since a symmetry group admitting no invariant tensor of rank $\leq t$ forces the corresponding multipoles of the state to vanish. We stress that the condition is on the multipoles of $\rho$ and not on the naive moments of the point set, for the reason given after Eq.~\eqref{eq:multipole_from_Q}. States that achieve the theoretically maximal anticoherence order, $t_{\rm max}(S)$, are termed ``kings of quantumness,'' representing the absolute geometric antithesis of localized coherent states \cite{bjork2015extremal, giraud2015tensor}. The geometric realization of these extremal states corresponds to highly symmetric spatial distributions. For instance, the $S=1$ ($N=2$) case is exclusively solved by two antipodal stars, $S=3/2$ ($N=3$) by an equilateral triangle mapped onto a great circle, and $S=2$ ($N=4$) by the vertices of a regular tetrahedron \cite{bjork2015extremal}. For higher spin systems, these configurations map to regular polyhedra, Platonic solids, or geometric arrangements minimizing a repulsive potential, directly tying the quantum state generation to solutions of the classical Thomson problem \cite{sanchezsoto2026quantum}. Such symmetric distributions maximize anticoherence while representing pure states within the spin-$S$ manifold.

The concepts of rank and anticoherence are further supplemented by continuous geometric quantifiers derived from the constellation. The average pairwise correlation $C$ (Eq.~\eqref{eq:average_pairwise_dot}) establishes a bounded scale of nonclassicality, ranging from $+1$ for perfectly coherent states down to $-1/(2S-1)$ for maximally anticoherent distributions \cite{Zimba2006}. Other rotationally invariant metrics include the spread functional, which evaluates the variance of inter-star distances to capture the second moment of the discrete spherical distribution \cite{aulbach2010maximally}, and the stellar entropy, computed as the Shannon entropy of the normalized inter-star distances to quantify spatial uniformity \cite{ganczarek2012barycentric}. Additionally, one can define the effective number of stars as $2S / \langle m \rangle$, where $\langle m \rangle$ is the average root multiplicity, providing a continuous diagnostic complement to the discrete stellar rank. Because all these quantities are directly computable from the stereographic roots $\{z_k\}$, they avoid the computational overhead of full state reconstruction.

Operationally, because these geometric measures are inherently independent of subsystem partitioning, they serve as ideal benchmarks for quantum resources in diverse physical contexts. In quantum metrology, the kings of quantumness provide optimal resource states for rotation sensing, achieving Heisenberg-limited scaling while completely suppressing classical low-order noise \cite{goldberg2020extremal, toth2014quantum}. In the context of quantum chaos and random matrix theory, generic states in the large-$S$ limit statistically approach maximal anticoherence, making the uniform spatial spread of the constellation a visual hallmark of ergodic dynamics \cite{leboeuf1990chaos, hannay1996chaotic}. Furthermore, these measures guide experimental state engineering by allowing the design of highly nonclassical yet unentangled resource states for testing quantum simulators, where the stellar rank can be certified via partial constellation reconstruction or low-order multipole tomography. Ultimately, while standard entanglement measures focus on correlations across bipartite or multipartite cuts, stellar rank and anticoherence quantify the departure of a single collective system from classicality, completing the geometric toolkit afforded by the Majorana representation.

\subsection{Random Majorana constellations: statistical properties in large-$S$ limit}

In the thermodynamic limit ($S \to \infty$, or equivalently, for a macroscopic number of symmetric qubits $N \to \infty$), the Majorana constellation transitions from a discrete geometric tool into a continuous statistical ensemble. Drawing a typical pure state uniformly from the Haar measure on the complex projective space $\mathbb{CP}^{2S}$ translates, within the Majorana framework, to generating a random $\mathrm{SU}(2)$ polynomial. Because of the unitary invariance of the Haar measure, the coefficients of this polynomial are independent complex Gaussian random variables weighted by specific binomial factors. When the roots of these random polynomials are mapped onto the Bloch sphere via the inverse stereographic projection, their spatial distribution rigorously converges to a universal uniform density \cite{leboeuf1990chaos, hannay1996chaotic}. The probability density of finding a star at a specific direction $\mathbf{n}$ approaches
\begin{equation}
\rho(\mathbf{n}) \to \frac{1}{4\pi} \quad \text{(uniform on } \mathcal{S}^2\text{ as } S \to \infty\text{)}.
\label{eq:uniform_star_density}
\end{equation}
This uniform spherical covering represents the large-$S$ geometric analogue of the maximally mixed state, providing a featureless background of isotropic quantum fluctuations.

Crucially, however, these random constellations are not mere collections of independent, non-interacting points. A profound result from the study of quantum chaos demonstrates that the roots of random $\mathrm{SU}(2)$ polynomials are statistically equivalent to the eigenvalues of non-Hermitian random matrices, specifically aligning with the complex Ginibre ensemble after an appropriate stereographic mapping \cite{bogomolny1996quantum}. Consequently, the Majorana stars absolutely do not follow a memoryless Poisson point process. Instead, they exhibit strong spatial correlations characterized by universal Random Matrix Theory (RMT) signatures. The pair correlation function of the roots reveals pronounced level repulsion at short distances, scaling quadratically in strict accordance with the Wigner surmise \cite{bogomolny1996quantum, leboeuf1990chaos}. Physically, this implies that the Majorana stars behave as a two-dimensional Coulomb gas of mutually repelling classical charges confined to the spherical surface \cite{bogomolny1996quantum}. This inherent repulsion constitutes a direct geometric signature of quantum chaos, confirming that generic, highly entangled states inherently resist the clustering of their constituent roots.

This Coulomb-like level repulsion completely dictates the asymptotic scaling of the constellation's geometric invariants \cite{markham2011entanglement}. For a Haar-random state, the average pairwise dot product approaches, to leading order in $1/S$, the value fixed by a vanishing barycenter,
\begin{equation}
\langle \mathbf{n}_k \cdot \mathbf{n}_l \rangle_{k \neq l} \to -\frac{1}{2S-1} \approx -\frac{1}{2S} \quad (S \gg 1).
\label{eq:random_pair_correlation}
\end{equation}
The limit is approached from above and is not attained at finite $S$; direct sampling of Haar-random states gives $-0.132$ against $-1/7$ at $2S = 8$ and $-0.0317$ against $-1/31$ at $2S = 32$. It should also be emphasised that Eq.~\eqref{eq:random_pair_correlation} is not the uncorrelated value, which would be zero: the negative sign is precisely the signature of the repulsion just described.
The variance of these pairwise projections scales precisely as $1/S$, reflecting typical statistical fluctuations in the thermodynamic limit \cite{ganczarek2012barycentric}. Similarly, because rotational invariance guarantees that the ensemble average of all higher-order multipole moments ($\langle T^{(l)}_m \rangle$ for $l \ge 1$) strictly vanishes, individual generic states will exhibit small, random multipole anisotropies bounded tightly by $\mathcal{O}(1/\sqrt{S})$ fluctuations. The distribution of normalized inter-star chordal distances $\tilde{d}_{kl}$ approaches the classical geometric distribution of chordal distances on a sphere, but with a highly suppressed short-distance tail dictated by the aforementioned Coulomb repulsion.

The statistical distribution of Majorana roots dictates the entanglement landscape of macroscopic symmetric systems. Effective repulsion between stars exponentially suppresses the probability of coincident roots, which scales as $e^{-cN}$ for a constant $c$ \cite{markham2011entanglement}. Consequently, nearly all states in the large-$N$ limit occupy the generic SLOCC orbit, characterized by maximal stellar rank $r_\psi = 2S$ and the absence of local filtering reductions. Within this thermodynamic regime, typicality ensures that random constellations exhibit near-maximal multipartite entanglement \cite{ganczarek2012barycentric}. This statistical preference is confirmed by geometric invariants, such as the inter-star distance product, which concentrate at typical values to signify the prevalence of dispersed, highly entangled configurations.

Random Majorana constellations establish the null model for analyzing many-body dynamics and quantum phase transitions. Any departure from this uniform Coulomb-gas distribution---whether through stellar clustering, polyhedral order, or antipodal pairing---acts as a witness for broken ergodicity. In systems such as the Lipkin-Meshkov-Glick (LMG) model, excited-state quantum phase transitions manifest as macroscopic redistributions of the stars, breaking the expected Haar-random uniformity \cite{ribeiro2007thermodynamical, ribeiro2008exact}. By benchmarking experimental constellations against these universal RMT properties, researchers can identify signatures of quantum integrability and locate the boundaries where macroscopic systems fail to thermalize.

\subsection{Interplay with Spherical Phase-Space and Analytic Representations}

The Majorana stellar representation does not exist in a vacuum; it is a specialized, discrete member of a broader family of phase-space representations developed for spin and angular momentum systems. To appreciate its geometric utility---particularly for entanglement classification---it is instructive to contrast it with continuous quasi-probability distributions and analytic phase-space methods, specifically the Husimi $Q$-function, the Stratonovich-Wigner function, and the Bargmann representation. While continuous distributions capture macroscopic features like decoherence, the Majorana representation reduces the state to a discrete geometry to reveal its underlying multipartite correlations.

The most direct continuous counterpart to the stellar picture is the Husimi $Q$-function. Defined as the expectation value of the density matrix with respect to the overcomplete basis of spin coherent states \cite{radcliffe1971some}, 
\begin{equation}
Q(\mathbf{n}) = \langle \mathbf{n} | \rho | \mathbf{n} \rangle,
\label{eq:Husimi_Q}
\end{equation}
the Husimi distribution provides a strictly non-negative, smoothed probability landscape over the Bloch sphere \cite{husimi1940some}. For a pure spin-$S$ state, this distribution is fundamentally dictated by the Majorana polynomial; specifically, $Q(\mathbf{n})$ is rigorously proportional to the modulus square of the polynomial,
\begin{equation}
Q(\mathbf{n}) \propto \frac{ \left| P_\psi(\bar{z}) \right|^2 }{ (1 + |\bar{z}|^2)^{2S} },
\label{eq:Q_via_Majorana}
\end{equation}
where $z$ is the stereographic projection of the direction $\mathbf{n}$ and $\bar{z} = -1/z^*$ that of $-\mathbf{n}$, as in Eq.~\eqref{eq:Q_pure}. Consequently, the Majorana stars are not arbitrary points---the zeros of the Husimi $Q$-function sit exactly at their antipodes \cite{leboeuf1990chaos, Bengtsson2017}. While both representations live on the sphere, their operational philosophies diverge. The $Q$-function smears quantum features over the scale of a coherent state due to its inherent $2S+1$ resolution limit, making it ideal for visualizing the classical-quantum boundary and mapping mixed states or open-system noise. However, this continuous smoothing obscures discrete entanglement structures. Where $Q$-function analysis requires complex integrations over the spherical manifold to extract marginal probabilities or entanglement witnesses, the Majorana picture preserves exact state information as point-like symmetries, allowing one to classify Stochastic Local Operations and Classical Communication (SLOCC) orbits simply by counting the degeneracies of the constituent stars \cite{bastin2009operational, markham2011entanglement}.

In contrast to the strictly positive Husimi distribution, the Stratonovich-Wigner function provides a more stringent phase-space witness of nonclassicality \cite{stratonovich1956distributions}. Formulated as a real-valued quasi-probability distribution on the sphere, the Wigner function $W(\mathbf{n})$ is constructed via a kernel expansion in spherical harmonics \cite{dowling1994wigner},
\begin{equation}
W(\mathbf{n}) = \sum_{l=0}^{2S} \sum_{m=-l}^{l} \sqrt{\frac{4\pi}{2l+1}} \, \langle Y_{lm}^*(\mathbf{n}) \rangle \, Y_{lm}(\mathbf{n}).
\label{eq:Wigner_sphere}
\end{equation}
Its defining feature is its ability to take on negative values, with these negativities serving as a direct visual and mathematical witness for quantum interference and nonclassicality \cite{kenfack2004negativity}. The algebraic bridge to the Majorana constellation is established through the multipole moments $\langle Y_{lm} \rangle$, which are mathematically locked to the symmetric power sums of the Majorana roots \cite{romero2024multipoles}. Visually, highly entangled states exhibit oscillatory patterns---ripples of negativity---in the Wigner landscape, whereas the Majorana constellation reveals this identical quantum structure through the maximal spatial dispersion of its stars \cite{giraud2015tensor}. For entanglement detection, Wigner negativities excel at flagging nonclassicality even in highly mixed states where the discrete stellar picture breaks down \cite{serrano2020majorana}. Yet, for pure symmetric multi-qubit states, the Wigner representation lacks the algebraic simplicity of the constellation; extracting quantitative tangles from $W(\mathbf{n})$ requires full spherical integration, whereas the Majorana framework yields exact polynomial invariants directly from inter-star chordal distances.

Beyond these spherical probability distributions, the Bargmann representation offers a strictly analytic perspective by mapping the state to a holomorphic function on the complex plane \cite{bargmann1961hilbert}. The Bargmann function for a pure state is defined via its inner product with an unnormalized coherent state, yielding an entire analytic function,
\begin{equation}
B(z) = \langle z | \psi \rangle \propto \sum_{m=-S}^{S} c_m \, \sqrt{\binom{2S}{S+m}} \, z^{S+m}.
\label{eq:Bargmann_function}
\end{equation}
This representation shares the polynomial-based foundation of the Majorana picture; in fact, the two are related by the antipodal map $z \mapsto -1/z^{*}$ together with conjugation of the expansion coefficients, so that the directions $\mathbf{n}$ at which the Bargmann amplitude $\langle \mathbf{n} | \psi \rangle$ vanishes are exactly the antipodes of the Majorana stars, as recorded in Eq.~\eqref{eq:Q_pure} \cite{Bengtsson2017}. However, their physical applications diverge significantly based on dimensionality. The Bargmann representation is tailored for infinite-dimensional continuous-variable (CV) systems, such as bosonic fields or harmonic oscillators, where its zeros determine the stellar rank and classify multi-mode non-Gaussianity \cite{chabaud2020stellar}. Conversely, the Majorana representation anchors this mathematical elegance strictly to the finite, boundary-less geometry of the Bloch sphere, making it more practical for analyzing finite symmetric qubit arrays and multi-spin geometric phases.

Ultimately, these representations are not exclusive but complementary. In the context of quantum information and multipartite entanglement, the Majorana representation stands out by stripping away the continuous probability envelope to reveal the discrete, invariant core of the quantum state. While the Husimi and Wigner functions provide indispensable phase-space topographies for visualizing decoherence and macroscopic limits, the Majorana constellation distills these continuous ripples into exact spatial relations. Modern analytical techniques frequently employ hybrid approaches---such as plotting the discrete Majorana stars atop Husimi or Wigner heatmaps---to simultaneously harness the algebraic precision of polynomial roots and the continuous intuition of quasi-probability landscapes, thereby forming a complete geometric diagnostic suite for quantum many-body systems \cite{sanchezsoto2026quantum}.

\section{Entanglement in the Majorana Picture}
\label{sec:entanglement}

Having established the kinematic and geometric foundations of the Majorana constellation, this section operationalizes the representation as an analytical tool for entanglement theory. We transition from descriptive spherical geometry to quantitative classification, demonstrating how root degeneracies completely dictate the Stochastic Local Operations and Classical Communication (SLOCC) orbits for symmetric multi-qubit states. By reducing algebraically complex entanglement measures to purely spatial relations, the stellar framework effectively circumvents the traditional computational bottlenecks of multi-body invariants.

\subsection{Symmetric Multi-Qubit States and SLOCC Classification via Degeneracies}

Entanglement classification under SLOCC constitutes a fundamental problem in quantum information theory. While the number of inequivalent classes for general $N$-qubit states grows exponentially with $N$, the totally symmetric subspace of $N$ qubits, possessing dimension $N+1$, reduces this complexity dramatically to a finite number of SLOCC orbits. Within this subspace, the degeneracy patterns of the Majorana constellation---specifically, the multiplicity distribution of the stars---serve as a complete and invariant geometric metric. A pure symmetric $N$-qubit state maps bijectively to a polynomial with exactly $N$ roots, represented as Majorana stars on the Riemann sphere. A foundational result, operationalized by Bastin \textit{et al.} \cite{bastin2009operational, mathonet2010entanglement} and further structurally characterized by Markham \cite{markham2011entanglement}, establishes that the degeneracy pattern of the constellation is a SLOCC \emph{invariant}: two pure symmetric $N$-qubit states with different degeneracy patterns cannot be SLOCC-equivalent. The converse does not hold in general, and the distinction matters, so we state it carefully below: identical degeneracy patterns define what Bastin \textit{et al.} call a \emph{family}, and a family may contain a continuum of inequivalent SLOCC classes.

This exact equivalence stems from the action of SLOCC operations within the symmetric sector. A symmetric local filtering operation corresponds to applying identical invertible operators $A^{\otimes N}$ with $A \in \mathrm{GL}(2,\mathbb{C})$, which manifests geometrically as a M\"obius transformation acting on the complex plane of the stereographically projected stars. Consequently, the invariant cross-ratios of these roots under M{\"o}bius transformations directly yield continuous SLOCC invariants \cite{ribeiro2011entanglement}. Because M{\"o}bius transformations are bijective conformal automorphisms of the Riemann sphere, they map configurations of roots to new configurations while strictly preserving their algebraic multiplicities. Consequently, clusters of coincident stars cannot be split, merged, or reduced in multiplicity by any local invertible operation.

SLOCC families are thus in exact one-to-one correspondence with the integer partitions of $N$, denoted by $\lambda = (\lambda_1 \geq \lambda_2 \geq \dots \geq \lambda_k > 0)$ such that $\sum_{i=1}^k \lambda_i = N$. Each part $\lambda_i$ represents the multiplicity of a distinct star position. The number of families is the partition function $p(N)$, which scales sub-exponentially---a sharp contrast to the $2^N$ growth of the full Hilbert space \cite{markham2011entanglement}. Through Euler's pentagonal number theorem, $p(N)$ is exactly computable in $O(N^{1.5})$ time by summing $O(\sqrt{N})$ terms, ensuring the structural classification of symmetric states remains tractable even in the macroscopic limit.

It is essential not to read this as a complete classification of SLOCC orbits. A family of $k$ distinct star positions is a configuration of $k$ points on $\mathbb{CP}^1$ modulo the three-complex-parameter M{\"o}bius group, so its moduli space has complex dimension $k-3$. For $k \leq 3$ the moduli space is a point and the family is a single SLOCC class; this is why the three-qubit case is exhausted by the partitions of $3$. From $k=4$ onwards each family carries continuous moduli---for four distinct stars, the single cross-ratio of Eq.~\eqref{eq:dot_product_z}'s underlying M{\"o}bius geometry---and therefore contains uncountably many inequivalent SLOCC classes. This continuous structure, and the identification of the surviving invariants as cross-ratios of the stellar coordinates, was established by Ribeiro and Mosseri \cite{ribeiro2011entanglement}, who also expressed the local-unitary invariants of the symmetric sector directly through the relative positions of the Majorana points. Concretely, the configurations $\{0,1,i,-1\}$ and $\{0,1,2i,-1\}$ share the degeneracy pattern $(1,1,1,1)$ but have cross-ratios $1-i$ and $\tfrac{8}{5}-\tfrac{4}{5}i$; since M{\"o}bius transformations preserve the cross-ratio, no SLOCC operation connects them. The degeneracy partition is therefore a complete invariant for $N \leq 3$ and a coarse-graining of the orbit structure thereafter. For example, $N=2$ admits two partitions: $(2)$, corresponding to a single degenerate star representing a fully separable symmetric coherent state, and $(1,1)$, representing a maximally entangled symmetric state, such as the Bell state. For $N=3$, the three partitions are $(3)$, which is fully separable; $(2,1)$, the W class, which is genuinely tripartite entangled but carries a vanishing three-tangle; and $(1,1,1)$, the generic class containing the symmetric GHZ state, on which the three-tangle is non-zero. As $N$ scales to $4$, the system resolves into five distinct classes: $(4)$, $(3,1)$, $(2,2)$, $(2,1,1)$, and $(1,1,1,1)$.

This partition structure imposes a rigorous hierarchy on the entanglement of symmetric states. The number of distinct root positions, $k$, acts as a geometric proxy for the rank of genuine multipartite correlations. The partition exhibiting maximal degeneracy, $(N)$, indicates a state devoid of entanglement. Partitions containing a mixture of degenerate clusters and isolated stars, such as the $(N-1, 1)$ W-class generalizations, describe states in which the excitations are symmetrically distributed but not maximally correlated; we emphasise that these remain genuinely multipartite entangled, and that the degeneracy signals the vanishing of particular polynomial invariants rather than biseparability. Partitions with balanced degeneracies, such as $(N/2, N/2)$ for even $N$, represent states that display product-like behavior across symmetric bipartitions. The generic partition, $(1^N)$, wherein all roots are distinct, characterizes the family containing the maximally entangled states within the symmetric sector. In the thermodynamic or large-$N$ limit, a state sampled from the Haar measure will almost certainly fall into the generic partition, as exact algebraic degeneracies occupy a manifold of measure zero.

This geometric hierarchy directly governs established entanglement measures. The geometric measure of entanglement, defined as $E_g = -\log_2 \max_{\mathbf{n}} |\langle \mathbf{n}^{\otimes N} | \psi \rangle|^2$, is bounded by the distance to the closest symmetric coherent state \cite{aulbach2010maximally}. Consequently, $E_g$ approaches its maximum when the roots are highly dispersed on the sphere, devoid of high-multiplicity clusters. Furthermore, algebraic invariants such as the three-tangle vanish strictly in the presence of specific degeneracies; for instance, any three-qubit symmetric state possessing a root of multiplicity two or higher will yield a zero three-tangle.

While the degeneracy classification offers a mathematically complete framework for the symmetric subspace, applying this formalism to physical and computational contexts requires careful treatment of non-ideal conditions. Mathematically, exact algebraic degeneracies are unstable under generic state perturbations. An infinitesimal deviation introduced by experimental noise to a state with a root of multiplicity $m$ will cause that root to split into $m$ distinct simple roots, immediately shifting the state into the generic $(1^N)$ family. Therefore, in experimental settings such as collectively manipulated trapped ions or photonic cluster states, strictly evaluating exact multiplicities is insufficient. Instead, classification necessitates defining a coarse-grained metric that identifies macroscopic clustering of stars within a specified angular tolerance, often reconstructed via low-order multipole moments rather than full state tomography. Computationally, transitioning from the state vector to the degeneracy partition requires extracting the roots of the corresponding order-$N$ Majorana polynomial. While analytically exact, this root-finding process is well-known to be numerically ill-conditioned for polynomials with high-multiplicity roots or large degrees $N$. Small floating-point errors in the state amplitudes translate into significant geometric deviations of the extracted star positions. Consequently, characterizing large-scale symmetric entanglement via the Majorana representation demands robust, high-precision numerical implementations to prevent artifactual splitting of closely clustered roots.

\subsection{Bipartite Entanglement: Geometric Interpretations of Concurrence and Negativity}

Bipartite entanglement constitutes the foundational layer of quantum correlations, quantified in two-qubit and two-qudit systems through algebraic measures such as concurrence and negativity \cite{Wootters1998, Vidal2002}. Within the Majorana stellar representation, these abstract mathematical metrics acquire rigorous geometric interpretations, particularly for symmetric multi-qubit states where the spatial configuration of the constellation encodes the entanglement strength between subsystems.

For a general bipartite pure state $|\psi\rangle_{AB}$, the concurrence is defined algebraically via the reduced density matrix $\rho_A = \operatorname{Tr}_B |\psi\rangle\langle\psi|$ as $C(|\psi\rangle) = \sqrt{2(1 - \operatorname{Tr} \rho_A^2)}$, which expands to $2 |c_{00}c_{11} - c_{01}c_{10}|$ in the computational basis \cite{Wootters1998}. When mapping this framework to the symmetric $N=2$ ($S=1$) subspace, the Majorana representation distills this evaluation into a direct spatial relationship between two stars. Let the two stars be separated by a spherical angle $\vartheta$, and let $\tilde{d}_{12} = \sin(\vartheta/2)$ denote their normalized chordal distance. Because the symmetrized product $|\mathbf{n}_1\rangle |\mathbf{n}_2\rangle + |\mathbf{n}_2\rangle |\mathbf{n}_1\rangle$ is not normalized until divided by the permanent of the stellar Gram matrix $G_{kl} = \langle \mathbf{n}_k | \mathbf{n}_l \rangle$---here $\operatorname{perm} G = 1 + |\langle \mathbf{n}_1 | \mathbf{n}_2 \rangle|^2 = 2 - \tilde{d}_{12}^2$---the concurrence is not simply the chordal distance but the ratio
\begin{equation}
C = \frac{\tilde{d}_{12}^{2}}{2 - \tilde{d}_{12}^{2}} = \frac{\sin^2 (\vartheta/2)}{1 + \cos^2 (\vartheta/2)}.
\label{eq:geometric_concurrence_N2}
\end{equation}
This establishes a strict geometric boundary: perfectly coincident stars ($\vartheta=0$) yield $C=0$, signifying a separable coherent state, whereas antipodal stars ($\vartheta=\pi$) yield the maximal value $C=1$, completely characterizing the maximally entangled Bell-like state within the symmetric sector. The nonlinearity of Eq.~\eqref{eq:geometric_concurrence_N2} is a direct consequence of the normalization: entanglement grows more slowly than the raw stellar separation at small $\vartheta$, since nearly coincident stars produce a strongly overcomplete symmetrized state. This normalization factor---the Gram permanent---recurs in every exact stellar formula below, and neglecting it is the most common source of error in geometric entanglement expressions.

This bipartite geometric framework scales naturally to larger symmetric $N$-qubit states. To quantify the entanglement between a single qubit and the remaining $(N-1)$ ensemble---the standard 1-vs-$(N-1)$ bipartition---one analyzes the reduced density matrix $\rho_1 = \operatorname{Tr}_{N-1} |\psi\rangle\langle\psi|$. Tracing out $N-1$ of the $N$ symmetrized factors does not amount to averaging the stellar coordinates; it produces a weighted coherence between every pair of stars, with weights given by permanental minors of the Gram matrix. Writing $G^{(l,k)}$ for the $(N-1)\times(N-1)$ matrix obtained from $G$ by deleting row $l$ and column $k$, the reduced state is exactly
\begin{equation}
\rho_1 = \frac{1}{N \operatorname{perm} G} \sum_{k,l=1}^{N} \operatorname{perm} \! \left( G^{(l,k)} \right) \, |\mathbf{n}_k\rangle \langle \mathbf{n}_l| ,
\label{eq:reduced_state_permanental}
\end{equation}
and, since a two-dimensional density matrix of unit trace obeys $\operatorname{Tr}\rho_1^2 = 1 - 2 \det \rho_1$, the concurrence across this cut is
\begin{equation}
C_{1|N-1} = 2 \sqrt{ \det \rho_1 } .
\label{eq:geometric_concurrence_1_vs_N-1}
\end{equation}
For $N=2$ this reduces to Eq.~\eqref{eq:geometric_concurrence_N2}, as it must. The physical implication survives in the form one expects---maximal bipartite entanglement across this cut requires the stars to be mutually dispersed, and spatial clustering suppresses it---but the dependence is not a simple function of the mean inter-star correlation. Two constellations with identical average pairwise dot products can carry different $C_{1|N-1}$, because the permanental weights in Eq.~\eqref{eq:reduced_state_permanental} are sensitive to the full pattern of overlaps rather than to their mean. The same construction, with $G^{(l,k)}$ replaced by the appropriate permanental minor of order $N-k$, yields the reduced states for balanced $k$-vs-$(N-k)$ bipartitions.

Complementary to concurrence, the negativity $\mathcal{N}(\rho) = (\| \rho^{T_A} \|_1 - 1)/2$ provides an alternative computable metric based on the partial transpose $T_A$ and the trace norm \cite{Vidal2002}. While its absolute geometric formulation is more intricate, negativity admits strict constellation-based bounds within the symmetric subspace. For pure symmetric states evaluated across the 1-vs-$(N-1)$ partition, negativity is rigidly bounded by functions of the lowest-order multipole moments and the average inter-star distances. Geometrically, a low or vanishing negativity serves as a direct symptom of high degeneracy within the constellation, indicating that macroscopic clusters of stars dictate classical-like correlations across the bipartition. Since negativity and concurrence are strictly proportional for pure two-qubit reductions ($\mathcal{N} = C/2$), the aforementioned distance-based geometric expressions apply directly. Furthermore, as systems undergo decoherence into mixed states, negativity can still be approximated by analyzing the radial spread of the effectively rescaled stars as they migrate into the interior of the Bloch ball \cite{serrano2020majorana}.

By mapping bipartite metrics onto constellation geometry, the Majorana representation bypasses the partial trace operations required by algebraic methods. This shift allows entanglement strength to be quantified via symmetric functions of polynomial roots, where star separation offers a direct measure of quantum correlations. The resulting framework remains robust against experimental noise, as physical perturbations manifest as continuous shifts in star positions that allow entanglement decay to be tracked through distance statistics. These bipartite foundations further extend to higher-order quantifiers, forming the basis for geometric invariants such as the three-tangle.

\subsection{Tripartite Entanglement: Three-Tangle for General Three-Qubit States}

Tripartite entanglement constitutes the fundamental setting in which genuine multipartite correlations emerge---phenomena that inherently cannot be reduced to or explained by bipartite entanglement alone. The canonical measure of this genuine tripartite entanglement for pure three-qubit states is the three-tangle ($\tau$), introduced by Coffman, Kundu, and Wootters (CKW) \cite{Coffman2000}. To appreciate the power of the Majorana representation, it is instructive to first consider the standard algebraic formulation. For a general pure three-qubit state $|\psi\rangle = \sum_{i,j,k=0}^{1} a_{ijk} |ijk\rangle$, the three-tangle is defined by the CKW monogamy relation as the residual entanglement left over once the pairwise contributions have been subtracted from the one-versus-rest bipartition,
\begin{equation}
\tau = C^2_{A(BC)} - C^2_{AB} - C^2_{AC},
\label{eq:CKW_residual}
\end{equation}
where $C_{XY}$ denotes the concurrence of the reduced state $\rho_{XY} = \mathrm{Tr}_Z(|\psi\rangle\langle\psi|)$ and $C_{A(BC)}$ that of the pure bipartition. Equivalently, and manifestly permutation-invariant, $\tau$ is four times the modulus of Cayley's hyperdeterminant of the $2\times 2 \times 2$ amplitude array,
\begin{equation}
\tau = 4 \left| d_1 - 2 d_2 + 4 d_3 \right|,
\label{eq:CKW_hyperdeterminant}
\end{equation}
with $d_1, d_2, d_3$ the standard degree-four polynomial invariants of the coefficients $a_{ijk}$ \cite{Coffman2000, miyake2003classification}. The three-tangle is invariant under local unitary operations and vanishes exactly when the state is biseparable or fully separable. However, this algebraic expression, while exact, is computationally intensive and physically opaque; computing it requires evaluating either several partial traces and concurrences or a degree-four invariant of eight complex amplitudes, offering little intuitive insight into the spatial structure of the multi-body correlations.

The Majorana stellar representation profoundly simplifies this quantification by transforming the abstract measure into a direct geometric observable. For symmetric three-qubit states, corresponding to a spin-$S=3/2$ system, the representation maps the state to exactly three stars on the Bloch sphere. Within this geometric framework, the three-tangle admits a purely closed-form expression determined by the normalized chordal distances $\tilde{d}_{kl}$ (or, equivalently, the scalar dot products) between the constituent stars, divided by the square of the Gram permanent \cite{kam2020three}:
\begin{equation}
\tau = \frac{16}{3} \, \frac{ \prod_{1 \leq k < l \leq 3} \tilde{d}_{kl}^2 }{ \left[ \operatorname{perm} G \right]^{2} }, \qquad
\tilde{d}_{kl}^2 = \frac{ |z_k - z_l|^2 }{ (1 + |z_k|^2)(1 + |z_l|^2) },
\label{eq:geometric_three_tangle_symmetric}
\end{equation}
where $z_1, z_2, z_3$ are the stereographic complex roots of the Majorana polynomial, $G_{kl} = \langle \mathbf{n}_k | \mathbf{n}_l \rangle$ is the stellar Gram matrix, and $\operatorname{perm}$ denotes the matrix permanent. Expressed equivalently via spatial projection, $\tilde{d}_{kl}^2 = (1 - \mathbf{n}_k \cdot \mathbf{n}_l)/2$. The numerator is the discriminant of the Majorana polynomial and carries all of the geometric content; the denominator is the norm of the symmetrized product state, which for a generic constellation is not constant and therefore cannot be absorbed into an overall proportionality factor. Because $\operatorname{perm} G > 0$ for any constellation, the vanishing locus of $\tau$ is governed entirely by the numerator. The multiplicative nature of this formula encodes a stringent physical requirement: genuine tripartite entanglement ($\tau > 0$) demands strict spatial separation among all constituents. If any two stars coincide (indicating a root degeneracy where $|z_k - z_l| = 0$), the entire product collapses to zero.

This direct mapping between spatial configuration and entanglement strength is most clearly demonstrated through the canonical equivalence classes. Consider the symmetric Greenberger-Horne-Zeilinger (GHZ) state, $|\mathrm{GHZ}\rangle = \frac{1}{\sqrt{2}} (|000\rangle + |111\rangle)$. Upon symmetrization, its Majorana constellation features three fully distinct roots forming an equilateral triangle on a great circle \cite{aulbach2010maximally}. This configuration maximizes the internal spatial separation, yielding pairwise dot products of $\mathbf{n}_k \cdot \mathbf{n}_l = -1/2$, which corresponds to $\tilde{d}_{kl}^2 = 3/4$. The geometric product therefore attains its maximum of $(3/4)^3 = 27/64$, while the Gram permanent evaluates to $\operatorname{perm} G = 3/2$; Eq.~\eqref{eq:geometric_three_tangle_symmetric} then returns the maximal three-tangle $\tau = (16/3)(27/64)/(3/2)^2 = 1$, as required. This visualizes the physical reality that maximal genuine tripartite entanglement requires maximal spatial dispersion of the constituent roots. 

In stark contrast, the symmetric W state, $|\mathrm{W}\rangle = \frac{1}{\sqrt{3}} (|001\rangle + |010\rangle + |100\rangle)$, belongs to the integer partition $(2,1)$ \cite{bastin2009operational}. Its constellation features one degenerate pair of coincident stars and one isolated star. Because at least one pairwise distance is exactly zero, the geometric product dictates that $\tau = 0$. It is mathematically critical to note that while the vanishing three-tangle accurately places the W state in a separate SLOCC orbit from the GHZ state, it does not imply the state is merely biseparable; the W state retains a distinct, non-CKW genuine multipartite entanglement structure that is geometrically characterized by its specific degeneracy pattern. Finally, for fully separable symmetric product states the constellation collapses to a single star of multiplicity three; note that within the symmetric subspace there are no pure states that are biseparable without being fully separable, so the intermediate case does not arise here. The geometric zeroing of the distance product in these regimes perfectly captures the total absence of tripartite correlations \cite{markham2011entanglement}. Through this continuous narrative, the Majorana picture successfully transforms the abstract quantification of tripartite entanglement into a rigorous, visual assessment of geometric degeneracies.

\subsection{Multipartite Genuine Entanglement: Higher Tangles, GM Concurrence, and Constellation Invariants}

While the three-tangle elegantly captures genuine tripartite entanglement, extending this quantitative rigor to arbitrary multipartite systems presents formidable algebraic challenges. For four or more qubits, no single canonical tangle exists. Instead, algebraic approaches must rely on increasingly complex invariants, such as the degree-24 Cayley hyperdeterminant for four qubits, or multi-index tensor contractions for general $N$-tangles \cite{miyake2003classification}. In the Majorana representation for symmetric $N$-qubit states, the situation improves but does not trivialize. No closed-form geometric counterpart of the three-tangle is known for general $N$, and we caution against the temptation to posit one: the three-qubit case is special in that the discriminant of a cubic is itself an $\mathrm{SL}(2,\mathbb{C})$ invariant of the right weight, a coincidence that does not persist. What does survive is the structural principle underlying Eq.~\eqref{eq:geometric_three_tangle_symmetric}---that every symmetric-state invariant is a ratio whose numerator is built from the stellar discriminant data and whose denominator is a power of the Gram permanent---together with the qualitative statement that genuine $N$-partite correlations require the constellation to avoid critical degeneracies. For practical purposes the distance product $\Pi$ of Eq.~\eqref{eq:full_distance_product} below serves as the workable witness.

Beyond abstract tangles, the geometric multipartite concurrence (GM concurrence) provides a computable generalization. Algebraically defined as the minimum deviation from separability across all possible bipartitions $A|B$, the measure is $C_{\rm GM} = \min_{A|B} \sqrt{2(1 - \operatorname{Tr} \rho_A^2)}$. Permutation symmetry collapses the exponentially many bipartitions into the $\lfloor N/2 \rfloor$ inequivalent cuts labelled only by the block size $k$, and the extension of Eq.~\eqref{eq:reduced_state_permanental} to blocks of size $k$ expresses each purity as a ratio of permanents of the stellar Gram matrix,
\begin{equation}
C_{\rm GM} = \min_{1 \leq k \leq \lfloor N/2 \rfloor} \sqrt{ 2 \left( 1 - \operatorname{Tr} \rho_k^2 \right) },
\label{eq:GM_symmetric_reduction}
\end{equation}
with $\rho_k$ the reduced state on any $k$ of the qubits. The geometric reading is the expected one: high genuine multipartite entanglement corresponds to constellations lacking distinct localized clusters, since a cluster of $k$ nearly coincident stars renders $\rho_k$ nearly pure and drives the corresponding term towards zero. We stress, however, that the permanents in these expressions are the exact content of the statement; the purity of a $k$-block is not determined by the mean intra-block dot product alone, and any expression claiming otherwise will fail on constellations with equal means but different overlap patterns.

This spatial intuition extends to a broader class of rotationally invariant functionals that serve as robust entanglement witnesses without necessitating full state tomography. The geometric measure of entanglement, defined originally as $E_g = -\log_2 \max_{\mathbf{n}} |\langle \mathbf{n}^{\otimes N} | \psi \rangle|^2$, translates into the Majorana geometry through the exact overlap identity
\begin{equation}
\left| \langle \mathbf{n}^{\otimes N} | \psi \rangle \right|^2 = \frac{ N! }{ \operatorname{perm} G } \prod_{k=1}^{N} \frac{1 + \mathbf{n} \cdot \mathbf{n}_k}{2},
\label{eq:coherent_overlap_identity}
\end{equation}
so that
\begin{equation}
E_g = -\log_2 \left[ \frac{N!}{\operatorname{perm} G} \max_{\mathbf{n}} \prod_{k=1}^{N} \frac{1 + \mathbf{n} \cdot \mathbf{n}_k}{2} \right].
\label{eq:geometric_entanglement_stars}
\end{equation}
Two features deserve emphasis. First, the maximization runs over all directions $\mathbf{n}$ on the sphere and is not, in general, attained at any of the stars themselves; restricting the search to the stellar positions overestimates $E_g$. Second, the Gram permanent again appears, and being constellation-dependent it cannot be discarded as an additive constant. Subject to these two provisos, the geometric reading is the expected one: $E_g$ approaches its maximum precisely when no direction on the sphere lies close to many stars at once, that is, when the constellation is maximally dispersed \cite{aulbach2010maximally}. Similarly, the strict product of all pairwise distances, 
\begin{equation}
\Pi = \prod_{1 \leq k < l \leq N} \tilde{d}_{kl}^2,
\label{eq:full_distance_product}
\end{equation}
serves as a definitive geometric witness for genuine $N$-partite entanglement. This global invariant vanishes if even a single pair of stars coincides, and it reaches its maximum for maximally spread configurations, conceptually bridging quantum entanglement optimization with the classical Thomson problem \cite{bjork2015extremal}. Complementing these continuous distance measures, the discrete stellar rank $r_\psi$---representing the number of distinct star positions---establishes a rigorous structural bound \cite{chabaud2020stellar}. While $r_\psi = 1$ dictates complete separability, the maximal condition $r_\psi = N$ acts as a necessary, though not solely sufficient, geometric prerequisite for maximal genuine entanglement. In environments subject to noise, these continuous and discrete bounds maintain their diagnostic utility by incorporating the radial shrinkage of effectively rescaled stars within the Bloch ball \cite{serrano2020majorana}.

By translating the problem of multipartite entanglement quantification into spatial distances and geometric invariants, the Majorana framework provides a computationally tractable and visually immediate correlation toolkit. The geometric architecture dictates that star separations and degeneracy patterns directly govern the presence and strength of genuine multipartite entanglement. These structural insights are applicable to quantum metrology, where spread constellations inform optimal sensing boundaries, and to quantum state engineering, where high-tangle configurations are synthesized via symmetric stellar placements. Having established the static geometric mapping of multi-body correlations, the subsequent sections extend this formalism to unitary dynamics, exploring how the trajectories and internal correlations of these constellations actively generate geometric phase anomalies.

\subsection{Entanglement Classification Beyond SLOCC: Symmetry-Based Criteria and Beyond-Symmetric Extensions}

While the classification of entanglement via Majorana degeneracies offers a complete and mathematically elegant framework within the totally symmetric subspace, physical quantum systems frequently deviate from perfect permutation symmetry. Such symmetry breaking may arise intentionally through specific resource state engineering or unintentionally via experimental imperfections and decoherence. Extending the geometric representation to asymmetric states or mixed sectors requires abandoning the bijective simplicity of a single univariate polynomial. Instead, the analysis transitions to projection techniques, symmetry-reduced invariants, and higher-dimensional algebro-geometric generalizations.

For generic non-symmetric states, the Majorana framework can be applied indirectly by projecting the state onto the totally symmetric subspace, $\rho_{\rm sym} = \mathcal{P}_{\rm sym} \rho \mathcal{P}_{\rm sym}$. The resulting constellation for $\rho_{\rm sym}$ acts as a stringent lower bound for genuine multipartite entanglement \cite{Guhne2009}. Specifically, if the projected symmetric constellation exhibits no exact degeneracies---meaning all extracted roots are distinct---the original state is mathematically guaranteed to possess genuine multipartite correlations, as a separable symmetric projection would necessarily collapse into a single degenerate macroscopic star. This projection technique is advantageous in noisy experimental setups where only collective measurements are accessible \cite{toth2014quantum}. Furthermore, entanglement monotones that are invariant under local unitaries can be permutation-averaged to form symmetry-adapted SLOCC invariants \cite{markham2011entanglement}. For states perturbed from exact symmetry, these rotationally invariant functions---such as permutation-averaged concurrences or symmetry-reduced negativities---serve as robust approximations of the full SLOCC class. Physically, the response of the symmetric constellation to non-symmetric perturbations offers a direct metric for entanglement robustness. Exact algebraic degeneracies are immediately lifted by asymmetric noise; thus, the rate at which coincident roots split provides a quantifiable sensitivity measure, demonstrating that highly degenerate, coherent-like states are intrinsically more fragile than generic, fully dispersed constellations.

Representing the full $2^N$-dimensional Hilbert space of general $N$-qubit states requires moving beyond a single set of $N$ points on the Riemann sphere. Extensions of this geometry typically follow two distinct trajectories. The local approach utilizes partial traces, $\rho_i = \operatorname{Tr}_{[N]\setminus i} |\psi\rangle\langle\psi|$, to map subsystems to individual Bloch vectors, deriving entanglement witnesses directly from their spatial correlations \cite{Guhne2009}. Conversely, the global approach employs the Segre embedding to map the complete tensor product space into the complex projective space $\mathbb{CP}^{2^N-1}$ \cite{Bengtsson2017}. This framework abandons univariate polynomials; states are instead characterized by the zero loci of multivariate polynomials defining algebraic varieties. While analytically exact, this exponential dimensional expansion sacrifices the geometric clarity of Majorana stars. To evaluate asymmetric states without confronting this full exponential complexity, established methods project the state onto its permutationally invariant component, constructing strict and computable lower bounds for genuine multipartite entanglement \cite{toth2010permutationally}.

An exact geometric representation for general three-qubit pure states is established via Ac\'in's canonical form \cite{kam2020three}. Through a sequence of $SL(2, \mathbb{C})$ local transformations, non-symmetric states are mapped to symmetric ones, leaving the three-tangle strictly invariant. The transformed state is characterized by three Majorana stars on a single latitude, allowing the genuine tripartite entanglement to be analytically quantified by their chordal distances. While this protocol bypasses the exponential complexity of high-dimensional embeddings, its scalability is fundamentally limited. For $N \ge 4$, the existence of antisymmetric SLOCC invariants---such as the degree-six invariant $F$ for five-qubit states---forbids universal symmetrization, restricting this explicit stellar mapping to specific entanglement classes.

Ultimately, the limitation of these beyond-symmetric extensions is the loss of a unique, faithful discrete point representation on a two-dimensional manifold. Projections and hybrid invariants, while yielding computable witnesses, cannot replicate the complete degeneracy-based SLOCC classification natively available in the symmetric sector. Nevertheless, advancing this geometric framework remains a critical directive for quantum information science. Future theoretical progress relies on formulating tensor generalizations of Majorana polynomials for multi-qudit systems \cite{giraud2015tensor, sanchezsoto2026quantum}, integrating Majorana geometry with spin-squeezing inequalities to construct noise-robust witnesses \cite{toth2014quantum}, and deploying machine-learning-assisted constellation analysis to infer entanglement structures from partial geometric data. Despite the algebro-geometric abstraction required for generic states, the Majorana perspective continues to provide essential diagnostic criteria for genuine multipartite entanglement where purely algebraic hyperdeterminants become computationally intractable.

\subsection{Comparison: Geometric vs. Algebraic Approaches}

The Majorana stellar representation and traditional algebraic methodologies provide complementary paradigms for analyzing quantum entanglement, as summarized in Table \ref{tab:algebraic_vs_geometric}. Algebraic approaches---relying on tensor contractions, partial traces, and polynomial invariants---establish the rigorous foundational metrics of quantum information theory \cite{Wilde2017, Bengtsson2017}. For small or low-dimensional systems, these algebraic definitions are unparalleled: metrics such as concurrence \cite{Wootters1998}, negativity \cite{Vidal2002}, and the Coffman-Kundu-Wootters (CKW) three-tangle \cite{Coffman2000} offer exact, computable quantifiers that carry direct operational meanings linked to resource tasks like entanglement distillation. Furthermore, algebraic invariants maintain universal applicability across arbitrary local dimensions and fully asymmetric partitions without necessitating approximations. However, this mathematical rigidity encounters a severe computational bottleneck as the particle number $N$ scales. The exponential growth of the Hilbert space dimension, coupled with the proliferation of inequivalent Stochastic Local Operations and Classical Communication (SLOCC) orbits, renders complete algebraic classification intractable for macroscopic systems \cite{Chitambar2019}. Even when higher-order invariants can be analytically derived---such as the degree-24 Cayley hyperdeterminant for four qubits \cite{miyake2003classification}---their convoluted tensor structures remain physically opaque. These algebraic expressions act as black boxes that compute a value but fail to reveal the underlying spatial structure or structural hierarchy of the multi-body correlations.

\begin{table*}[t]
\centering
\caption{Comparison of Quantum Properties: Traditional Algebraic Measures vs. Majorana Geometric Expressions. The geometric framework replaces complex algebraic invariants with functions of inter-star distances ($\tilde{d}_{kl}$), dot products, and degeneracy configurations, each normalized by the permanent of the stellar Gram matrix $G$.}
\label{tab:algebraic_vs_geometric}
\renewcommand{\arraystretch}{1.3}
\begin{tabular}{lll}
\hline\hline
\textbf{Quantum Property} & \textbf{Traditional / Algebraic Measure} & \textbf{Majorana Geometric Expression} \\
\hline
\textbf{SLOCC Classification} & Evaluated via tensor ranks and   & Mapped to degeneracy patterns \\
(Symmetric $N$-qubit)         & hyperdeterminants.               & (integer partitions of $N$). \\
\hline
\textbf{Bipartite Concurrence}& $C = 2|c_{00}c_{11}-c_{01}c_{10}|$ & Chordal distance normalized by the Gram \\
($N=2$ Symmetric)             &                                  & permanent: $C = \tilde{d}_{12}^{2}/(2-\tilde{d}_{12}^{2})$. \\
\hline
\textbf{Tripartite Three-Tangle} & CKW invariant hyperdeterminant  & Distance product over the squared Gram \\
($N=3$ Symmetric)                & $\tau = 4|d_1 - 2d_2 + 4d_3|$.  & permanent: $\tau = \tfrac{16}{3}\prod_{k<l}\tilde{d}_{kl}^2/[\operatorname{perm} G]^2$. \\
\hline
\textbf{Genuine Multipartite} & Higher-order hyperdeterminants & Necessary condition: no coincident stars, \\
\textbf{Entanglement (GME)}   & and entanglement monotones.    & i.e.\ distance product $\Pi > 0$. \\
\hline
\textbf{Geometric Entanglement} & $E_g = -\log_2 \max_{\mathbf{n}}$           & Maximized by configurations with \\
($E_g$)                         & $|\langle \mathbf{n}^{\otimes N} | \psi \rangle|^2$ & maximal inter-star spatial spread. \\
\hline
\textbf{Nonclassicality}     & Minimal number of coherent states & Quantified by Stellar rank $r_\psi$ \\
(Entanglement-independent)   & in a convex decomposition.        & and Anticoherence order $t$. \\
\hline\hline
\end{tabular}
\end{table*}

Conversely, the Majorana framework circumvents these computational limitations by translating opaque algebraic invariants into a discrete, spatially intuitive geometry on the Bloch sphere \cite{Majorana1932}. By encoding entanglement directly within star positions, chordal distances, and degeneracy patterns, the geometric approach provides exact closed-form expressions for symmetric states that completely bypass expensive partial-trace operations \cite{markham2011entanglement}. Visually, macroscopic degeneracies instantly classify SLOCC orbits \cite{bastin2009operational}, while maximal stellar dispersion uniquely identifies optimal genuine multipartite entanglement \cite{aulbach2010maximally, bjork2015extremal}. Computationally, extracting entanglement witnesses via polynomial root-finding and pairwise distance evaluations scales polynomially, maintaining tractability well beyond the limits of full algebraic state tomography \cite{Guhne2009}. Moreover, this spatial mapping offers unique insights into open-system dynamics; abstract noise and decoherence are translated into continuous geometric perturbations, allowing researchers to analytically track entanglement decay through the physical splitting of coincident roots and the radial shrinkage of the constellation \cite{serrano2020majorana}.

Despite these profound advantages, the geometric representation is fundamentally bounded by its restricted domain of applicability. The bijective elegance and exact closed-form scaling are strictly confined to the totally symmetric subspace. As discussed in the preceding subsections, when extending this framework to general asymmetric states, the requisite symmetric projections and averaged bounds sacrifice mathematical uniqueness. Consequently, the Majorana representation cannot formulate a universal, exact entanglement monotone for generic asymmetric multipartite systems. Additionally, while rescaled stars within the Bloch ball offer approximations for mixed states \cite{serrano2020majorana}, the mapping inherently loses the pristine, information-complete fidelity characterizing pure symmetric states. The quantification of nonclassicality via the stellar rank $r_\psi$ similarly provides robust lower bounds rather than exact equivalences across all asymmetric configurations \cite{chabaud2020stellar}.

Ultimately, the algebraic and geometric frameworks operate in a state of synergistic complementarity rather than competition. Algebraic invariants furnish the rigorous, operationally defined bedrock of entanglement theory, which the Majorana representation subsequently materializes into computable, visualizable geometric structures \cite{Bengtsson2017}. This dual approach is indispensable in large-$N$ symmetric systems, where algebraic hyperdeterminants fail but geometric distance products efficiently certify genuine multipartite correlations. In practical domains such as quantum metrology, where optimal sensing relies on highly entangled and widely dispersed states \cite{toth2014quantum}, macroscopic quantum simulation \cite{ebadi2021quantum}, and the diagnosis of anticoherent states for symmetric quantum codes \cite{giraud2015tensor, ouyang2014permutation}, researchers can benchmark algebraic invariants on small subsystems and seamlessly employ the constellation geometry to scale, visualize, and physically interpret the correlations across the global many-body state. Together, they constitute a unified and potent diagnostic lens for navigating the complexities of quantum information science.

\subsection{Entanglement Witnesses and Detection Using Constellation Features}
\label{subsec:entanglement_witnesses}

Entanglement witnesses provide a critical operational framework for certifying the presence of quantum correlations without incurring the exponential resource overhead of full state tomography \cite{Guhne2009}. Within the Majorana stellar representation, this detection paradigm translates into evaluating geometrically motivated observables derived directly from the constellation's spatial structure. For symmetric multi-qubit states, the most fundamental geometric witness derives from the constellation's degeneracy profile. If a constellation exhibits any exact degeneracy---defined by at least one pair of coincident stars---the state is strictly excluded from the generic SLOCC family \cite{bastin2009operational, markham2011entanglement}. What such a degeneracy certifies is the vanishing of the invariants built on the discriminant, the three-tangle of Eq.~\eqref{eq:geometric_three_tangle_symmetric} at $N=3$ and the distance product $\Pi$ in general; it does \emph{not} certify biseparability, as the W state discussed in Sec.~\ref{sec:entanglement} makes plain. Conversely, a constellation comprising $N$ distinct stars places the state in the generic family, on which these invariants are non-zero.

To formalize this binary degeneracy criterion into a continuous, computable metric, one can evaluate the rotationally invariant product of all pairwise normalized chordal distances $\Pi$, previously defined in Eq.~\eqref{eq:full_distance_product}. Expressed explicitly via stereographic coordinates, this invariant expands as $\Pi = \prod_{k < l} \frac{|z_k - z_l|^2}{(1 + |z_k|^2)(1 + |z_l|^2)}$. The mathematical structure of $\Pi$ dictates that it vanishes strictly if any two roots coincide, perfectly mirroring the binary degeneracy witness. A strictly positive $\Pi > 0$ serves as a necessary geometric condition for genuine $N$-partite entanglement. Furthermore, $\Pi$ reaches its absolute maximum for maximally spread configurations---often characterized as extremal quantum states or the ``Kings of Quantumness'' \cite{bjork2015extremal, goldberg2020extremal}---where the stars form highly symmetric arrangements akin to Platonic solids. Consequently, a large empirical value of $\Pi$ acts as a strong indicator of highly distributed multipartite correlations.

While evaluating $\Pi$ requires explicit knowledge of the polynomial roots---a process that becomes numerically ill-conditioned for large $N$---the underlying spatial dispersion can be witnessed efficiently through the state's multipole moments. This approach introduces the concept of anticoherence \cite{Zimba2006, giraud2015tensor}. A symmetric state is defined as $t$-anticoherent if all its spherical multipole moments up to rank $t$ vanish exactly:
\begin{equation}
\langle T^{(l)}_m \rangle = 0 \quad \text{for all } 1 \leq l \leq t, \; |m| \leq l ,
\label{eq:anticoherent_witness_condition}
\end{equation}
Here $\langle T^{(l)}_m \rangle$ is the multipole of the state as defined in Eq.~\eqref{eq:multipole_moment}, and not the average of $Y_{lm}$ over the star positions; the distinction is the one drawn after Eq.~\eqref{eq:multipole_from_Q} and matters already at $l=2$. Because these low-order multipole moments correspond directly to experimentally accessible collective spin observables (such as $\langle S_x \rangle, \langle S_x^2 \rangle$, etc.), anticoherence provides a highly pragmatic detection strategy \cite{romero2024multipoles}. If experimental data confirms that low-rank multipoles are statistically consistent with zero up to a substantial order $t$, the constellation must be highly dispersed. This effectively precludes the macroscopic clustering required for separable states, thereby establishing a lower bound on the genuine multipartite entanglement rank purely from collective measurements. 

This multipole expansion connects to structural bounds defined by the stellar rank $r_\psi$, which simply counts the number of distinct roots. While $r_\psi = N$ is the strict requirement for maximal genuine entanglement, experimental noise often necessitates a coarse-grained, continuous analog known as the effective stellar rank:
\begin{equation}
r_{\rm eff} = \frac{N^2}{\sum_k m_k^2},
\label{eq:effective_stellar_rank}
\end{equation}
where $m_k$ represents the algebraic multiplicity of the $k$-th distinct position on the Riemann sphere. In empirical reconstructions, $r_{\rm eff} \approx 1$ signifies macroscopic clustering and consequently low entanglement, whereas $r_{\rm eff} \approx N$ certifies a generic, highly entangled configuration. For highly symmetric configurations where the $N$ stars condense uniformly into $K$ distinct clusters of equal multiplicity $m_k = N/K$, the relation simplifies to $r_{\rm eff} = K$. Accordingly, systems exhibiting two- or three-fold symmetric clustering strictly yield $r_{\rm eff} = 2$ and $r_{\rm eff} = 3$, respectively, establishing $r_{\rm eff}$ as a continuous indicator capable of resolving intermediate geometric structures.

From an experimental perspective, this geometric toolkit translates directly into accessible diagnostic protocols, particularly for architectures boasting strong collective control, such as spinor Bose gases, cold atoms in optical cavities, or programmable quantum simulators with global interactions \cite{makela2007inert, stamper2013spinor, bernien2017probing}. Beyond direct multipole reconstruction, the geometric features of the constellation complement established variance-based witnesses. For instance, standard spin-squeezing parameters, which rely on the variance of collective operators, inherently measure the quadrupole moment of the Majorana constellation \cite{toth2014quantum, romero2024multipoles}. By integrating these established inequalities with higher-order constellation invariants and randomness tests against uniform spherical distributions, the Majorana representation offers a comprehensive suite of entanglement witnesses. These geometric criteria bypass the opacity of purely algebraic bounds, providing researchers with direct visual and quantitative insights into the correlation structure without demanding single-qubit experimental resolution.

\section{Dynamics and Geometric Phases in Majorana Representations}
\label{sec:phases}

The Majorana constellation provides more than a static geometric snapshot; it establishes a rigorous dynamical framework where quantum evolution is manifested as the collective motion of points on the Bloch sphere. Under unitary dynamics, the temporal progression of the state vector $|\psi(t)\rangle$ induces continuous trajectories $\mathbf{n}_k(t)$ of the Majorana stars. This section examines the governing equations for these trajectories, emphasizing the fundamental physical distinction between semiclassical independent motion and entanglement-generating coupled dynamics, and details how the resulting constellation motion provides a unique decomposition of the geometric phase.

\subsection{Unitary Evolution as Trajectories of Majorana Stars}

Consider a symmetric $N$-qubit state (spin-$S=N/2$) evolving under a time-dependent Hamiltonian $H(t)$. The state obeys the Schr\"odinger equation $i \hbar \partial_t |\psi(t)\rangle = H(t) |\psi(t)\rangle$, which in the Majorana representation translates into the time-evolution of the polynomial roots $\{z_k(t)\}$. The geometric nature of these stellar trajectories is strictly dictated by the algebraic structure of $H(t)$ and the underlying entanglement properties of the state. 

For a linear spin Hamiltonian of the form $H(t) = \mathbf{B}(t) \cdot \mathbf{S}$, the evolution of the Majorana constellation is rigorously governed by the decoupling of the multi-qubit system into independent $\mathrm{SU}(2)$ rotations. In this regime, each star $\mathbf{n}_k$ behaves as an autonomous spin-$1/2$ degree of freedom following the precession equation $\hbar\, d\mathbf{n}_k/dt = \mathbf{B}(t) \times \mathbf{n}_k$. In the stereographic plane, this motion is described by a complex Riccati equation of the form $i \hbar \dot{z}_k = \omega_+(t) + \omega_z(t) z_k + \omega_-(t) z_k^2$, where the time-dependent coefficients map directly to the physical magnetic field components as $\omega_+ = \frac{1}{2}(B_x + i B_y)$, $\omega_z = -B_z$, and $\omega_- = -\frac{1}{2}(B_x - i B_y)$. Crucially, because these equations are fully decoupled, a linear Hamiltonian preserves the algebraic multiplicities of the roots. Any initial degeneracies in the constellation remain intact, reflecting the fact that pure $\mathrm{SU}(2)$ rotations cannot alter the SLOCC entanglement class or the integer partition of the state.

This intuitive mapping to classical dynamics is rigorously grounded in the spin coherent state path integral formulation \cite{radcliffe1971some, arecchi1972atomic}. When the system is localized within the coherent state manifold, the stereographic roots $z_k$ serve as canonical phase-space variables on the K\"ahler manifold of the Riemann sphere. The dynamics of each star can then be derived from the classical effective Hamiltonian $\mathcal{H}(z_k, z_k^*) = \langle \psi | H | \psi \rangle$. Respecting the Fubini-Study metric of the spherical phase space, the Euler-Lagrange equations yield the exact classical equation of motion for a spin-$S$ component:
\begin{equation}
i \hbar \dot{z}_k = \frac{(1+|z_k|^2)^2}{2S} \frac{\partial \mathcal{H}}{\partial z_k^*},
\label{eq:effective_star_dynamics}
\end{equation}
which effectively reduces to $i \hbar \dot{z}_k = (1+|z_k|^2)^2 \frac{\partial \mathcal{H}}{\partial z_k^*}$ for the dynamics of an individual $S=1/2$ Majorana star. While this geometry perfectly governs the independent trajectories of a fully separable state, this decoupled paradigm breaks down for highly entangled quantum states driven by nonlinear interactions \cite{leboeuf1990chaos, bogomolny1996quantum}.

Under nonlinear evolution---such as the collective $S_z^2$ interaction characteristic of the Lipkin-Meshkov-Glick (LMG) model in the convention of Eq.~\eqref{eq:bosonic_Josephson} \cite{ribeiro2007thermodynamical, ribeiro2008exact} or its squeezing-enhanced generalizations \cite{kam2025three, kam2026classical}---the exact quantum trajectories of the roots deviate strictly from independent classical motion. The roots $\{z_k(t)\}$ become coupled through the time-varying coefficients of the Majorana polynomial. The velocity $\dot{z}_k$ of any given root depends intimately on the instantaneous spatial configuration of all other stars in the constellation. This global algebraic coupling is the precise geometric manifestation of entanglement generation. Nonlinearity lifts the initial degeneracies of the system, causing coincident roots to split and evolve along distinct trajectories. This algebraic splitting geometrically reflects the generation of multipartite entanglement as the state departs from the fully separable orbit.

This trajectory-based formalism naturally extends to the adiabatic limit. For a cyclic evolution in parameter space, the accumulated geometric Berry phase possesses a unique Majorana decomposition \cite{Hannay1998a}. The total geometric phase naturally partitions into two distinct physical contributions: a collective solid-angle term representing the global, rigid transport of the constellation, and an additional internal phase contribution. This internal term is governed by the relative motion of the stars and is expressible as a sum over the pairwise inter-star distances. These chordal distances precisely quantify multipartite entanglement. Consequently, the Majorana representation demonstrates that the geometric phase of a symmetric system acts as a dynamical probe of its intrinsic entanglement structure.

\subsection{Berry Phase for Spin Systems: Standard Solid-Angle Contribution}

The geometric Berry phase emerges naturally when a quantum state undergoes cyclic adiabatic evolution within a parameter space \cite{berry1984quantal}. Consider a spin system driven by a time-dependent Hamiltonian $H(\mathbf{R}(t))$, parametrized by a set of slowly varying macroscopic control variables $\mathbf{R}(t)$, such as the magnitude and orientation of an external magnetic field. Provided the evolution is sufficiently adiabatic, the system's state $|\psi(t)\rangle$ faithfully tracks an instantaneous eigenstate $|n(\mathbf{R}(t))\rangle$ of the Hamiltonian. When the parameters trace a closed loop $\mathcal{C}$ over a period $T$, such that $\mathbf{R}(T) = \mathbf{R}(0)$, the total phase accumulated by the quantum state is formally decomposed into two distinct components:
\begin{equation}
\gamma_{\rm total} = -\frac{1}{\hbar} \int_0^T \langle \psi(t) | H(t) | \psi(t) \rangle \, dt + \gamma,
\label{eq:total_phase}
\end{equation}
where the first integral represents the standard dynamical phase, and the second term, $\gamma$, denotes the geometric Berry phase. This geometric contribution is intrinsically independent of the temporal rate of the parameter variation and is determined entirely by the path integral of the Berry connection:
\begin{equation}
\gamma = i \oint_{\mathcal{C}} \langle n(\mathbf{R}) | \nabla_{\mathbf{R}} n(\mathbf{R}) \rangle \cdot d\mathbf{R}.
\label{eq:Berry_phase_integral}
\end{equation}

For the specialized case of spin-$S$ coherent states, this line integral reduces to a simple geometric observable \cite{radcliffe1971some}. In the Majorana framework, a coherent state $|\mathbf{n}\rangle$ corresponds to a maximally degenerate constellation, where all $2S$ stars coalesce at a single spatial direction $\mathbf{n}$ on the Bloch sphere. Because this state mimics the behavior of a classical magnetic moment, its associated Berry connection maps precisely to the vector potential of a magnetic monopole of charge $S$ situated at the origin of parameter space \cite{berry1984quantal}. Expressed in standard spherical coordinates $(\theta, \phi)$, this effective monopole connection is
\begin{equation}
\mathbf{A}(\mathbf{n}) = S \frac{1 - \cos\theta}{\sin\theta} \, \hat{\phi}.
\label{eq:monopole_connection}
\end{equation}
Applying Stokes' theorem, the Berry phase accumulated along the closed trajectory on the Bloch sphere evaluates directly to the familiar solid-angle formula:
\begin{equation}
\gamma = -S \, \Omega(\mathcal{C}),
\label{eq:solid_angle_berry}
\end{equation}
where $\Omega(\mathcal{C})$ is the solid angle subtended by the loop $\mathcal{C}$ at the center of the sphere. This monopole-like topology dictates that, for any path not crossing the Dirac-string singularity at the south pole, the geometric phase is strictly path-independent modulo $2\pi$.

The Majorana representation offers a profoundly intuitive reinterpretation of this standard solid-angle result \cite{Hannay1998a}. When the coherent state evolves adiabatically, its entire maximally degenerate constellation moves rigidly across the sphere. The total geometric phase can thus be viewed as the additive accumulation of the microscopic geometric phases acquired by each constituent star. Since each of the $2S$ coincident spin-$1/2$ stars subtends the exact same solid angle $\Omega$, and each contributes a fundamental geometric phase of $-\Omega/2$, the macroscopic Berry phase is simply the linear sum:
\begin{equation}
\gamma_{\rm standard} = - \sum_{k=1}^{2S} \frac{\Omega_k}{2} = -S \, \Omega.
\label{eq:coherent_berry_from_stars}
\end{equation}
This baseline rigid-body contribution is universally present whenever the macroscopic orientation of the spin tracks a closed loop in parameter space. It depends exclusively on the external geometry of the classical path and remains completely blind to any internal quantum structure. 

However, this simple additive paradigm is disrupted for generic, non-coherent quantum states. When a state possesses genuine multipartite entanglement, its corresponding Majorana constellation is explicitly non-degenerate. As the macroscopic parameters of the Hamiltonian are cycled, the internal roots of the state do not simply undergo a rigid global rotation. Instead, the geometric response of the entangled state forces the constituent stars $\mathbf{n}_k(t)$ to execute distinct, coupled trajectories on the sphere \cite{Hannay1998a}. The total Berry phase must therefore encompass not only the macroscopic solid-angle baseline but also additional internal geometry anomalies arising from the relative motions and spatial correlations among the individual stars. This strict geometric divergence from the standard solid-angle formula ultimately exposes the deep analytical link between non-trivial geometric phases and the presence of multipartite entanglement.

\subsection{Internal Geometry Contribution: Pair Correlations and Twist}

The additive paradigm of the solid-angle formula is valid only when the Majorana constellation undergoes a rigid macroscopic rotation. For generic non-coherent states, the adiabatic cyclic evolution yields a geometric phase that deviates from this baseline. As the system parameters are driven, the non-degenerate roots of an entangled state execute distinct trajectories. Building on Hannay's stellar formulation of the spin Berry phase \cite{Hannay1998a, Hannay1998b} and on the loop representation of Liu and Fu \cite{liu2014berry}, Kam and Liu \cite{kam2021berry} showed that this deviation is resolved exactly by a geometric identity. For a closed adiabatic loop, the geometric phase $\gamma$ decomposes strictly into individual solid angles and pairwise inter-star correlations:
\begin{equation}
\gamma = -\frac{1}{2} \sum_{k=1}^{2S} \Omega_k + \sum_{1 \leq k < l \leq 2S} \beta_{kl},
\label{eq:berry_decomposition}
\end{equation}
where $\Omega_k$ represents the independent solid angle subtended by the closed trajectory of the $k$-th star $\mathbf{n}_k(t)$. The correction terms, $\beta_{kl}$, are the pairwise correlation phases arising specifically from the relative spatial motions between stars $k$ and $l$.

The physical origin of the internal phase $\beta_{kl}$ is directly linked to the global geometry of the stellar manifold \cite{kam2021berry}, where this pairwise phase corresponds to the geometric self-rotation, or twist, of the Majorana constellation. For any pair of stars $\mathbf{n}_k$ and $\mathbf{n}_l$, one defines the barycenter vector $\mathbf{R}_{kl} \equiv \mathbf{n}_k + \mathbf{n}_l$ and the relative position vector $\mathbf{r}_{kl} \equiv \mathbf{n}_k - \mathbf{n}_l$. The differential twist angle $d\varphi_{kl}$ is defined by the rotation of the normalized relative vector $\hat{\mathbf{r}}_{kl}$ around the normalized barycenter axis $\hat{\mathbf{R}}_{kl}$:
\begin{equation}
d\varphi_{kl} = \hat{\mathbf{R}}_{kl} \cdot (\hat{\mathbf{r}}_{kl} \times d\hat{\mathbf{r}}_{kl}).
\label{eq:differential_twist_def}
\end{equation}
This differential form evaluates the structural twist of the constellation. It vanishes in exactly two kinematic situations \cite{kam2021berry}: when $d\mathbf{n}_k = d\mathbf{n}_l$, so that the great circle joining the pair undergoes a rigid rotation with no self-rotation; and when $d(\mathbf{n}_k - \mathbf{n}_l)$ is orthogonal to $\mathbf{n}_k \wedge \mathbf{n}_l$, so that the two stars slide along the great circle joining them. Coincident stars, $\mathbf{n}_k = \mathbf{n}_l$, are a degenerate instance of the second case; note that $\hat{\mathbf{r}}_{kl}$ is then undefined and Eq.~\eqref{eq:differential_twist_def} must be read as the limit, in which the pairwise contribution to the phase vanishes because the weight $w_{kl}$ introduced below vanishes together with the pair separation.

The pairwise geometric phase $\beta_{kl}$ accumulated over a full adiabatic cycle is not the bare twist, but the twist integral weighted by a factor that depends only on the instantaneous separation of the pair:
\begin{equation}
\beta_{kl} = \oint w_{kl} \, d\varphi_{kl}, \qquad
w_{kl} = - \cos \Theta_{kl} \, \frac{ \partial \ln A_N }{ \partial \ln \left( 2 \sin^2 \Theta_{kl} \right) },
\label{eq:weighted_twist_integral}
\end{equation}
where $2 \Theta_{kl} \equiv \arccos ( \mathbf{n}_k \cdot \mathbf{n}_l )$ is the angular separation of the pair, so that $\sin \Theta_{kl} = \tilde{d}_{kl}$, and $A_N$ is the normalization of the symmetrized product state \cite{kam2021berry}. That normalization is precisely the permanent of the stellar Gram matrix,
\begin{equation}
A_N = \sum_{\sigma} \prod_{k=1}^{N} \langle \mathbf{n}_k | \mathbf{n}_{\sigma(k)} \rangle = \operatorname{perm} G ,
\label{eq:AN_is_permanent}
\end{equation}
the same object that normalizes Eqs.~\eqref{eq:geometric_concurrence_N2}, \eqref{eq:reduced_state_permanental}, \eqref{eq:geometric_three_tangle_symmetric} and \eqref{eq:geometric_entanglement_stars}. Because $G_{kl} = \langle \mathbf{n}_k | \mathbf{n}_l \rangle$ and $|G_{kl}|^2 = (1 + \mathbf{n}_k \cdot \mathbf{n}_l)/2$, the permanent is a real symmetric function of the pairwise dot products alone \cite{wei2010matrix}, and $w_{kl}$ is therefore a genuine geometric weight rather than a state-dependent phase. It is the appearance of $\operatorname{perm} G$ here, and not any separate combinatorial construction, that ties the internal geometric phase to the same normalization that controls every exact entanglement expression in Sec.~\ref{sec:entanglement}.

This explicit dependence establishes an analytical bridge between the dynamical geometric phase and the static entanglement structure. For states with high degeneracy, such as separable or low-rank biseparable states, the stars cluster together, so that both $d\varphi_{kl}$ and $w_{kl}$ are small and the internal phase is suppressed. In contrast, states whose stars are distinct and widely dispersed \cite{bjork2015extremal, goldberg2020extremal} admit large structural twist and large weights. The magnitude of the total internal correction $\left| \sum_{k<l} \beta_{kl} \right|$ is therefore a candidate dynamical witness of non-coherent structure \cite{kam2021berry}. How far it can be sharpened into a quantitative witness of \emph{genuine multipartite} entanglement is a separate question, taken up in the following subsection, where the known exact relations are stated together with an explicit account of what remains open.

\subsection{Berry Phases of Higher Spins Due to Constellation Structure}

For higher-spin systems ($S > 1/2$) or equivalent symmetric multi-qubit states, the Berry phase acquired during adiabatic cyclic evolution exhibits deviations that cannot be captured by the standard rigid-body rotation \cite{Hannay1998a}. These geometric anomalies arise directly from the internal spatial structure of the Majorana constellation and are intimately governed by the state's entanglement properties. To formalize this, the total Berry phase for a spin-$S$ state undergoing a cyclic adiabatic change in parameter space can be partitioned into a macroscopic baseline and an internal geometric correction \cite{berry1984quantal, Hannay1998b}:
\begin{equation}
\gamma = -\frac{1}{2} \sum_{k=1}^{2S} \Omega_k + \Delta \gamma .
\label{eq:higher_spin_berry_decomposition}
\end{equation}
Here the baseline is the sum of the independent single-star contributions. Only when the constellation is transported rigidly, so that every star sweeps the same solid angle $\Omega$, does this baseline reduce to the familiar monopole term $-S\,\Omega$ of Eq.~\eqref{eq:solid_angle_berry}; for a generic constellation the individual $\Omega_k$ differ and no single macroscopic solid angle represents them. It is important to take the baseline in the form written in Eq.~\eqref{eq:higher_spin_berry_decomposition} rather than as $-S\,\Omega$, since only then is the remainder $\Delta\gamma$ identical to the pairwise sum of Eq.~\eqref{eq:berry_decomposition}. The correction term $\Delta \gamma$ constitutes the anomalous phase, representing the geometric contribution unique to the non-coherent nature of the quantum state.

Because the Majorana representation admits an exact geometric decomposition of the phase, the anomalous contribution $\Delta \gamma$ is strictly equivalent to the sum of the pairwise internal correlation phases established in the preceding analysis: $\Delta \gamma = \sum_{k < l} \beta_{kl}$. This exact equivalence dictates that the anomalous phase is non-zero if and only if the constituent stars possess relative internal degrees of freedom. The magnitude of $\Delta \gamma$ is therefore highly sensitive to the initial constellation geometry. For a pure spin-$S$ coherent state, characterized by a stellar rank of $r_\psi = 1$, all $2S$ stars coalesce at a single point on the Bloch sphere. Because coincident stars moving along the same trajectory generate zero differential twist ($d\varphi_{kl} = 0$), the anomalous phase vanishes identically ($\Delta \gamma = 0$), recovering the pure solid-angle theorem. 

As the state departs from the coherent orbit and multipartite entanglement is generated, the constellation undergoes spatial dispersion, activating the anomalous phase. For highly degenerate states featuring macroscopic local clusters, the relative trajectories between stars within the same cluster remain minimal, leading to a suppressed anomalous contribution. Conversely, for states where all stars are distinct and maximally dispersed---such as anticoherent states \cite{Zimba2006, giraud2015tensor} or the highly structured ``Kings of Quantumness'' \cite{bjork2015extremal, goldberg2020extremal}---the constellation possesses maximal internal kinematic freedom. In these configurations, the differential velocities and the associated twist integrals are significantly amplified. For purely random constellations the pairwise contributions carry uncorrelated signs and largely cancel, so that the anomalous phase is expected to be a subleading correction in the thermodynamic limit; structured symmetric geometries, by contrast, keep the twists coherent and can generate a $\Delta \gamma$ comparable to the baseline solid-angle phase. We are not aware of a derivation fixing the exponent of the random-constellation suppression, and refrain from quoting one.

The rigorous mapping between the constellation structure and the anomalous Berry phase transitions the Majorana representation from a descriptive framework into an operational resource. By isolating $\Delta \gamma$ through phase spectroscopy---measuring the accumulated phase deviations across various adiabatic cycles---experimentalists can directly probe the internal quantum correlations without requiring full state tomography. A non-zero anomalous phase certifies that the constellation is not fully degenerate, and therefore that the state is not spin coherent. We are careful not to claim more: turning $\Delta\gamma$ into a quantitative witness of genuine multipartite entanglement would require precisely the relation between the internal twist and an entanglement measure that, as noted above, has not been established. Furthermore, this constellation-dependent anomaly finds immediate utility in quantum metrology \cite{toth2014quantum}. Highly dispersed states exhibiting massive anomalous phase accumulation under small parameter variations provide enhanced sensitivity for parameter estimation tasks, such as precision magnetometry. Additionally, in systems harboring symmetry-protected phases, these internal anomalous geometric terms can manifest as quantized topological jumps or non-Abelian statistics. Ultimately, the higher-spin Berry phase reveals a fundamental geometric interplay: the anomalous phase deviations observed in macroscopic adiabatic control are macroscopic manifestations of the microscopic pairwise correlations and entanglement weights encoded within the Majorana constellation \cite{kam2021berry}.

\subsection{Relation Between Berry Phase Anomalies and Entanglement Measures}

The anomalous contribution to the Berry phase, defined by the exact internal geometric sum $\Delta \gamma = \sum_{k<l} \beta_{kl}$, provides a direct dynamical probe for the static entanglement structure encoded within the Majorana constellation. Because these pairwise geometric phases are driven by the kinematic twist and weighted by the instantaneous chordal distances between stars, $\Delta \gamma$ is sensitive to the overall degeneracy and spatial dispersion of the state. For separable or biseparable states characterized by high algebraic degeneracy, the coincident stars form macroscopic clusters that undergo rigid collective precession. This geometric rigidity actively suppresses any relative internal motion, yielding a vanishingly small anomalous phase. In stark contrast, generic highly entangled states exhibiting maximal stellar rank ($r_\psi = 2S$) are defined by spatially distinct and maximally dispersed stars. This anticoherent structural topology maximizes both the differential trajectories and the combinatorial integration weights, generating a macroscopic $\Delta \gamma$ that can rival the baseline solid-angle phase. Consequently, measuring a non-zero deviation from the expected monopole phase provides an observable, dynamical witness for genuine multipartite entanglement. 

Two exact statements are available, and it is worth being precise about their scope, because the temptation to write down a general proportionality between $\Delta \gamma$ and a single entanglement monotone is strong and, as far as is currently known, unfounded.

The first is exact and complete, and concerns spin-1. Here the constellation has two stars, the weight of Eq.~\eqref{eq:weighted_twist_integral} evaluates in closed form, and the anomalous phase is the self-rotation integral weighted by two entanglement measures simultaneously \cite{kam2021berry}:
\begin{equation}
\Delta \gamma = \oint C \sqrt{1 - E_B} \; d\varphi ,
\label{eq:spin1_phase_entanglement}
\end{equation}
where $C$ is the concurrence of Eq.~\eqref{eq:geometric_concurrence_N2} and
\begin{equation}
E_B = 1 - \frac{1}{N^2} \left| \sum_{k=1}^{N} \mathbf{n}_k \right|^2
\label{eq:barycentric_measure}
\end{equation}
is the barycentric measure of entanglement \cite{ganczarek2012barycentric}, which for two stars satisfies $\sqrt{1 - E_B} = \cos \Theta_{12}$. Equation~\eqref{eq:spin1_phase_entanglement} is not a scaling estimate: it is an identity, and it makes the anomalous phase of a symmetric two-qubit state a directly interpretable entanglement quantity.

The second concerns symmetric three-qubit states with three distinct stars, where $A_3 = 3 + \sum_{k<l} \mathbf{n}_k \cdot \mathbf{n}_l$ and the weights reduce to $w_{kl} = \sin^2 \Theta_{kl} \cos \Theta_{kl} / \sum_{k<l} \cos^2 \Theta_{kl}$. Their product is bounded by the three-tangle of Eq.~\eqref{eq:geometric_three_tangle_symmetric} together with the barycentric measure \cite{kam2021berry}:
\begin{equation}
\prod_{1 \leq k < l \leq 3} w_{kl} \; \leq \; \frac{1}{4} \, \tau \left( 1 - \tfrac{3}{4} E_B \right)^{1/2} ,
\label{eq:three_tangle_weight_bound}
\end{equation}
with equality precisely when the three pairwise separations are equal, the bound being a restatement of the arithmetic-geometric mean inequality applied to $\cos^2 \Theta_{kl}$. For that equal-angle family, and for a rigid rotation of the constellation, the anomalous phase closes into an exact expression,
\begin{equation}
\Delta \gamma = \tfrac{1}{2} \left( \tau E_B \right)^{1/4} \sum_{1 \leq k < l \leq 3} \varphi_{kl} ,
\label{eq:equal_angle_phase}
\end{equation}
in which the quarter power, and the joint appearance of $\tau$ and $E_B$, are both essential: no relation of the form $|\Delta \gamma| \propto \sqrt{\tau}$ or $|\Delta \gamma| \propto \prod_{k<l} (1 - \mathbf{n}_k \cdot \mathbf{n}_l)$ holds, since by Eq.~\eqref{eq:geometric_three_tangle_symmetric} the latter product is proportional to $\tau \left[ \operatorname{perm} G \right]^2$ and therefore carries a different power of the tangle as well as a residual dependence on the Gram permanent.

Under Eq.~\eqref{eq:equal_angle_phase}, GHZ-like symmetric states, whose constellation is an equilateral triangle on a great circle, sit at the maximum of the equal-angle family and accumulate the largest anomalous phase for a given twist; W-like states, carrying a twofold degeneracy, have a vanishing tangle and correspondingly suppressed internal contribution.

Beyond these two cases the situation should be stated plainly. The decomposition of Eq.~\eqref{eq:berry_decomposition} continues to hold for every $2S$, and each weight $w_{kl}$ remains a well-defined function of the pair separation through Eq.~\eqref{eq:weighted_twist_integral}; what is not known is a closed relation between the collection $\{ w_{kl} \}$ and any standard measure of genuine multipartite entanglement for $S > 3/2$ \cite{kam2021berry}. The question has been approached from two directions without being settled. Bruno mapped the geometric phase of a spin-$J$ state onto the Aharonov-Bohm phase acquired by the stars moving through a gas of Dirac strings, and obtained the connection, the curvature and the multipole moments directly in terms of the stellar coordinates \cite{bruno2012quantum}; Liu and Fu examined the relation between the Berry phase and entanglement in the stellar picture and established it in low-dimensional cases \cite{liu2014berry, liu2016berry}. What is still missing is a bound valid at arbitrary $S$. In particular, we are aware of no derivation of a bound of the form $|\Delta \gamma| \gtrsim E_g \, \Omega_{\text{loop}}$ relating the anomalous phase to the geometric measure of entanglement, and such an expression would in any case mix a logarithmic entanglement measure with a solid angle. Establishing whether the internal geometry of the constellation can characterize genuine multipartite entanglement at higher spin is, in our view, the most concrete open problem in this part of the subject. It is also a sharp one, in the sense that a modest calculation would settle it either way. Establishing a bound at $S=3/2$, where the three-tangle is available in closed form through Eq.~\eqref{eq:geometric_three_tangle_symmetric}, would open the general question; exhibiting instead two three-star constellations with equal three-tangle but unequal $\sum_{k<l} w_{kl}$ would close it, since the twist would then carry information that no entanglement monotone of the state can supply.

The translation of static correlation measures into these explicit dynamical phase equations opens concrete operational avenues for quantum information science and quantum metrology \cite{toth2014quantum}. By utilizing interferometric phase measurements to isolate $\Delta \gamma$---specifically by subtracting the standardized coherent-state solid-angle contribution derived from reference runs---experimentalists can execute phase-based entanglement certification without incurring the resource-intensive overhead of full quantum state tomography. Furthermore, by employing phase spectroscopy techniques that systematically vary the adiabatic loop size, sweep speed, or driving axis, researchers can invert Eq.~\eqref{eq:spin1_phase_entanglement} directly at $S=1$, and Eq.~\eqref{eq:equal_angle_phase} within the equal-angle three-star family, to extract the underlying constellation geometry. Crucially, because the internal geometric phase relies exclusively on relative spatial kinematics rather than absolute orientations, these anomalous terms exhibit intrinsic resilience to global collective decoherence channels. This dynamic entanglement-phase mapping solidifies the Majorana representation as an indispensable theoretical tool, linking the abstract algebra of quantum resource theories directly to observable geometric phases in adiabatic quantum control.

\subsection{Geodesics and Null-Phase Curves in the Constellation Picture}

The decompositions above answer the question of how geometric phase accumulates along a closed circuit. The complementary question---along which curves does it fail to accumulate at all---admits an equally clean answer in the stellar language. In the space of pure states, geodesics of the Fubini-Study metric are precisely the curves along which the Pancharatnam geometric phase vanishes. Null-phase curves (NPCs) generalize this notion: they are the curves along which the accumulated geometric phase vanishes without the requirement of being length-minimizing. For a two-level system the two notions coincide, since every NPC on the Bloch sphere is a great-circle arc; for $n > 2$ they separate, and the resulting freedom is what makes NPCs a useful resource for geometric-phase gate design.

The Majorana representation resolves this separation explicitly. Mittal, Akhilesh, and Goyal \cite{mittal2022geometric} showed that a geodesic connecting two states in an $n$-dimensional Hilbert space decomposes into $n-1$ curves traced by the individual stars on the Bloch sphere, and that each of these constituent curves is a circular segment whose radius and centre are fixed by the inner product of the two end states. Geodesity in the high-dimensional state space is therefore not inherited pointwise by the stars: the stars do not move along great circles unless the state is coherent, and the deviation of each stellar arc from a great circle is a direct geometric readout of how far the endpoints are from being mutually coherent. The same decomposition supplies a constructive recipe: by deforming the individual stellar arcs while holding the total accumulated phase at zero, one obtains infinitely many distinct NPCs joining any two given states in dimension $n > 2$.

This picture dovetails with the pairwise-twist decomposition of Eq.~\eqref{eq:berry_decomposition}. The vanishing of the geometric phase along an NPC corresponds to an exact cancellation between the individual solid-angle terms $\Omega_k$ swept by the stars and the internal twist contributions $\beta_{kl}$ generated by their relative motion. A coherent state, whose constellation is rigid, can only achieve this cancellation trivially, by sweeping zero net solid angle; an entangled constellation, by contrast, has enough internal degrees of freedom to cancel a nonzero solid angle against the internal twist. The abundance of NPCs in higher dimensions is thus a geometric restatement of the same internal structure that generates the anomalous phase discussed above.

\subsection{Hannay Angle and Semiclassical Limits}

The geometric phases accumulated by quantum states have a rigorous classical counterpart known as the Hannay angle \cite{hannay1985angle}. Introduced as the geometric holonomy acquired by the action-angle variables of an integrable Hamiltonian undergoing adiabatic cyclic variation, the Hannay angle provides crucial insight into the semiclassical limit of the Berry phase. For a classical system governed by a Hamiltonian $H(\mathbf{I}, \phi, \mathbf{R}(t))$, where $\mathbf{I}$ represents the conserved action variables, $\phi$ the conjugate angle variables, and $\mathbf{R}(t)$ the slowly varying macroscopic parameters, the Hannay angle $\gamma_H$ accumulated over a closed adiabatic loop $\mathcal{C}$ in parameter space is defined as
\begin{equation}
\gamma_H = \oint_{\mathcal{C}} \mathbf{A}_H(\mathbf{I}, \mathbf{R}) \cdot d\mathbf{R}.
\label{eq:Hannay_angle_definition}
\end{equation}
The classical geometric connection $\mathbf{A}_H$ is obtained from the angle-averaged gradient of the phase variables by a further derivative with respect to the action \cite{hannay1985angle}:
\begin{equation}
\mathbf{A}_H = - \frac{\partial}{\partial \mathbf{I}} \left\langle \frac{\partial \phi}{\partial \mathbf{R}} \right\rangle_{\phi},
\label{eq:Hannay_connection}
\end{equation}
where the average is taken over one complete classical period of $\phi$ at fixed action $\mathbf{I}$. That action derivative is the whole content of the correspondence: Berry showed \cite{berry1985classical} that the classical angle shift and the quantal phase are related by
\begin{equation}
\gamma_H = - \frac{\partial \gamma}{\partial n},
\label{eq:berry_hannay_relation}
\end{equation}
with $n$ the quantum number labelling the state, so that the two are emphatically \emph{not} equal. When applied to classical spin systems the phase space is the Bloch sphere, the action variable relates to the polar angle via $I = S(1 - \cos\theta)$, and the conjugate angle is the azimuthal coordinate $\phi$. For a classical macroscopic spin precessing adiabatically under a varying magnetic field, the Berry phase of the corresponding spin-$S$ coherent state is $\gamma = -S\,\Omega(\mathcal{C})$, and Eq.~\eqref{eq:berry_hannay_relation} with $n \leftrightarrow S$ therefore gives a Hannay angle of magnitude $\Omega(\mathcal{C})$---one solid angle, independent of $S$, rather than $S$ solid angles.

In the framework of the Majorana representation, this semiclassical limit corresponds to the large-$S$ regime, where the $2S$ discrete stars either become densely distributed over the sphere or follow well-defined classical trajectories. The effective semiclassical Hannay angle can thus be interpreted as the statistical mean of the solid angles swept by the individual constituents of the constellation:
\begin{equation}
\gamma_H \approx \frac{1}{2S} \sum_{k=1}^{2S} \Omega_k = \langle \Omega_k \rangle ,
\label{eq:semiclassical_Hannay_from_stars}
\end{equation}
which is consistent in both sign and magnitude with the single solid angle obtained above, since all $2S$ stars of a coherent state sweep the same $\Omega$ and $\gamma_H = -\partial\gamma/\partial n$ carries the opposite sign to $\gamma$. For non-uniform or highly structured constellations corresponding to non-classical states, the macroscopic angle deviates significantly from this naive independent-particle average. The internal geometry contributions of the quantum Berry phase---specifically the pairwise correlation terms $\beta_{kl}$---possess direct classical analogs. In the Hannay framework, these internal anomalies correspond to geometric corrections arising from differential precession rates and correlated relative angles between the trajectories of the effective spins \cite{Hannay1998a}.

The structural topology of the constellation ultimately dictates the survival of these geometric anomalies in the thermodynamic limit. For generic, Haar-random quantum states in the large-$S$ limit, the Majorana roots mimic the behavior of a chaotic classical gas, distributing themselves almost uniformly across the sphere \cite{hannay1996chaotic, bogomolny1996quantum}. In this ergodic regime, the chaotic, delocalized motion of the stars effectively washes out the pairwise correlations upon temporal averaging, leaving only small statistical fluctuations around the coherent-state result; as noted above, the scaling of these fluctuations with $S$ has not to our knowledge been established. In stark contrast, structured highly entangled states---such as near-anticoherent states or the ``Kings of Quantumness'' \cite{bjork2015extremal}---maintain non-ergodic, highly constrained classical trajectories. Their macroscopic phase accumulation retains persistent, structure-dependent anomalies that scale with the constellation's multipole anisotropy. Consequently, the genuine multipartite entanglement present in the symmetric subspace manifests semiclassically as non-trivial dynamical correlations among the classical trajectories of the individual stars. The Hannay angle thus serves as a rigorous analytical bridge linking quantum anomalous phases to classical adiabatic invariants, illustrating how multipartite entanglement and internal constellation geometry dictate phase accumulation across the quantum-to-classical transition.

\section{Applications of Majorana Stellar Representations}
\label{sec:applications}

The mathematical elegance of the Majorana stellar representation extends far beyond static state visualization; it constitutes an operational framework that provides practical, computable advantages across quantum information science and condensed-matter physics. By mapping the abstract algebraic properties of symmetric multi-qubit systems onto the projective geometry of the Bloch sphere, the constellation formalism inherently captures the invariant properties of the state. This geometric translation proves particularly indispensable in scenarios where identifying nonclassicality, characterizing multipartite entanglement, and optimizing symmetric control are paramount. The following sections detail how this trajectory-based, geometric framework solves concrete physical problems, beginning with its direct application to the optimization of quantum-enhanced sensors.

\subsection{Quantum Metrology: Precision Enhancement with Extremal Constellations}

Quantum metrology exploits the non-local correlations inherent in entangled systems to estimate classical parameters with a precision that supersedes classical limits \cite{giovannetti2004quantum, toth2014quantum, pezze2018quantum}. For a symmetric $N$-qubit ensemble (equivalent to a macroscopic spin-$S = N/2$) subjected to a collective local phase shift $\theta$ generated by the unitary rotation $U(\theta) = \exp(-i \theta S_n / \hbar)$ around an arbitrary axis $\hat{n}$, the ultimate statistical uncertainty is governed by the quantum Cram\'er-Rao bound \cite{braunstein1994statistical}:
\begin{equation}
\delta \theta \geq \frac{1}{\sqrt{\nu \mathcal{F}}},
\label{eq:metrology_precision_bound}
\end{equation}
where $\nu$ denotes the number of independent measurement trials and $\mathcal{F}$ is the quantum Fisher information. For fully separable, classical-like product states, the Fisher information is strictly bounded by the number of particles, $\mathcal{F} \leq N$, defining the standard quantum limit (SQL). Achieving the ultimate Heisenberg limit, where $\mathcal{F} \sim N^2$, rigorously requires the preparation and preservation of states exhibiting genuine multipartite entanglement.

For pure quantum states undergoing unitary parameter encoding, the quantum Fisher information evaluated at $\theta = 0$ is exactly proportional to the variance of the collective spin generator:
\begin{equation}
\mathcal{F}(\theta = 0) = 4 \operatorname{Var}(S_n) = 4 \left( \langle S_n^2 \rangle - \langle S_n \rangle^2 \right).
\label{eq:fisher_from_variance}
\end{equation}
The metrological challenge therefore reduces to identifying the specific quantum state---and the corresponding optimal sensing axis $\hat{n}$---that maximizes this spin variance. While spin-squeezed states \cite{wineland1992spin, ma2011quantum} and macroscopic Greenberger-Horne-Zeilinger (GHZ) superpositions are conventionally employed, the Majorana representation recasts this algebraic maximization into a direct geometric optimization problem \cite{sanchezsoto2026quantum}. 

By mapping the spin operators to the geometric multipoles of the constellation, the macroscopic expectation values are rigorously determined by the state's dipole vector $\mathbf{D}$ and quadrupole tensor $\mathcal{Q}$ \cite{romero2024multipoles, bjork2015extremal}:
\begin{equation}
\langle S_n \rangle = S D_n, \quad \langle S_n^2 \rangle = S\left(S - \frac{1}{2}\right) \mathcal{Q}_{nn} + \frac{S}{2}.
\label{eq:spin_variance_multipoles}
\end{equation}
Substituting these into Eq.~\eqref{eq:fisher_from_variance} dictates that maximizing the phase sensitivity is geometrically equivalent to maximizing the quadrupole spread $\mathcal{Q}_{nn}$ along the sensing axis, while simultaneously suppressing the macroscopic dipole moment $D_n$. 

Recent theoretical advancements formalize this mapping by converting the state preparation into a classical Thomson-like electrostatic problem \cite{saff1997distributing}: finding the minimum-energy configuration of $2S$ repelling point charges on a sphere. The optimal solutions are explicitly realized by extremal constellations \cite{goldberg2020extremal, martin2010multiqubit}. Among these, the maximally anticoherent states---often termed the ``Kings of Quantumness'' \cite{bjork2015extremal, giraud2015tensor}---distribute their Majorana roots symmetrically according to Platonic solid geometries, such as the tetrahedron ($N=4$) or octahedron ($N=6$). Because their dipole moment vanishes identically and their second-order moments are isotropic, $\langle S_n^2\rangle = S(S+1)/3$ for every axis, these configurations achieve the direction-independent sensitivity
\begin{equation}
\mathcal{F} = \tfrac{4}{3}\,S(S+1) = \tfrac{1}{3}N(N+2),
\label{eq:fisher_anticoherent}
\end{equation}
which approaches $N^2/3$ for large $N$ and gives $\mathcal{F}=8$ for the tetrahedron and $\mathcal{F}=16$ for the octahedron. These configurations are not merely theoretical: extremal Majorana constellations have been prepared and used for rotation estimation in a multiport interferometer \cite{bouchard2017quantum}. Alternatively, planar equidistant constellations, where the $2S$ stars are uniformly distributed along a single great circle \cite{bjork2015poincare}, force the dipole to zero along the normal axis while maximizing the projection of the quadrupole moment. Equally spaced stars on a great circle correspond to the Majorana polynomial $z^N - c$, that is, to the GHZ-type superposition $(|S,S\rangle + e^{i\varphi}|S,-S\rangle)/\sqrt{2}$; for rotations about the normal axis $\langle S_n\rangle = 0$ and $\langle S_n^2\rangle = S^2$, so that
\begin{equation}
\mathcal{F} = 4S^2 = N^2
\label{eq:fisher_planar}
\end{equation}
exactly, saturating the Heisenberg limit.

Furthermore, the constellation framework directly visualizes the mechanics of spin squeezing, a critical resource in precision Ramsey interferometry \cite{wineland1992spin}. Under nonlinear driving---such as the one-axis twisting \cite{kitagawa1993squeezed} or squeezing-enhanced protocols characteristic of generalized Lipkin-Meshkov-Glick (LMG) models \cite{kam2025three, kam2026classical}---the initially degenerate coherent state constellation is dynamically deformed. The degenerate cluster splits, elongating the star cloud along a specific geodesic and flattening it along the orthogonal axis. This geometric elongation directly corresponds to the enhancement of variance (anti-squeezing) in the extended direction and the reduction of variance (squeezing) in the conjugate sensing axis.

This geometric translation offers profound operational advantages for experimental design. The optimal sensing direction is immediately identified by visually inspecting the axis of maximal quadrupole extension within the constellation. Crucially, the geometry also dictates the system's noise resilience. While planar states (analogous to GHZ superpositions) achieve maximal sensitivity, their spatially extended roots render them notoriously fragile to local dephasing. In stark contrast, isotropic extremal states---such as the Platonic constellations---strike a different operational balance. Their sensitivity is the same along every axis, Eq.~\eqref{eq:fisher_anticoherent}, so no alignment of the probe with the signal is required, and because all multipoles below the anticoherence order vanish, the leading response to a weak perturbation is pushed to higher rank. It is worth stating plainly what this does \emph{not} amount to. A vanishing dipole does not make a state invariant under global rotations---the tetrahedral state is invariant only under the tetrahedral subgroup---and the symmetric subspace, being a single irreducible representation of $\mathrm{SU}(2)$, carries no decoherence-free subspace against collective noise at all: collective operators act irreducibly on it. Protection against \emph{local} errors is a separate matter, and is what the permutation-invariant code constructions of Sec.~\ref{sec:qec} exploit \cite{ouyang2014permutation, lidar1998decoherence}. Because attaining Heisenberg scaling strictly necessitates spatial dispersion, the constellation geometry acts as an automatic, observable certifier of the requisite multipartite entanglement \cite{sanchezsoto2026quantum}. Consequently, the Majorana representation transitions metrology from abstract Hilbert space algebra into concrete spatial engineering, guiding the preparation of highly correlated symmetric states in platforms ranging from spinor Bose-Einstein condensates \cite{stamper2013spinor} to cavity-coupled atomic ensembles \cite{hosten2016measurement} and programmable Rydberg atom arrays \cite{ebadi2021quantum}.

\subsection{Experimental Realities: State Preparation, Decoherence, and Geometric Tomography}

While the Majorana stellar representation provides an exact geometric framework for pure unitary evolution, practical quantum platforms face the stringent challenges of active state synthesis and inevitable environmental coupling. The constellation framework offers a unified guide for navigating these realities, transitioning smoothly from reverse-engineering target states to tracking their open-system degradation and diagnosing errors via efficient geometric tomography.

The experimental synthesis of a target symmetric $N$-qubit state poses an inverse geometric problem. Rather than extracting roots from a given state, experimentalists can dictate target stellar coordinates $\{z_k\}$---such as Platonic solid topologies or highly degenerate clusters---to engineer specific entanglement properties. By expanding the corresponding Majorana polynomial, its coefficients map deterministically to the required probability amplitudes in the standard Dicke basis $|S, m\rangle$. This algebraic inversion allows researchers to program state preparation sequences using platform-specific techniques, such as configuring sequential microwave gates in trapped-ion arrays \cite{haffner2005scalable} or employing adiabatic parameter ramping in spinor condensates.

Once synthesized, open-system dynamics degrade the fragile symmetric state, breaking the strict constraint of surface-bound points. To accommodate mixed density matrices $\rho(t)$, the representation is generalized by assigning time-dependent radial purity factors to the stars, yielding interior vectors $\mathbf{n}_k(t) = r_k(t) \, \hat{\mathbf{n}}_k(t)$ with $r_k \leq 1$ \cite{serrano2020majorana}. Under standard dissipative channels, decoherence manifests geometrically as a continuous radial collapse toward the origin ($r_k \to 0$), often accompanied by angular clustering. Because genuine multipartite entanglement relies on maximal spatial dispersion, its fragility is visually captured by the rapid radial decay of widely separated stars.

Restricting attention to the symmetric sector already removes the exponential cost of general $N$-qubit tomography: the state space is the $(2S+1)$-dimensional Dicke manifold, and Amiet and Weigert showed that the density matrix of a spin $s$ is fixed uniquely, and without redundancy, by the $4s(s+1)$ probabilities of obtaining the maximal projection along suitably chosen directions \cite{amiet1999reconstructing}. Reconstruction is therefore quadratic rather than exponential in $s$, and permutationally invariant tomography realises this scaling in the laboratory \cite{toth2010permutationally, schmied2011tomographic}.

What the constellation adds is not a further reduction in the number of settings but a change of coordinates. Expanding $\rho$ in irreducible spherical tensors,
\begin{equation}
\rho = \sum_{k=0}^{2S} \sum_{q=-k}^{k} \rho_{kq}\, T_{kq}, \qquad \rho_{kq} = \operatorname{Tr}\!\left(\rho\, T_{kq}^{\dagger}\right),
\label{eq:multipole_expansion}
\end{equation}
the multipoles $\rho_{kq}$ are directly accessible from collective spin measurements \cite{romero2024multipoles}, and for a pure state they determine the Majorana polynomial and hence the stars. It should be stressed that $\rho_{kq}$ is \emph{not} the average of $Y_{kq}$ over the star positions: the map from constellation to multipoles is nonlinear, and already for $|S{=}1,m{=}0\rangle$---two stars at antipodal poles---the rank-two moment is negative while the corresponding average of $Y_{20}$ over the two stars is positive. Truncating the expansion at low $k$ nevertheless yields a coarse but useful picture, and measured deviations in the low-rank moments map onto specific control errors or decoherence channels, allowing geometric entanglement witnesses to be evaluated without reconstructing $\rho$ in full.

Despite these operational advantages, extending the geometric approach to open quantum systems presents theoretical hurdles. The mapping of a mixed state to interior points is mathematically non-unique, depending heavily on the chosen convex support decomposition \cite{giraud2008classicality}. This non-uniqueness complicates the formulation of invariant mixed-state entanglement metrics. Furthermore, non-Markovian memory effects in strongly coupled environments \cite{breuer2009measure} can induce transient revivals of the constellation spread, challenging simple radial collapse heuristics. Nevertheless, by unifying target state synthesis, visual decoherence tracking, and multipole-based error diagnosis, the geometric framework stands as a remarkably practical tool for near-term quantum metrology and simulation.

\subsection{Quantum Computing and Error Correction: Symmetric Codes and Entanglement Structure}
\label{sec:qec}

Symmetric multi-qubit states and the Majorana stellar representation provide a natural, physically motivated framework for designing quantum error-correcting codes. This is particularly relevant for quantum platforms characterized by strong collective interactions or global uniform control, such as trapped ions with all-to-all Coulomb coupling, cavity-mediated superconducting qubits, or spinor Bose-Einstein condensates. By restricting the logical encoding to the totally symmetric subspace of $N$ physical qubits---an $(N+1)$-dimensional manifold equivalent to a macroscopic spin-$S = N/2$---the constellation formalism maps the abstract algebraic requirements of error correction into an intuitive geometric optimization problem on the Bloch sphere. 

The design of quantum error-correcting codes within this subspace elevates the Majorana representation from a descriptive tool to a rigorous algebraic framework \cite{ouyang2014permutation}. For a permutation-invariant (PI) quantum code to achieve a distance $d$ and correct $t = \lfloor(d - 1)/2\rfloor$ local errors, its logical basis states $|\psi_L^\alpha\rangle$ must satisfy the exact Knill-Laflamme conditions \cite{ouyang2014permutation, knill1997theory}. A qualification is needed before the geometry can be brought to bear. An error acting on a single physical qubit does not preserve permutation symmetry and therefore carries the state out of the symmetric subspace; it is not a collective spin operator, and the constellation of the corrupted state is not defined. What the Majorana picture does control is the \emph{diagonal} part of the Knill-Laflamme conditions, in which the error operators appear sandwiched between logical states that both lie in the symmetric sector. For a permutation-invariant code the relevant matrix elements $\langle \psi_L^\alpha | E_a^\dagger E_b | \psi_L^\beta\rangle$ reduce, after symmetrisation over which qubits are struck, to expectation values of collective spherical tensor operators $T_{kq}$ of rank at most $t$, and it is these that the geometry determines. Mathematically, the symmetric code detects local errors if and only if the multipole moments of the logical constellations are identical up to rank $t$, and their transition moments identically vanish:
\begin{equation}
\langle \psi_L^\alpha | T_{kq} | \psi_L^\beta \rangle = C_{k} \delta^{\alpha \beta}, \quad \text{for } 0 \leq k \leq t, \quad -k \leq q \leq k,
\label{eq:knill_laflamme_multipole}
\end{equation}
where $C_k$ are constants independent of the logical states \cite{scott2004multipartite}. This geometric equivalence guarantees that tracing out $t$ qubits leaves completely identical $(N-t)$-body reduced density matrices for all logical words, physically preventing the environment from distinguishing the encoded information.

In the analytical framework of the Majorana representation, this tensor requirement translates into a precise set of differential equations governing the root dynamics. An arbitrary symmetric state is fundamentally defined by its associated Majorana polynomial $P(z)$, or equivalently by the sign-free Bargmann polynomial $B(z) = \sum_{k=0}^{2S} \sqrt{\binom{2S}{k}}\, c_{k-S}\, z^{k}$ of Eq.~\eqref{eq:Bargmann_function}, which is the polynomial on which the collective spin algebra acts most simply \cite{leboeuf1990chaos}. On $B(z)$ the generators are exact first-order differential operators over the stereographic plane:
\begin{equation}
J_+ \mapsto 2S z - z^2 \frac{d}{dz}, \quad J_- \mapsto \frac{d}{dz}, \quad J_z \mapsto z \frac{d}{dz} - S.
\label{eq:differential_spin_operators}
\end{equation}
These are the generators of \emph{collective} rotations, and the differential picture accordingly describes how the constellation responds to a uniform perturbation of the whole ensemble---a global field miscalibration, say---rather than to an error on one qubit. The realisation is tied to $B(z)$: acting instead on the Majorana polynomial $P_\psi(z)$, whose coefficients carry the alternating factor $(-1)^{k}$, one finds $J_- \mapsto -\,d/dz$, and the remaining signs change accordingly. Within that scope it is exact: a collective bit-flip or phase-flip deforms the state by a structural shift in the roots of $P(z)$, and for a code to be insensitive to such perturbations at order $t$ the logical polynomials must be engineered so that the induced low-rank moments agree between logical words. 

This geometric coding principle is rigorously optimized by utilizing symmetric states that form spherical designs \cite{bjork2015poincare, delsarte1977spherical}. The precise relation between the design property of the point set and the anticoherence of the corresponding state has been studied in detail \cite{crann2010spherical, bannai2011note}, and the two notions should not be assumed interchangeable in general. If a logical state's stellar constellation constitutes a spherical $t$-design, the geometric moments of the \emph{point set} vanish identically up to order $t$:
\begin{equation}
\sum_{n=1}^{2S} Y_{kq}(\mathbf{n}_n) = 0, \quad \text{for } 1 \leq k \leq t,
\label{eq:spherical_design_multipoles}
\end{equation}
where $Y_{kq}$ are the spherical harmonics evaluated at the star coordinates. Configurations of this kind are the standard source of highly anticoherent states, though, as noted above, the design property of the point set and the anticoherence of the corresponding state are logically distinct and the implication must be checked case by case. For instance, in a minimal $3$-qubit ($S=3/2$) permutation-invariant code designed to detect local noise, the logical $|1\rangle_L$ state is conventionally encoded as a constellation of three stars distributed at exactly $120^\circ$ intervals along the Bloch equator. The defining polynomial is governed strictly by the roots of unity:
\begin{equation}
P_1(z) = z^3 - 1.
\label{eq:pi_code_logical_one}
\end{equation}
Because the constellation is invariant under the $C_3$ rotation about the polar axis together with a $\pi$ rotation about an equatorial axis, and this group leaves no vector invariant, the dipole moment ($k=1$) of the state vanishes. (The barycenter of the point set also vanishes here, but that is a separate statement; the two do not imply one another in general, as discussed after Eq.~\eqref{eq:average_spin}.) If a physical error occurs, the differential perturbations defined in Eq.~\eqref{eq:differential_spin_operators} geometrically fragment this perfect $C_3$ rotational symmetry. A collective measurement of the low-rank multipole moments will then register a non-zero dipole \cite{romero2024multipoles}. This diagnoses a collective miscalibration rather than performing syndrome extraction in the coding sense; genuine single-qubit syndromes require measurements that resolve the non-symmetric sectors and lie outside the constellation picture.

For macroscopic quantum systems ($N \gg 1$), these PI codes generalize to higher-distance constructions by encoding logical bases into intermediate stellar partitions governed by explicit entanglement metrics. The structural spread of the constellation---quantified algebraically by the geometric measure of entanglement or the product of chordal distances---strictly bounds the code's resilience. Because environmental noise processes target localized physical qubits, highly entangled logical states (spatially dispersed constellations) ensure that the localized tracing of spins extracts zero logical information. A degenerate cluster at the south pole is, by contrast, the dark state of collective amplitude damping, being annihilated by $J_-$; but a single dark state is not a decoherence-free subspace, and it carries no logical information, so this observation constrains code design rather than supplying a code \cite{lidar1998decoherence}. By strictly mapping logical quantum information to specific star topologies, and error syndromes to exact differential perturbations of the Majorana polynomial, this framework unifies abstract quantum error-correcting bounds with exactly computable geometric design principles.

\subsection{Condensed-Matter Analogs: Spin Chains, Bose-Einstein Condensates, and Thermalization}

The translation of the Majorana stellar representation from isolated few-qubit systems to many-body condensed-matter physics relies on the exact mapping of interacting degrees of freedom onto macroscopic collective spins. For systems exhibiting permutation symmetry or uniform all-to-all interactions, the $N$-particle Hilbert space strictly collapses into the symmetric $(N+1)$-dimensional Dicke manifold, naturally equivalent to a macroscopic spin-$S = N/2$ system. This geometric equivalence provides a powerful analytical framework for understanding strongly correlated phases, quantum phase transitions, and non-equilibrium dynamics across diverse condensed-matter platforms.

A paradigmatic realization of this mapping occurs in the study of two-mode Bose-Einstein condensates (BECs) and bosonic Josephson junctions \cite{makela2007inert}. Through the Schwinger boson representation \cite{schwinger1952angular}, the creation and annihilation operators of the two spatial modes, $a_1$ and $a_2$, generate the exact SU(2) collective spin algebra: $S_x = (a_1^\dagger a_2 + a_2^\dagger a_1)/2$, $S_y = -i(a_1^\dagger a_2 - a_2^\dagger a_1)/2$, and $S_z = (n_1 - n_2)/2$. The Hamiltonian governing the macroscopic tunneling and onsite atomic interactions is algebraically formulated as \cite{kam20172}:
\begin{equation}
H = - \frac{J}{2} (a_1^\dagger a_2 + a_2^\dagger a_1) + \frac{U}{2} (n_1 - n_2)^2 = -J S_x + 2 U S_z^2,
\label{eq:bosonic_Josephson}
\end{equation}
where $J$ characterizes the linear tunneling amplitude and $U$ denotes the nonlinear repulsive interaction strength. In the non-interacting, pure tunneling limit ($U \to 0$), the many-body ground state is a pure BEC. Geometrically, this is represented by a macroscopic coherent state where all $2S$ stars coalesce at a single spatial coordinate on the Bloch sphere, signifying zero entanglement between the particles. However, as the repulsive interaction $U$ increases, the system is driven into the number-squeezed regime. The kinetic energy term is dominated by the nonlinearity, forcing the highly degenerate stellar point to dynamically bifurcate. The constituent stars physically separate and distribute along the longitudinal axis, mapping the reduction of particle-number fluctuations directly to the strict geometric variance of the stellar projections. 

This geometric deformation under nonlinear driving is mathematically equivalent to the spontaneous symmetry breaking observed in fully connected quantum spin models, such as the Lipkin-Meshkov-Glick (LMG) model \cite{lipkin1965validity}. The ground state of a symmetric ferromagnetic interaction aligns all constituent spins, yielding a fully coincident Majorana constellation. However, introducing a collective $S_z^2$ interaction alongside a transverse magnetic field triggers a quantum phase transition (QPT). In the thermodynamic limit ($S \to \infty$), traversing the critical point of the QPT forces the single degenerate root of the Majorana polynomial to continuously split into a macroscopic ring or a highly structured arc of distinct stars on the complex plane \cite{ribeiro2007thermodynamical, ribeiro2008exact}. This spatial fragmentation inherently signals the macroscopic generation of genuine multipartite entanglement. The Wineland spin-squeezing parameter, $\xi^2 = N (\Delta S_\perp)^2 / |\langle \mathbf{S} \rangle|^2$ \cite{wineland1992spin}---to be distinguished from the Kitagawa-Ueda parameter $\xi_S^2 = 4 (\Delta S_\perp)^2_{\min}/N$ \cite{kitagawa1993squeezed}---translates geometrically into the anisotropy of the constellation's spatial expansion; optimally squeezed states physically manifest as constellations rigorously elongated along a specific geodesic and compressed along its conjugate axis \cite{kitagawa1993squeezed, kam2025three}.

Beyond static ground-state phase transitions, the constellation trajectory under unitary time evolution serves as an exact analytical diagnostic for quantum chaos, thermalization, and the breaking of integrability \cite{leboeuf1990chaos}. In the context of the eigenstate thermalization hypothesis (ETH), highly excited eigenstates of non-integrable, chaotic symmetric models exhibit volume-law entanglement scaling. Their corresponding Majorana constellations reproduce the statistics of the roots of Haar-random SU(2) polynomials \cite{bogomolny1996quantum, hannay1996chaotic}. That ensemble is uniform on the sphere in the one-point sense, but its roots are far from independent: the two-point correlation vanishes quadratically at short separation, so the stars repel and the constellation behaves as a two-dimensional one-component plasma rather than as a Poissonian gas. The characteristic signature of a chaotic eigenstate is therefore not merely a spread-out constellation but a rigid one, with suppressed fluctuations in the number of stars in a given region. Integrable collective models sit at the opposite extreme: their eigenstates are organised by a conserved quantity, and the stars cluster into bands or retain high degeneracies, limiting the accumulation of the pairwise internal geometric phases ($\beta_{kl}$) and preventing the uniform, mutually repelling spread just described. It should be noted that this is not many-body localization. Localization in the sense of Ref.~\cite{abanin2019many} requires spatial locality and quenched disorder, neither of which survives the projection onto a permutation-symmetric sector with all-to-all coupling, where no notion of distance between sites remains. The constellation distinguishes integrable from chaotic collective dynamics; it does not diagnose MBL. Consequently, tracking the variance of inter-star distances or the temporal evolution of the stellar rank provides a robust, purely geometric order parameter for identifying dynamical transitions from integrable localized phases into fully chaotic thermalized regimes.

\subsection{Optical and Photonic Implementations: Twisted Light and Polarization Constellations}

The mathematical architecture of the Majorana stellar representation inherently transcends its solid-state and atomic origins, providing operational utility in classical and quantum optics. Because the algebraic structure of SU(2) and higher-order permutation-symmetric groups universally governs multiple photonic degrees of freedom, the geometric constellation maps rigorously onto both the polarization states of multi-photon fields and the structured spatial modes of orbital angular momentum (OAM). This equivalence allows the abstract geometry of multipartite entanglement to be directly engineered and visualized on the optical bench.

The most immediate photonic analog to macroscopic spin emerges in the polarization subspace of an $N$-photon optical field. By employing the Schwinger boson construction \cite{schwinger1952angular}, the quantum Stokes operators governing the polarization observables are derived exactly from the creation and annihilation operators of the orthogonal horizontal ($H$) and vertical ($V$) modes:
\begin{equation}
\begin{split}
S_x &= \frac{1}{2}\left(a_H^\dagger a_V + a_V^\dagger a_H\right), \\
S_y &= -\frac{i}{2}\left(a_H^\dagger a_V - a_V^\dagger a_H\right), \\
S_z &= \frac{1}{2}\left(a_H^\dagger a_H - a_V^\dagger a_V\right).
\end{split}
\label{eq:quantum_stokes_operators}
\end{equation}
Because the Stokes operator $S_0 = (a_H^\dagger a_H + a_V^\dagger a_V)/2$, half the total photon number, commutes with all SU(2) generators, an $N$-photon state restricted to a single spatial mode is strictly confined to the symmetric subspace, defining an effective macroscopic spin-$S = N/2$. The pure state is therefore uniquely specified by $2S = N$ points on the Poincar\'e sphere---the optical equivalent of the Bloch sphere \cite{Hannay1998b}. For macroscopic coherent light, such as a purely left-circularly polarized laser pulse, the system exhibits zero entanglement between photons, causing all $N$ stars to coalesce perfectly at the geometric pole. However, as nonclassical correlations are introduced, this singular point bifurcates. A quintessential example is the $N$-photon NOON state, $|\psi_{\text{NOON}}\rangle = (|N, 0\rangle_H + |0, N\rangle_V)/\sqrt{2}$ \cite{dowling2008quantum}, which is optimal for frequency and phase estimation at fixed photon number \cite{bollinger1996optimal} and is heavily utilized in quantum lithography to break the Rayleigh diffraction limit. Formulating its stereographic Majorana polynomial yields:
\begin{equation}
P(z) = 1 + (-1)^N z^N \implies z_k = \exp\left[ i \frac{\pi(2k+1-N)}{N} \right],
\label{eq:noon_state_roots}
\end{equation}
with $k = 0, \dots, N-1$; the parity factor comes from the phase convention of Eq.~\eqref{eq:standard_Majorana_polynomial} and amounts to a rigid rotation of the polygon by $\pi/N$ about the polar axis.
The analytical roots dictate that the $N$ stars distribute themselves in a perfectly symmetric, equidistant polygon along the equator of the Poincar\'e sphere \cite{sanchezsoto2026quantum}. The vanishing dipole moment of this configuration makes the state anticoherent to first order, and its sensitivity to rotations about the polar axis saturates the Heisenberg limit, Eq.~\eqref{eq:fisher_planar}. It should not, however, be confused with the isotropic extremal constellations discussed above: the planar polygon is maximally \emph{anisotropic} at second order, with $\langle S_z^2\rangle = S^2$ against $\langle S_x^2\rangle = \langle S_y^2\rangle = S/2$, so that its phase sensitivity is concentrated on a single axis rather than distributed over all of them.

Beyond the binary basis of polarization, the spatial modes of structured light---specifically beams carrying OAM---provide a theoretically unbounded Hilbert space. Photons characterized by a helical phase front $e^{i \ell \phi}$, such as Laguerre-Gaussian modes, carry a quantized OAM of $\ell \hbar$, a degree of freedom shown to support genuine entanglement between photon pairs \cite{mair2001entanglement}. By artificially truncating this spectrum to a finite, symmetric subspace spanning topological charges from $\ell = -S$ to $\ell = +S$, a single twisted photon effectively mimics a macroscopic spin-$S$ qudit. An arbitrary coherent superposition of these spatial modes, $|\psi\rangle = \sum_{m=-S}^{S} c_m |\ell = m\rangle$, maps uniquely to a monic polynomial of degree $2S$, generating a well-defined constellation of Majorana stars on an effective OAM cylinder or sphere \cite{fabre2023majorana}.

The implementation of these constellations in twisted-light platforms offers unparalleled experimental manipulability. Using spatial light modulators (SLMs), vortex phase plates, and multi-plane light converters, experimentalists can deterministically encode arbitrary Majorana polynomials by mapping specific spatial distributions of OAM stars into holographic phase masks. When these twisted photons are subjected to adiabatic cyclic transformations---achieved by propagating the beam through cascaded q-plates---they accumulate geometric Berry phases. The relative internal trajectories of the spatially dispersed OAM stars are expected to contribute anomalous phase shifts of the internal-twist type introduced in Sec.~\ref{sec:phases}. We stress that this is a proposal rather than an established correspondence: as noted there, no general relation between the internal twist and an independently measurable quantity has been derived, and no such measurement has been reported in a photonic platform. Furthermore, photonic states are exceptionally robust probes for open-system dynamics. Depolarization, modal crosstalk, and photon loss mathematically manifest as the continuous radial collapse of the stars toward the origin of the Poincar\'e or OAM ball. Multipole-based geometric tomography \cite{romero2024multipoles}, executed via standard intensity measurements and mode sorters, directly tracks this radial decay, providing an immediate, visual diagnosis of quantum channel fidelity. By unifying topological optical phases, multi-photon interference, and error characterization into algebraic roots on a sphere, the Majorana representation serves as an indispensable blueprint for navigating high-dimensional quantum photonics.

\section{Challenges, Open Problems, and Future Directions}
\label{sec:challenges}

Despite the mathematical elegance of the Majorana stellar representation in describing pure symmetric evolution, fundamental and practical challenges remain in transitioning this framework from a static visualization tool into a robust diagnostic engine for macroscopic quantum technologies. Addressing these open problems necessitates a synthesis of advanced theoretical extensions, scalable computational strategies, and novel geometric mapping techniques. This section distills these pressing challenges into three core frontiers: the expansion of the mathematical formalism, the mitigation of computational bottlenecks via machine learning, and the experimental verification of geometric signatures in macroscopic and thermodynamic settings.

\subsection{Theoretical and Mathematical Frontiers}

The primary theoretical hurdle lies in extending the geometric formalism beyond the idealized symmetric pure state. In realistic open-system dynamics, quantum states inevitably suffer from decoherence and broken symmetries due to local environmental noise, evolving into mixed density matrices $\rho(t)$ governed by a Lindblad master equation. Under such dissipative evolution, the rigorous mathematical guarantee of a unique $2S$-point constellation on the surface of the Bloch sphere fundamentally breaks down. Current theoretical generalizations attempt to circumvent this by assigning to $\rho(t)$ a distribution of vectors strictly within the interior of the Bloch ball ($r_k \le 1$), utilizing generalized multipole expansions \cite{serrano2020majorana}. Alternative approaches utilize convex support decompositions to express $\rho(t)$ as a statistical mixture of coherent states \cite{giraud2008classicality}. However, the lack of a mathematically unique geometric representation for mixed states leaves the exact formulation of mixed-state entanglement invariants as a critical open problem. Furthermore, in non-Markovian environments with strong memory effects, the radial coordinates of these generalized interior points can exhibit non-monotonic shrinking or transient revivals, drastically complicating the visual interpretation of decoherence \cite{breuer2009measure}.

Equally challenging is the generalization of the stellar representation to fully non-symmetric $N$-qubit states. Because the Hilbert space dimension $2^N$ exponentially exceeds the $N+1$ degrees of freedom available in the permutation-symmetric subspace, a single Bloch sphere is inherently insufficient to capture the full quantum correlations of an arbitrary state \cite{markham2011entanglement}. Pragmatic approaches project the arbitrary state onto the symmetric subspace, $\rho_{\rm sym} = \mathcal{P}_{\rm sym}\rho\mathcal{P}_{\rm sym}$, utilizing the resulting symmetric constellation to establish rigorous lower bounds for genuine multipartite entanglement \cite{aulbach2010maximally}. A rigorous extension to asymmetric states necessitates classifying entanglement via multidimensional hyperdeterminants \cite{miyake2003classification}. While mathematically exact, these high-dimensional algebraic invariants sacrifice the intuitive simplicity of the single-sphere visualization. Developing noise-robust geometric witnesses that seamlessly combine symmetric projections with hyperdeterminant-based metrics remains a pressing priority for certifying entanglement in noisy intermediate-scale quantum (NISQ) devices.

Looking toward the boundaries of fundamental many-body physics, the constellation geometry offers profound, analytically exact links to non-equilibrium quantum dynamics and criticality. The stellar representation serves as a natural geometric language for exploring excited-state quantum phase transitions (ESQPTs) in collective spin systems. In the thermodynamic limit of the generalized Lipkin-Meshkov-Glick (LMG) model, the critical energies separating distinct quantum phases uniquely manifest as topological bifurcations in the trajectories of the Majorana stars \cite{kam2025three}. These bifurcations dynamically deform the underlying constellation, stretching and squeezing the stellar distributions along specific geodesics. Recent advancements have successfully mapped these high-dimensional quantum critical behaviors onto classical optical analogues \cite{kam2026classical}, proving that the macroscopic spatial structures of the constellation---specifically its tendency to cluster or maximally disperse---encode the fundamental signatures of quantum criticality. Understanding how the internal geometric twist and pairwise phases \cite{kam2021berry} evolve across these critical boundaries represents a fascinating horizon for both theoretical exploration and experimental simulation.

\subsection{Computational Scalability and Machine Learning Solutions}

As experimental platforms scale toward the thermodynamic limit ($S \to \infty$), the theoretical elegance of the Majorana representation encounters severe computational bottlenecks. Extracting the spatial coordinates of the constellation strictly requires determining the $2S$ complex roots of a high-degree Majorana polynomial. Utilizing standard dense companion-matrix eigenvalue solvers incurs an asymptotic time complexity of $O(S^3)$ in the degree $2S$, rendering exact geometric visualizations computationally prohibitive for macroscopic spin systems. Compounding this complexity is the inherent numerical instability of high-degree polynomials; as famously demonstrated by Wilkinson's polynomials, microscopic perturbations in the polynomial coefficients can trigger exponentially divergent errors in the calculated star positions \cite{wilkinson1959evaluation}. This ill-conditioning is particularly devastating for resolving near-degenerate configurations or highly clustered states. While structured eigenvalue algorithms and arbitrary-precision rootfinders such as MPSolve \cite{bini2000design} push the tractable degree substantially higher, evaluating non-equilibrium dynamics in the true macroscopic limit demands a radical paradigm shift in computational strategy.

Machine learning (ML) architectures offer a transformative pathway to bypass this explicit root-finding bottleneck by mapping macroscopic observables directly to geometric entanglement features \cite{carrasquilla2017machine}. Rather than diagonalizing companion matrices to extract discrete stellar coordinates, deep learning models can operate on the continuous quasiprobability distributions on the Bloch sphere, such as the Husimi $Q$-function or the directly measured multipoles $\rho_{kq}$ of Eq.~\eqref{eq:multipole_expansion}. For instance, Variational Autoencoders (VAEs) can compress these high-dimensional macroscopic distributions into a compact, continuous latent space, capturing the essential geometric structures of the constellation---such as generalized squeezing or isotropic dispersion---without evaluating a single polynomial root. Furthermore, instead of applying discrete clustering algorithms to individual stars, Topological Data Analysis (TDA) can be applied directly to these continuous functions. By utilizing persistent homology, TDA efficiently extracts robust macroscopic geometric invariants---identifying major clusters, voids, and topological bifurcations inherent in the thermodynamic limit, fully circumventing the numerical instability of exact root extraction.

The immediate computational frontier lies in scaling these models utilizing inherently $\mathrm{SU}(2)$-equivariant neural networks, such as tensor field networks \cite{thomas2018tensor}, which rigorously respect the rotational symmetries of the Bloch sphere without requiring exhaustive data augmentation. Additionally, the deployment of predictive generative frameworks, such as geometrically constrained World Models or diffusion models, promises to invert the analytical pipeline. Generating noise-adapted geometric configurations conditioned on target macroscopic multipoles would invert the analytical pipeline and bypass explicit rootfinding altogether. We emphasise that, with the exception of the equivariant architectures just cited, none of these approaches has yet been applied to Majorana constellations; they are proposed directions rather than reported results, and their value here is to identify where the computational bottleneck might be attacked, not to claim it has been.

\subsection{Experimental Horizons and Thermodynamic Signatures}

The final theoretical and experimental frontier lies in the actionable verification of these geometric signatures within macroscopic regimes, particularly in integrated quantum metrology and non-equilibrium thermodynamics. In metrology, the geometric framework has proven indispensable for engineering phase verification protocols that balance sensitivity with noise resilience. Planar equidistant configurations---the GHZ-type states of Eq.~\eqref{eq:fisher_planar}---saturate the Heisenberg limit, $\mathcal{F} = N^2$, but the same $N$-fold phase accumulation that produces this sensitivity also amplifies dephasing, and under independent dephasing of the individual qubits the coherence decays at a rate proportional to $N$ \cite{pezze2018quantum}. Isotropic extremal constellations \cite{goldberg2020extremal} trade peak sensitivity for a different profile: the maximally anticoherent states---the ``Kings of Quantumness''---have Platonic-solid symmetry, vanishing dipole, and the direction-independent value $\mathcal{F} = \tfrac13 N(N+2)$ of Eq.~\eqref{eq:fisher_anticoherent} \cite{bjork2015extremal}, so that no alignment of probe and signal is required and the leading response to a weak perturbation is pushed above the anticoherence order. As emphasised in Sec.~\ref{sec:qec}, this is not immunity to collective noise: the symmetric subspace carries a single irreducible representation of $\mathrm{SU}(2)$ and admits no decoherence-free subspace against collective operators. Exploiting these robust geometric markers experimentally demands unprecedented phase resolution to isolate the internal anomalous Berry phases ($\Delta \gamma$) that accumulate dynamically as the constellation is adiabatically manipulated \cite{kam2021berry}.

Beyond static metrology, the constellation geometry provides a striking visual and analytic gauge for macroscopic quantum thermodynamics, particularly in systems driven far from equilibrium. Instead of standard ergotropy \cite{allahverdyan2004maximal}, which favors trivial coherent polarization under linear Hamiltonians, the thermodynamic utility of entangled configurations manifests during finite-time driving across quantum critical points. In symmetry-broken systems such as the generalized Lipkin-Meshkov-Glick (LMG) model, passing through an excited-state quantum phase transition (ESQPT) induces violent topological deformations in the stellar constellation \cite{kam2025three, kam2026classical}. The thermodynamic work distribution and irreversible entropy production generated during such non-adiabatic cycles are directly encoded in the structural splitting and recombination of the Majorana roots. Investigating how the topological constraints of the constellation bound the irreversible energetic costs, and formulating how the internal geometric phases modify non-equilibrium fluctuation theorems, represent highly promising, unexplored avenues for the design of quantum heat engines and finite-time thermodynamic cycles.

Ultimately, by mapping abstract logical error syndromes to deterministic stellar shifts and visualizing macroscopic criticality as explicit geometric deformations, the Majorana stellar representation establishes an indispensable bridge between algebraic quantum information and observable spatial dynamics \cite{sanchezsoto2026quantum}. While extending this mathematical elegance to fully asymmetric density matrices, strongly non-Markovian environments, and the inverse state-preparation problem via machine learning presents formidable theoretical hurdles, overcoming these challenges remains a paramount objective. Successfully navigating these physical frontiers will definitively elevate the constellation framework from a descriptive visualization tool into a predictive, actionable diagnostic language, fundamentally custom-built for characterizing the next generation of complex quantum simulators and strongly correlated many-body hardware \cite{Preskill2018}.

\section{Conclusions}
\label{sec:conclusions}

Proposed nearly a century ago as an elegant geometric descriptor for atomic spins in time-varying magnetic fields \cite{Majorana1932}, the Majorana stellar representation has evolved into a formidable analytical framework within modern quantum information science. By mapping the abstract, high-dimensional Hilbert space vectors of permutation-symmetric multi-qubit and macroscopic spin systems onto discrete constellations of points on the Bloch sphere, it fundamentally transforms opaque algebraic complexities into tractable geometric topologies \cite{Bengtsson2017}. This geometric translation has proven indispensable for illuminating hidden symmetries, macroscopic degeneracies, and the intricate structural topology of multipartite correlations at a single glance \cite{sanchezsoto2026quantum}.

Throughout this review, we have foregrounded an entanglement-centric perspective, demonstrating that the constellation is not merely a qualitative visualization aid, but a rigorous mathematical ledger for quantum correlations. We have detailed how specific stellar degeneracy patterns yield a complete and visually intuitive classification of stochastic local operations and classical communication (SLOCC) families \cite{bastin2009operational, markham2011entanglement}. Furthermore, fundamental entanglement monotones---such as the bipartite concurrence and the genuine multipartite three-tangle---can be computed exactly using inter-star chordal distances and stereographic inner products \cite{martin2010multiqubit}. Independent of bipartite entanglement, intrinsic nonclassicality is rigorously quantified through the stellar rank and the anticoherence order, identifying extremal states like the ``Kings of Quantumness'' \cite{bjork2015extremal, giraud2015tensor}. Crucially, this geometric framework bridges static correlation metrics with unitary dynamics: the internal geometry of the constellation generates anomalous contributions to the Berry phase over and above the solid-angle term \cite{kam2021berry}. Whether these anomalous contributions can be bounded by, or read off from, an independent entanglement measure is a question this review leaves open---as discussed in Sec.~\ref{sec:phases}, no such relation has been established, and we regard it as the outstanding problem in this direction rather than a settled result.

The operational power of this framework is increasingly being realized across diverse experimental architectures. The theoretical accessibility of collective multipole measurements and algebraic root-finding has enabled the direct geometric benchmarking of quantum states in trapped-ion arrays \cite{haffner2005scalable}, spinor Bose-Einstein condensates \cite{stamper2013spinor}, and structured twisted-photon networks \cite{fabre2023majorana}. By treating the Majorana constellation as a unified geometric object, experimentalists can seamlessly bridge static entanglement quantification, track open-system decoherence via continuous radial shrinkage, and optimize noise-resilient quantum sensors and symmetric error-correcting codes \cite{goldberg2020extremal}. 

Nevertheless, fulfilling the ultimate theoretical promise of this geometric language necessitates overcoming several persistent frontiers. As experimental platforms scale toward the thermodynamic limit ($S \to \infty$), the computational scalability of macroscopic polynomial root-finding becomes a severe numerical bottleneck. Furthermore, formulating unique canonical extensions for fully asymmetric multi-qubit architectures and dissipative mixed density matrices remains theoretically elusive \cite{aulbach2010maximally, serrano2020majorana}. Addressing these hurdles will increasingly rely on the integration of hybrid algebraic-geometric tools and advanced machine learning models capable of bypassing explicit root calculations \cite{carrasquilla2017machine}, alongside deeper quantitative mappings to non-equilibrium dynamics and excited-state quantum phase transitions.

Beyond static classification, this geometric framework suggests a route to quantum kernel methods \cite{tanner2026non}: encoding classical data into Majorana constellations makes the pairwise overlap of the resulting symmetric states a candidate kernel function. Here the complexity analysis of Appendix~\ref{app:complexity} imposes an important constraint. The overlap of two symmetric states is a permanent of the matrix of pairwise spinor inner products, and that matrix has rank at most two; by Barvinok's algorithm \cite{barvinok1996} its permanent is computable in polynomial time. Evaluating such a kernel is therefore \emph{not} a hard classical problem, and no advantage can be claimed on the grounds of evaluation cost. If constellation kernels prove useful, the reason will have to be sought elsewhere---in the $\mathrm{SU}(2)$-equivariance and permutation invariance they build in by construction, which constrain the hypothesis class in a way that may suit data with rotational structure. Establishing whether that inductive bias yields any concrete advantage is an open question, and one that the geometry alone does not settle.

Ultimately, the Majorana stellar representation stands as a fundamental geometric language that uniquely complements standard algebraic formalisms. By centering entanglement as the primary organizing principle, this framework unifies disparate quantum phenomena---ranging from metrology to macroscopic critical dynamics---under a single topological umbrella. As quantum control techniques mature, the constellation framework offers unparalleled clarity, intuition, and predictive power, perfectly positioning it to guide practical, noise-resilient advancements in the next generation of complex quantum technologies \cite{Preskill2018}.

\section*{Declarations}

\subsection*{Conflict of Interest}
The author declares that there are no financial or personal relationships with other people or organizations that could inappropriately influence (bias) the work reported in this paper.

\subsection*{Data Availability Statement}
Data sharing is not applicable to this article as no new data were created or analyzed during the current study. Theoretical results and literature synthesized are cited accordingly.

\subsection*{Ethics Statement}
This research did not involve human participants, animals, or harmful chemicals; therefore, ethical approval was not required for this study.

\subsection*{Funding}
C.F.K. is grateful to the European Union and the Region Reunion, France (POE FEDER 2021--2027, n\textdegree{}2025-0954-007180) for the funding support.

\subsection*{Author Contributions}
C.F.K.: Conceptualization, Formal analysis, Investigation, Writing - original draft, Writing - review \& editing.

\appendix

\section{Rigorous derivation of the isomorphism between a spin-$S$ and the symmetric subspace of 2$S$ qubits via Schur-Weyl duality}
\label{app:isomorphism}

The Majorana stellar representation is predicated on a formal isomorphism between the Hilbert space of a monolithic spin-$S$ system, $\mathcal{H}_S$, and the totally symmetric subspace of $N=2S$ qubits, $\mathcal{H}_{\text{sym}} \subset (\mathbb{C}^2)^{\otimes N}$. This correspondence is a direct consequence of the representation theory of $SU(2)$ and its interplay with the symmetric group $S_N$. 

Let $V \cong \mathbb{C}^2$ be the fundamental representation (spin-$1/2$) of the group $SU(2)$. For a system of $N$ qubits, the total Hilbert space is the $N$-fold tensor product:
\begin{equation}
\mathcal{H}_N = V_1 \otimes V_2 \otimes \dots \otimes V_N = V^{\otimes N}.
\end{equation}
The group $SU(2)$ acts on $\mathcal{H}_N$ via the diagonal (collective) representation $\rho^{\otimes N}$, defined by:
\begin{equation}
\rho^{\otimes N}(g) = \underbrace{g \otimes g \otimes \dots \otimes g}_{N}, \quad \forall g \in SU(2).
\end{equation}
According to the Clebsch-Gordan decomposition, this highly reducible representation decomposes into a direct sum of $SU(2)$ irreducible representations (irreps) $V_J$:
\begin{equation}
V^{\otimes N} \cong \bigoplus_{J = J_{\min}}^{N/2} \nu_J V_J,
\end{equation}
where $V_J$ denotes the irrep of dimension $2J+1$, and $\nu_J$ is the multiplicity of each irrep.

To identify the symmetric part of this decomposition, we invoke Schur-Weyl duality. Consider the action of the symmetric group $S_N$ on $\mathcal{H}_N$ by permuting the tensor factors
\begin{equation}
\sigma \cdot (v_1 \otimes v_2 \otimes \dots \otimes v_N) = v_{\sigma^{-1}(1)} \otimes v_{\sigma^{-1}(2)} \otimes \dots \otimes v_{\sigma^{-1}(N)},
\end{equation}
for $\sigma \in S_N$. Since the diagonal action of $SU(2)$ and the permutation action of $S_N$ commute, $\mathcal{H}_N$ admits a joint decomposition
\begin{equation}
\mathcal{H}_N \cong \bigoplus_{\lambda} U_{\lambda} \otimes W_{\lambda},
\end{equation}
where $U_{\lambda}$ are irreps of $SU(2)$ and $W_{\lambda}$ are irreps of $S_N$, labeled by Young diagrams $\lambda$ with $N$ boxes and at most 2 rows.

The totally symmetric subspace $\mathcal{H}_{\text{sym}}$ is defined as the range of the symmetrizer $P_{\text{sym}} = \frac{1}{N!} \sum_{\sigma \in S_N} \sigma$. This subspace corresponds to the trivial representation of $S_N$, labeled by the single-row Young diagram $\lambda = (N, 0)$. Under Schur-Weyl duality, this specific sector is uniquely paired with the $SU(2)$ irrep of highest weight $J = N/2$
\begin{equation}
\mathcal{H}_{\text{sym}} \cong V_{N/2} \otimes W_{(N,0)} \cong V_{N/2}.
\end{equation}
Given $N = 2S$, we arrive at the exact isomorphism: $\mathcal{H}_{\text{sym}} \cong V_S$.

The $2S+1$ basis vectors of $\mathcal{H}_{\text{sym}}$ are the Dicke states $|S, m\rangle$, where $m \in \{-S, \dots, S\}$. In the tensor product language, $|S, S\rangle = |1\rangle^{\otimes N}$ is the highest weight vector. The full basis is generated via the collective lowering operator $\hat{J}_- = \sum_{i=1}^N \hat{\sigma}_-^{(i)}$:
\begin{align}
&|S, m\rangle = \mathcal{N} (\hat{J}_-)^{S-m} |1\rangle^{\otimes N} \nonumber\\
=& \binom{2S}{S+m}^{-1/2} \sum_{\sigma \in S_N / \text{stab}} \sigma \left( |1\rangle^{\otimes S+m} \otimes |0\rangle^{\otimes S-m} \right).
\end{align}
This confirms that any monolithic spin-$S$ state $|\Psi\rangle \in V_S$ can be uniquely unfolded into the constituent $2S$ qubits while maintaining total permutation symmetry, providing the rigorous foundation for the Majorana constellation.

\section{Mathematical equivalence of Majorana polynomials and Penrose spinors}
\label{app:spinor_calculus}

The Majorana stellar representation and Penrose's spinor calculus approach the geometry of spin-$S$ systems from two distinct physical motivations: Majorana considered non-relativistic quantum states under $\text{SU}(2)$ rotations, whereas Penrose formulated spinor calculus to describe massless higher-spin fields and spacetime geometry under the Lorentz group $\text{SO}^+(1,3)$, which is locally isomorphic to $\text{SL}(2,\mathbb{C})$. Despite these differing physical origins, their underlying mathematical structures are strictly isomorphic, as both rely on the symmetric tensor product of two-dimensional complex vector spaces.

In Majorana's formulation, the state space is the $(2S+1)$-dimensional irreducible representation of $\text{SU}(2)$. A pure state $|\psi\rangle = \sum_{m=-S}^S c_m |S,m\rangle$ is mathematically mapped to a polynomial in a single complex variable $z$,
\begin{equation}
P(z) = \sum_{k=0}^{2S} a_k z^k,
\label{eq:app_majorana_poly}
\end{equation}
where the coefficients $a_k$ are derived from $c_{k-S}$. By the fundamental theorem of algebra, $P(z)$ factors completely into $2S$ roots in the extended complex plane $\mathbb{C} \cup \{\infty\}$, expressed as $P(z) = A \prod_{i=1}^{2S} (z - z_i)$, where $A$ is a normalization constant. The symmetry group $\text{SU}(2)$ acts on these roots via fractional linear (M\"obius) transformations, establishing a direct mapping between the spin-$S$ Hilbert space and the configuration of $2S$ points on the unit sphere.

Conversely, Penrose's formalism is constructed upon 2-component Weyl spinors $\xi^A = (\xi^0, \xi^1)^T$, which form the fundamental representation of $\text{SL}(2,\mathbb{C})$. A higher-spin object of spin $S$ is represented by a rank-$n$ totally symmetric spinor $\Psi_{A_1 A_2 \dots A_n}$, where $n = 2S$. This spinor resides in the $(n+1)$-dimensional symmetric tensor product space $\text{Sym}^n(\mathbb{C}^2)$. The core of Penrose's approach is the spinor decomposition theorem, which dictates that any totally symmetric rank-$n$ spinor can be uniquely factorized into the symmetrized product of $n$ rank-$1$ principal spinors:
\begin{equation}
\Psi_{A_1 A_2 \dots A_n} = \mathcal{S} \left( \alpha_{A_1}^{(1)} \alpha_{A_2}^{(2)} \dots \alpha_{A_n}^{(n)} \right) = \alpha_{(A_1}^{(1)} \alpha_{A_2}^{(2)} \dots \alpha_{A_n)}^{(n)},
\label{eq:app_spinor_decomposition}
\end{equation}
where the parentheses $( \dots )$ denote complete symmetrization over the enclosed indices, and each constituent principal spinor is given by $\alpha^{(k)}_A = (\alpha^{(k)}_0, \alpha^{(k)}_1)^T$.

The geometric isomorphism between these two frameworks is established by projecting the 2-component spinors into the complex projective line $\mathbb{CP}^1$. For each principal spinor $\alpha^{(k)}_A$, one defines the complex ratio $z_k = \alpha^{(k)}_1 / \alpha^{(k)}_0$. To link this to Majorana's polynomial, one introduces the spinor $\xi^A = (-z, 1)^T$---the sign convention is fixed by the requirement that the resulting roots be the ratios $z_k$ defined above rather than their negatives-reciprocals---and constructs the fully contracted scalar product $\Psi_{A_1 A_2 \dots A_n} \xi^{A_1} \xi^{A_2} \dots \xi^{A_n} = 0$. Substituting the factorized form from Eq.~\eqref{eq:app_spinor_decomposition} into this contraction directly yields:
\begin{equation}
\prod_{k=1}^n \left( \alpha^{(k)}_0 z - \alpha^{(k)}_1 \right) = \prod_{k=1}^n \alpha^{(k)}_0 (z - z_k) = 0.
\label{eq:app_contraction}
\end{equation}
Thus, the projective ratios $z_k$ of Penrose's principal spinors are strictly identical to the roots $z_i$ of Majorana's polynomial. In the relativistic context, each principal spinor $\alpha^{(k)}_A$ defines a real null vector $k^\mu = \sigma^\mu_{A\dot{A}} \alpha^{(k)A} \bar{\alpha}^{(k)\dot{A}}$ (where $\sigma^\mu$ are the Infeld-van der Waerden symbols), corresponding to a principal null direction (a light-ray) on the observer's celestial sphere.

This mathematical equivalence provides a rigorous foundation for mapping gravitational geometries to quantum entanglement structures, particularly for $S=2$ systems ($n=4$). In general relativity, the massless spin-$2$ gravitational field is described by the Weyl curvature tensor $C_{\mu\nu\rho\sigma}$, which translates isomorphically to a rank-$4$ totally symmetric spinor $\Psi_{ABCD}$. According to Eq.~\eqref{eq:app_spinor_decomposition}, this spinor factorizes into four principal spinors, $\Psi_{ABCD} = \alpha_{(A} \beta_B \gamma_C \delta_{D)}$, corresponding to four Majorana stars. 

The degeneracy of these four roots classifies the spacetime geometries according to the Petrov classification, mapping exactly to the integer partitions of $4$. If all four roots are distinct ($\mathcal{D}_{1,1,1,1}$), the spacetime is algebraically general (Type I). When two roots coincide while the other two remain distinct ($\mathcal{D}_{2,1,1}$), the geometry becomes Type II. A double degeneracy where roots coincide in two pairs ($\mathcal{D}_{2,2}$) defines Type D spacetimes, characterizing the highly symmetric gravitational fields of isolated sources such as Schwarzschild and Kerr black holes. If three roots coincide ($\mathcal{D}_{3,1}$), the spacetime is classified as Type III, and when all four roots perfectly coincide ($\mathcal{D}_4$), the geometry is Type N, defining pure transverse gravitational waves where all curvature propagates along a single null direction.

Crucially, a $4$-qubit totally symmetric pure quantum state $|\psi\rangle_{\text{sym}}$ is described by the exact same symmetric tensor algebra in $\text{Sym}^4(\mathbb{C}^2)$. Under Stochastic Local Operations and Classical Communication (SLOCC), which correspond to local $\text{SL}(2,\mathbb{C})^{\otimes 4}$ filtering operations preserving the permutation symmetry, the degeneracy pattern of the Majorana constellation is invariant. The five partitions of $4$ therefore label five SLOCC families, in exact correspondence with the Petrov types: $\mathcal{D}_{1,1,1,1}$ with Type I and the generic symmetric states containing the four-qubit GHZ state; $\mathcal{D}_{2,1,1}$ with Type II; $\mathcal{D}_{2,2}$ with Type D and the Dicke state $|D_4^{(2)}\rangle = |S{=}2, m{=}0\rangle$, whose two double stars sit at antipodal poles; $\mathcal{D}_{3,1}$ with Type III and the four-qubit W state; and $\mathcal{D}_4$ with Type N and the fully separable symmetric states.

The correspondence should not be overstated in the other direction. The map from degeneracy patterns to Petrov types is a bijection, but the map from Petrov types to SLOCC classes is not: as discussed in Sec.~\ref{sec:entanglement}, a family of four distinct stars carries a one-complex-dimensional modulus, the cross-ratio, which is preserved by every M{\"o}bius transformation and hence by every symmetric SLOCC operation \cite{ribeiro2011entanglement}. Type I spacetimes and $\mathcal{D}_{1,1,1,1}$ symmetric states therefore both decompose further into a continuum of inequivalent orbits, and the Petrov labels classify the algebraic degeneracy structure rather than the full orbit. With that caveat the duality is exact and useful: the same symmetric spinor algebra governs relativistic curvature and symmetric multi-qubit entanglement, and the same geometric invariants---degeneracy pattern and cross-ratio---classify both.

\section{Bargmann-Fock Space and Analyticity of the Stellar Representation}
\label{app:bargmann}

To provide a rigorous foundation for the stellar representation in continuous-variable (CV) systems, we summarize the mathematical properties of the Bargmann-Fock space \cite{bargmann1961hilbert} and the convergence criteria that guarantee its entire analytic nature. For any pure state $|\psi\rangle = \sum_{n=0}^{\infty} c_n |n\rangle$ in the infinite-dimensional Hilbert space, the Bargmann representation maps the state to a complex function $F(z)$ via the kernel
\begin{equation}
F(z) = \sum_{n=0}^{\infty} \frac{c_n}{\sqrt{n!}} z^n,
\label{eq:bargmann_sum}
\end{equation}
where $z \in \mathbb{C}$ is a complex variable. In this representation, the bosonic creation operator $a^\dagger$ acts as multiplication by $z$, and the annihilation operator $a$ acts as the derivative $\partial_z$. A crucial property of the Bargmann function $F(z)$ is that for any physically realizable state satisfying the normalization condition $\sum_{n} |c_n|^2 = 1$, $F(z)$ is unconditionally an \emph{entire function}, meaning it is analytic throughout the entire complex plane. 

To demonstrate this rigorously, we consider the radius of convergence $R$ of the power series in Eq.~\eqref{eq:bargmann_sum}. According to the Cauchy-Hadamard theorem, $1/R = \limsup_{n \to \infty} |a_n|^{1/n}$, where the coefficients are $a_n = c_n / \sqrt{n!}$. Given that $|c_n| \leq 1$ for all $n$, we establish the upper bound:
\begin{equation}
\frac{1}{R} = \limsup_{n \to \infty} \left| \frac{c_n}{\sqrt{n!}} \right|^{1/n} \leq \limsup_{n \to \infty} \frac{1}{(n!)^{1/2n}}.
\end{equation}
Applying Stirling's approximation, $(n!)^{1/n} \sim n/e$, it directly follows that $1/(n!)^{1/2n} \to 0$ as $n \to \infty$. Consequently, $1/R = 0$, implying an infinite radius of convergence ($R = \infty$). This ensures that $F(z)$ possesses no singularities or branch points in the finite complex plane, regardless of the asymptotic behavior of the physical amplitudes $c_n$.

This analytic property is strictly required when evaluating the physical phase-space distribution, typically analyzed via the Husimi $Q$-function \cite{husimi1940some}. Defined as the expectation value of the density matrix in the coherent state basis, $Q(\alpha) = \frac{1}{\pi} \langle \alpha | \psi \rangle \langle \psi | \alpha \rangle$, the inner product relies on the standard expansion of coherent states $|\alpha\rangle = e^{-|\alpha|^2/2} \sum \frac{\alpha^n}{\sqrt{n!}} |n\rangle$. The projection naturally isolates the Bargmann function:
\begin{equation}
\langle \alpha | \psi \rangle = e^{-|\alpha|^2/2} \sum_{n=0}^{\infty} c_n \frac{(\alpha^*)^n}{\sqrt{n!}} = e^{-|\alpha|^2/2} F(\alpha^*).
\end{equation}
The Husimi $Q$-function can thus be exactingly expressed as
\begin{equation}
Q(\alpha) = \frac{1}{\pi} e^{-|\alpha|^2} |F(\alpha^*)|^2.
\label{eq:Q_factorization}
\end{equation}
While $Q(\alpha)$ itself is a non-negative real-valued distribution and hence mathematically prohibited from direct analytic factorization, it is uniquely dictated by the modulus square of the entire function $F(\alpha^*)$. Because $F(z)$ is an entire function of growth order at most $2$ for any normalized Fock-state superposition, Hadamard's factorization theorem applies: $F$ is determined by its zeros together with a non-vanishing factor $e^{a + bz + cz^2}$. The exponential factor is not a technicality---a coherent state has $F(z) = e^{\bar{\alpha} z}$ and no zeros at all---so the zeros alone do not determine the state. For CV states of finite stellar rank $r^\star = n$, the Bargmann function reduces algebraically to a polynomial of degree $n$ multiplied by such a non-vanishing exponential, and the $n$ complex roots of $F$ give exactly the phase-space coordinates at which the Husimi $Q$-function vanishes, providing a geometric fingerprint of the state's non-Gaussian resources.

One difference from the spin case deserves emphasis, since the two are easily conflated. Here $Q(\alpha) \propto |F(\alpha^*)|^2$ directly, so the zeros of the stellar function are the zeros of $Q$. For a spin-$S$ state the corresponding statement carries an antipodal map, Eq.~\eqref{eq:Q_pure}: the zeros of $Q$ lie at the antipodes of the Majorana stars, not at the stars themselves.

\section{Computational Complexity and the Permanent--Determinant Dichotomy}
\label{app:complexity}

The efficacy of the Majorana representation in bypassing the \#P-hard barriers of general multi-qubit systems can be understood through the transition from tensor algebra to univariate polynomials. Here, we provide a formal discussion on the computational complexity of constructing the constellation and the subsequent evaluation of physical observables.

The construction of the Majorana constellation for a spin-$S$ state (where $N=2S$) requires finding the roots of the polynomial $P_\psi(z)$ defined in Eq.~\eqref{eq:standard_Majorana_polynomial}. In numerical linear algebra, the standard approach to find the roots of an $N$-th degree polynomial is to construct its associated companion matrix $\mathcal{C}$, whose characteristic polynomial is precisely $P_\psi(z)$. For a monic polynomial $z^N + a_{N-1} z^{N-1} + \dots + a_0$, the companion matrix is defined as:
\begin{equation}
\mathcal{C} = \begin{pmatrix} 
0 & 0 & \dots & 0 & -a_0 \\
1 & 0 & \dots & 0 & -a_1 \\
0 & 1 & \dots & 0 & -a_2 \\
\vdots & \vdots & \ddots & \vdots & \vdots \\
0 & 0 & \dots & 1 & -a_{N-1}
\end{pmatrix}.
\end{equation}
The roots $\{z_k\}$ are obtained by computing the eigenvalues of $\mathcal{C}$. Using standard algorithms such as the QR decomposition or the Hessenberg reduction, the eigenvalues of an $N \times N$ matrix can be computed with a computational complexity of $\mathcal{O}(N^3)$. Since this is a polynomial-time operation, the geometric fingerprint of any symmetric state can be efficiently extracted even for large $N$.

While identifying the positions of the stars is computationally efficient, reconstructing many-body amplitudes or evaluating the overlap between two symmetric states remains demanding. A pure symmetric state $|\psi\rangle$ is proportional to the symmetrized product of the constituent single-qubit states $\{|\mathbf{n}_1\rangle, \dots, |\mathbf{n}_N\rangle\}$ corresponding to its Majorana stars \cite{Majorana1932, ribeiro2007thermodynamical}:
\begin{equation}
|\psi\rangle = \frac{1}{\sqrt{N! \, \operatorname{perm}(G_\psi)}} \sum_{\sigma \in \mathcal{P}_N} |\mathbf{n}_{\sigma(1)}\rangle \otimes \dots \otimes |\mathbf{n}_{\sigma(N)}\rangle,
\label{eq:symmetric_state_stars}
\end{equation}
where $\mathcal{P}_N$ denotes the set of all permutations of $N$ elements, and $G_\psi$ is the Gram matrix of the state's own stars with elements $(G_\psi)_{ij} = \langle \mathbf{n}_i | \mathbf{n}_j \rangle$. The appearance of the permanent in the normalization constant is rigorously established in the following lemma. 

\begin{lemma}
Let $\{|n_1\rangle, |n_2\rangle, \dots, |n_N\rangle\}$ be a collection of $N$ non-orthogonal single-qubit states. The unnormalized totally symmetric $N$-qubit state is defined as
\begin{equation}
    |\Psi\rangle = \sum_{\sigma \in S_N} \bigotimes_{i=1}^N |n_{\sigma(i)}\rangle,
\end{equation}
where $S_N$ is the symmetric group of degree $N$. The inner product of this state satisfies
\begin{equation}
    \langle \Psi | \Psi \rangle = N! \operatorname{perm}(G),
\end{equation}
where $G$ is the Gram matrix with elements $G_{ij} = \langle n_i | n_j \rangle$. The normalized state $|\psi\rangle$ is given by
\begin{equation}
    |\psi\rangle = \frac{1}{\sqrt{N! \operatorname{perm}(G)}} \sum_{\sigma \in S_N} \bigotimes_{i=1}^N |n_{\sigma(i)}\rangle.
\end{equation}
\end{lemma}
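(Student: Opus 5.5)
The plan is to expand $\langle \Psi|\Psi\rangle$ directly as a double sum over permutations and then collapse one of the two sums using the group structure of $S_N$. Writing the bra as $\langle\Psi| = \sum_{\tau\in S_N} \bigotimes_{i=1}^N \langle n_{\tau(i)}|$ and using that the inner product of product vectors factorises over tensor slots, I will obtain
\begin{equation*}
\langle\Psi|\Psi\rangle = \sum_{\tau\in S_N}\sum_{\sigma\in S_N} \prod_{i=1}^N \langle n_{\tau(i)}|n_{\sigma(i)}\rangle = \sum_{\tau,\sigma} \prod_{i=1}^N G_{\tau(i),\sigma(i)} .
\end{equation*}

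The key step is the reindexing of the product. For fixed $\tau$ and $\sigma$ I substitute $j=\tau(i)$, which runs over $\{1,\dots,N\}$ bijectively, so the product becomes $\prod_{j=1}^N G_{j,\,\sigma\tau^{-1}(j)}$. Setting $\pi=\sigma\tau^{-1}$, for each fixed $\tau$ the map $\sigma\mapsto\pi$ is a bijection of $S_N$. Hence the inner sum over $\sigma$ equals $\sum_{\pi\in S_N}\prod_j G_{j,\pi(j)}=\operatorname{perm}(G)$, independently of $\tau$. Summing over the $N!$ choices of $\tau$ then gives $N!\operatorname{perm}(G)$.

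For the normalised state, it remains to show that $\operatorname{perm}(G)>0$, so that the square root is real and nonzero; after that, dividing $|\Psi\rangle$ by $\sqrt{\langle\Psi|\Psi\rangle}$ gives the stated $|\psi\rangle$. Positivity is the only step needing more than bookkeeping, and I expect it to be the main (mild) obstacle. The cleanest route is to note that $\langle\Psi|\Psi\rangle\ge 0$ automatically, being a squared norm, so it suffices to show $|\Psi\rangle\neq 0$.

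For nonvanishing I would argue via the Majorana polynomial. Pair each $|n_k\rangle\propto|1\rangle+z_k|0\rangle$ with the linear factor $(z-z_k)$, up to the sign conventions of Eq.~\eqref{eq:standard_Majorana_polynomial}, with $z_k=\infty$ corresponding to $|0\rangle$. Under the isomorphism of Eq.~\eqref{eq:dicke_to_spin}, the symmetrised product then maps to a nonzero multiple of the product of the linear forms. This product is a nonzero homogeneous polynomial, because $\mathbb{C}[z_0,z_1]$ has no zero divisors, and so $|\Psi\rangle\neq0$.

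An alternative check is to contract $|\Psi\rangle$ with $|m\rangle^{\otimes N}$ for any single-qubit $|m\rangle$ chosen non-orthogonal to every $|n_k\rangle$; such an $|m\rangle$ exists because only finitely many directions are excluded. This gives $N!\prod_k\langle m|n_k\rangle\neq0$. Either way, $\operatorname{perm}(G)>0$. The non-orthogonality hypothesis in the statement is not actually needed for the identity itself, only this nonvanishing.
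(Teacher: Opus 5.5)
Your proof is correct and follows the paper's argument essentially step for step: you expand $\langle\Psi|\Psi\rangle$ as a double sum over $S_N$, reindex the product through the bra permutation, and apply the rearrangement theorem with $\pi=\sigma\tau^{-1}$ to extract $N!\operatorname{perm}(G)$; this is the paper's $\tau\circ\sigma^{-1}$ with the names of the two permutations swapped. Your one addition is the proof that $|\Psi\rangle\neq 0$, most cleanly via $\langle m|^{\otimes N}|\Psi\rangle = N!\prod_k\langle m|n_k\rangle\neq 0$: it supplies the positivity of $\operatorname{perm}(G)$ that the paper's final normalisation step tacitly assumes, and it shows the non-orthogonality hypothesis is not needed anywhere, not even for the nonvanishing.
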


\begin{proof}
The self-inner product $\langle \Psi | \Psi \rangle$ involves a double summation over permutations $\sigma, \tau \in S_N$:
\begin{equation}
    \langle \Psi | \Psi \rangle = \sum_{\sigma \in S_N} \sum_{\tau \in S_N} \prod_{k=1}^N \langle n_{\sigma(k)} | n_{\tau(k)} \rangle.
\end{equation}
By performing the re-indexing $i = \sigma(k)$, the product term is rewritten as:
\begin{equation}
    \prod_{k=1}^N \langle n_{\sigma(k)} | n_{\tau(k)} \rangle = \prod_{i=1}^N \langle n_i | n_{\tau(\sigma^{-1}(i))} \rangle.
\end{equation}
Introducing the composite permutation $\pi = \tau \circ \sigma^{-1}$ and invoking the rearrangement theorem for finite groups, we observe that for a fixed $\sigma$, $\pi$ traverses $S_N$ exactly once as $\tau$ spans the group. This allows the summations to decouple:
\begin{equation}
    \langle \Psi | \Psi \rangle = \sum_{\sigma \in S_N} \left( \sum_{\pi \in S_N} \prod_{i=1}^N \langle n_i | n_{\pi(i)} \rangle \right).
\end{equation}
The inner summation is independent of $\sigma$ and identifies with the definition of the matrix permanent $\operatorname{perm}(G) = \sum_{\pi \in S_N} \prod_{i=1}^N G_{i, \pi(i)}$. Summing over the constant term for all $\sigma \in S_N$ yields the factor $|S_N| = N!$, thus:
\begin{equation}
    \langle \Psi | \Psi \rangle = N! \operatorname{perm}(G).
\end{equation}
The normalized state is obtained by dividing $|\Psi\rangle$ by its norm $\sqrt{\langle \Psi | \Psi \rangle}$, completing the proof.
\end{proof}

Consequently, the overlap between two such states, $\langle \phi | \psi \rangle$, is rigorously determined by the sum over all permutations of the pairwise inner products between the stars of each constellation \cite{markham2011entanglement, wei2010matrix}:
\begin{equation}
\langle \phi | \psi \rangle = \frac{\operatorname{perm}(\mathcal{M})}{\sqrt{\operatorname{perm}(G_\phi) \operatorname{perm}(G_\psi)}},
\label{eq:overlap_permanent}
\end{equation}
where $\mathcal{M}_{ij} = \langle \mathbf{m}_i | \mathbf{n}_j \rangle$ is the matrix of cross-stellar overlaps. 

While the computation of the permanent of a general matrix is famously \#P-hard \cite{valiant1979complexity}, and typically presents an insurmountable combinatorial explosion for generic multi-qubit systems, the Majorana representation introduces a profound structural simplification. The Gram matrices ($G_\phi, G_\psi$) and the overlap matrix ($\mathcal{M}$) are entirely constructed from the inner products of single-qubit states---specifically, two-dimensional complex vectors. As a direct mathematical consequence, the rank of these $N \times N$ matrices is strictly bounded by 2. 

This low-rank property fundamentally alters the computational complexity class of the problem. As demonstrated by Barvinok \cite{barvinok1996}, the permanent of a matrix with a fixed rank $r$ can be computed efficiently in polynomial time. By leveraging Barvinok's algorithm for rank-2 matrices, the permanents in Eq.~\eqref{eq:overlap_permanent} can be evaluated exactly in $\mathcal{O}(N^3)$ time. 

Within the symmetric subspace this amounts to a genuine ``dequantization'': the Majorana representation provides an $\mathcal{O}(N^3)$ route to the state's geometry via polynomial root-finding, and the many-body overlaps of Eq.~\eqref{eq:overlap_permanent}, together with those invariants expressible in closed form through them---the three-tangle of Eq.~\eqref{eq:geometric_three_tangle_symmetric} among them---become exactly computable in polynomial time.

Two qualifications keep this from being oversold. First, the tractability is a statement about permanents of rank-$2$ matrices and carries no implication for Boson Sampling, whose hardness concerns permanents of generic full-rank matrices; the symmetric subspace evades the barrier rather than lowering it. Second, not every multipartite invariant reduces to these permanents: the geometric measure of Eq.~\eqref{eq:geometric_entanglement_stars}, for instance, still requires a maximization over the sphere and is not delivered in closed form by the constellation alone.

\section{Orthogonality in the Symmetric Subspace and the Antipodal Basis Algorithm}
\label{app:orthogonality}

This appendix rigorously addresses the orthogonality of states within the totally symmetric subspace. For a quantum system composed of $N$ qubits in the totally symmetric subspace $\mathcal{H}_{\text{sym}}$, the total Hilbert space dimension is restricted to $N+1$. Given any pure state $|\psi\rangle \in \mathcal{H}_{\text{sym}}$, its orthogonal complement $\mathcal{H}_\psi^\perp$ has a strict dimension of $N$. The inner product between $|\psi\rangle$ and another symmetric state $|\phi\rangle$ relies on the permanent of their stellar overlap matrix, $\mathcal{M}_{ij} = \langle w_i | z_j \rangle$, yielding $\langle \phi | \psi \rangle \propto \operatorname{perm}(\mathcal{M}) = \sum_{\pi \in S_N} \prod_{i=1}^N \langle w_i | z_{\pi(i)} \rangle$. 

Unlike the Slater determinant for completely antisymmetric fermionic systems, which provides a phase cancellation mechanism via the sign function $\text{sgn}(\pi)$ that allows an $\mathcal{O}(N^3)$ computational complexity through Gaussian elimination, the permanent of a matrix lacks this destructive interference. For generic quantum states, evaluating this permanent constitutes a \#P-hard problem \cite{valiant1979complexity, wei2010matrix}. As established in Appendix~\ref{app:complexity}, the restricted rank-2 structure of the matrices in the symmetric subspace circumvents this combinatorial explosion, allowing for polynomial-time evaluation via Barvinok's algorithm \cite{barvinok1996}. However, directly computing these permanents via algorithmic matrix decompositions remains algebraically dense and offers little immediate geometric intuition regarding the orthogonal subspace.

To bypass this algebraic density and achieve a direct, deterministic geometric construction, one can exploit the time-reversal symmetry inherent in the $SU(2)$ algebra rather than relying solely on permutation symmetry. The time-reversal operator $\Theta = \exp(-i\pi S_y)\mathcal{K}$, where $\mathcal{K}$ denotes complex conjugation, maps any Majorana star $z$ to its antipodal point on the Bloch sphere, $\tilde{z} = -1/z^*$. This symmetry guarantees strict orthogonality between a spin-$1/2$ state and its antipodal counterpart, i.e., $\langle \tilde{z} | z \rangle = 0$. 

By constructing a spin coherent state $|\alpha_k\rangle = |\tilde{z}_k\rangle^{\otimes N}$ such that all $N$ stars degenerate exactly at the antipodal point of a constituent star $z_k$ of $|\psi\rangle$, the permanent of the overlap matrix experiences a geometric collapse. The sum of $N!$ permutations simplifies into a single product containing the vanishing term $\langle \tilde{z}_k | z_k \rangle$, rendering the inner product strictly zero. This geometric property mirrors the behavior of the Bargmann analytic function $B_\psi(z) = \langle z | \psi \rangle$, whose roots are precisely the time-reversed (antipodal) images of the Majorana stars \cite{bargmann1961hilbert}---the same antipodal map recorded in Eq.~\eqref{eq:Q_pure}. 

This geometric collapse facilitates a deterministic mathematical framework, known as the antipodal basis algorithm, to construct the entire orthogonal complement without evaluating the permanent. Provided that $|\psi\rangle$ possesses $N$ distinct Majorana stars (non-degenerate roots), the $N$ coherent states $\{|\alpha_k\rangle\}_{k=1}^N$ formed by their respective antipodes are linearly independent and thus constitute a complete basis for the $N$-dimensional orthogonal subspace $\mathcal{H}_\psi^\perp$. 

Any generic orthogonal state $|\phi\rangle$ can subsequently be parameterized as a linear combination of these basis states, $|\phi\rangle = \sum_{k=1}^N c_k |\alpha_k\rangle$, for $c_k \in \mathbb{C}$. Identifying the corresponding $N$ stellar coordinates of $|\phi\rangle$ then merely requires finding the roots of its associated characteristic Majorana polynomial. Numerically, this is equivalent to solving the eigenvalue problem of a companion matrix, which operates with an $\mathcal{O}(N^3)$ complexity \cite{wilkinson1959evaluation, bini2000design}. In scenarios where $|\psi\rangle$ exhibits degenerate stars, the deficit in linearly independent basis states can be systematically resolved by taking spatial derivatives of the coherent states with respect to the Bargmann variable. 

Ultimately, while the low-rank permanent evaluation guarantees computational tractability, it is the time-reversal symmetry that provides the crucial geometric shortcut, elegantly translating the algebraic problem of orthogonal state construction into an intuitive mapping of antipodal roots on the Bloch sphere.

\bibliographystyle{unsrt}
\bibliography{references_new}

\end{document}